\documentclass[12pt,a4paper,reqno]{amsart}

\RequirePackage{iftex}
\ifPDFTeX
\usepackage[T1]{fontenc}
\usepackage[utf8]{inputenc}
\usepackage{textcomp}
\usepackage{fix-cm}
\fi

\usepackage{amssymb,amsthm,mathtools,bm}
\IfFileExists{newtxtext.sty}{%
	\IfFileExists{newtxmath.sty}{\usepackage{newtxtext,newtxmath}}{\usepackage{mathptmx}}%
}{\usepackage{mathptmx}}
\usepackage{cite}
\usepackage{hyperref}
\usepackage{booktabs}
\usepackage{array}
\usepackage{chngcntr}
\usepackage{enumitem}
\usepackage{xcolor}
\allowdisplaybreaks[4]
\hypersetup{hidelinks}

\evensidemargin=\oddsidemargin
\advance\voffset-20mm
\newtheorem{theorem}{Theorem}[section]
\newtheorem{lemma}[theorem]{Lemma}
\newtheorem{proposition}[theorem]{Proposition}
\newtheorem{corollary}[theorem]{Corollary}
\theoremstyle{definition}
\newtheorem{definition}[theorem]{Definition}
\newtheorem{remark}[theorem]{Remark}
\counterwithout{equation}{section}

\DeclareMathOperator{\ev}{ev}
\DeclareMathOperator{\rank}{rank}
\DeclareMathOperator{\Hull}{Hull}
\DeclareMathOperator{\Span}{Span}

\newcommand{\Fq}{\mathbb F_q}
\newcommand{\PRM}{\mathrm{PRM}}
\newcommand{\PP}{\mathbb P}
\newcommand{\Qm}{Q}
\newcommand{\ActSet}{\mathcal A}
\newcommand{\TopSet}{\mathcal T}
\newcommand{\RemSet}{\mathcal R}
\newcommand{\Lift}{\Lambda}
\newcommand{\Red}{\operatorname{red}}

\title[Recursive Structure of Hulls of PRM Codes]{Recursive Structure of Hulls of Projective Reed-Muller Codes}

\author[Y.\ Song]{Yufeng Song}
\thanks{Yufeng Song is with the Department of Mathematics, Nanjing University of Aeronautics and Astronautics, Nanjing, China (e-mail: yufengsong.math@qq.com).}

\author[Q.\ Yue]{Qin Yue}
\thanks{Qin Yue is with the Department of Mathematics, Nanjing University of Aeronautics and Astronautics, Nanjing, China, and with the State Key Laboratory of Cryptology, Beijing, China (e-mail: yueqin@nuaa.edu.cn).}

\date{}

\begin{document}
	
	\begin{abstract}
		For a nonnegative integer $r$ and a positive integer $v$ satisfying
		\[
		\frac{r(q-1)}{2}<v<\frac{(r+1)(q-1)}{2},
		\]
		we define the combinatorial numbers
		\[
		A_r(v)=
		\begin{cases}
			\displaystyle
			\sum_{t=r(q-1)-v}^{v}\ \sum_{j=0}^{r}(-1)^j\binom{r}{j}\binom{t-jq+r-1}{r-1}, & r>0,\\[1.2ex]
			1, & r=0.
		\end{cases}
		\]
		For the projective Reed-Muller code $\PRM(q,m,v)$, we determine its hull dimension:
		\[
		\dim \Hull\bigl(\PRM(q,m,v)\bigr)
		=
		\dim \PRM(q,m,v)
		-
		\sum_{i=0}^{\ell}A_{2i+\epsilon}\bigl(v-(\ell-i)(q-1)\bigr),
		\]
		where
		\[
		\ell=\Bigl\lfloor\frac r2\Bigr\rfloor,\qquad
		\epsilon=
		\begin{cases}
			0, & r\ \text{is even},\\
			1, & r\ \text{is odd}.
		\end{cases}
		\]
		This formula applies in the open lower-half range 
		$
		0<v<\frac{m\Qm}{2},
		$
		equivalently for $v\in I_r$ with $m\ge r+1$; the range
		$
		\frac{m\Qm}{2}<v<m\Qm
		$
		is then obtained by S\o rensen's duality theorem \cite{Sorensen}.
	\end{abstract}
	
	\subjclass[2020]{11T71, 94B05}
	\keywords{Projective Reed-Muller codes, hulls of codes, evaluation codes, recursive rank formulas.}
	
	\maketitle

\section{Introduction}

Projective Reed-Muller codes are projective analogues of generalized Reed-Muller codes.

Kasami, Lin, and Peterson \cite{KasamiLinPetersonPrimitive} and Weldon \cite{Weldon} independently introduced generalized Reed-Muller codes. Kasami, Lin, and Peterson \cite{KasamiLinPeterson} also introduced polynomial codes, while Delsarte, Goethals, and MacWilliams \cite{DelsarteGoethalsMacWilliams} studied the relation between the multivariable and one-variable descriptions.

Manin and Vladut \cite{ManinVladut} first mentioned these codes in the context of algebraic geometry codes. Lachaud \cite{Lachaud88} introduced the name projective Reed-Muller codes and studied the cases $v=1$ and $v=2$. Lachaud \cite{Lachaud90} determined the parameters when $1\le v<q$. S\o rensen \cite{Sorensen} determined the parameters for $1\le v\le m(q-1)$, described the dual codes, and studied the cyclic case. Berger \cite{Berger02}, Kaplan \cite{KaplanWeights17}, Beelen, Datta, and Ghorpade \cite{BeelenDattaGhorpade18,BeelenDattaGhorpade22}, and Ghorpade and Ludhani \cite{GhorpadeLudhani24} studied further properties of these codes. More recently, Gimenez, Ruano, and San-Jos\'e \cite{GimenezRuanoSanJose24} studied subfield subcodes of projective Reed-Muller codes. San-Jos\'e \cite{SanJoseRecursive24} gave a recursive description of projective Reed-Muller codes, and San-Jos\'e \cite{SanJoseDecoding26} studied recursive decoding for these codes. Nardi and San-Jos\'e \cite{NardiSanJose26} initiated the weighted projective Reed-Muller side.

In this paper, we consider the hulls of projective Reed-Muller codes. The hull of a linear code $C$ is
\[
\Hull(C)=C\cap C^\perp.
\]
Hulls have been studied for many families of codes, because of their applications to quantum error-correcting codes \cite{GuendaJitmanGulliver,GaoYueHuangZhang21,ChenLingLiu23}. For projective Reed-Muller codes over the projective plane, Ruano and San-Jos\'e determined the hulls in \cite{RuanoSanJose24}; see also \cite{RuanoSanJose25}. In arbitrary projective dimension, Kaplan and Kim and, later, Song and Luo determined several low-interval and boundary cases for PRM hulls \cite{KaplanKim,SongLuo}.

Throughout the paper we denote by $\Qm$ the integer $q-1$ and set
\[
I_r=\Bigl(\frac{r\Qm}{2},\frac{(r+1)\Qm}{2}\Bigr)\qquad (r\ge 0).
\]
Since S\o rensen described the dual codes, the intervals above the midpoint can be reduced to lower-half intervals when $v\not\equiv0\pmod{\Qm}$; the remaining upper-half case $v\equiv0\pmod{\Qm}$ was determined by Song and Luo \cite{SongLuo}. Therefore, for a fixed interval $I_r$, the condition that this interval lies in the open lower half is
\[
I_r\subset \Bigl(0,\frac{m\Qm}{2}\Bigr)
\qquad\Longleftrightarrow\qquad
m\ge r+1.
\]
In particular, $m=r+1$ is the least ambient dimension for which $I_r$ lies in the open lower half.

For $v\in I_r$ and $m\ge r+1$, put
\[
\Delta_r(v)=\dim \PRM(q,m,v)-\dim \Hull\bigl(\PRM(q,m,v)\bigr).
\]
Prior work provides the low-interval and boundary inputs needed to anchor the general picture \cite{KaplanKim,SongLuo}. We prove a recursion from $I_{r-2}$ to $I_r$ for $r\ge 2$. The intervals above the midpoint are then obtained by S\o rensen's duality together with the Song and Luo case \cite{SongLuo} for
\[
\frac{m(q-1)}{2}<v<m(q-1),\qquad v\equiv0\pmod{q-1}.
\]

We prove that if $v\in I_r$ with $r\ge 2$, then
\[
\Delta_r(v)=A_r(v)+\Delta_{r-2}(v-(q-1)),
\]
	where $A_r(v)$ is an explicit combinatorial number. When $m\ge r+1$, equivalently when $I_r\subset (0,m\Qm/2)$, this recursion determines the hull defect in the open lower-half interval $I_r$. The interval $(m\Qm/2,m\Qm)$ is then obtained by S\o rensen's duality together with the Song and Luo case \cite{SongLuo}. The proof is based on a structured Gram-matrix analysis and also yields full-rank principal blocks.

The paper is organized as follows. In Section~II we recall the definition of projective Reed-Muller codes, the hull-rank identity, and the Gram-matrix model, and we collect the elementary facts used below. Sections~III--V develop the recursive step and its supporting constructions. Section~VI gives the explicit formulas and the resulting hull dimensions. Section~VII is the conclusion.

\section{Preliminaries}

\subsection{Projective Reed-Muller codes}

Projective Reed-Muller codes are evaluation codes over projective space. Let the projective space over $\Fq$ be
\[
\PP^m(\Fq)=\bigl(\Fq^{m+1}\setminus\{0\}\bigr)/\sim,
\]
where
\[
(a_0,\dots,a_m)\sim (b_0,\dots,b_m)
\]
if and only if $(b_0,\dots,b_m)=\lambda(a_0,\dots,a_m)$ for some $\lambda\in\Fq^*$. The number of projective points is
\[
n=\frac{q^{m+1}-1}{q-1}.
\]
For every projective point we choose its standard representative with first nonzero coordinate equal to $1$, and list the resulting representatives as $P_1,\dots,P_n$.

Let $\Fq[x_0,\dots,x_m]_v$ be the $\Fq$-vector space of homogeneous polynomials of degree $v$. If $f\in \Fq[x_0,\dots,x_m]_v$, define
\[
\ev(f)=\bigl(f(P_1),\dots,f(P_n)\bigr).
\]
Then the projective Reed-Muller code is
\[
\PRM(q,m,v)=\{\ev(f):f\in \Fq[x_0,\dots,x_m]_v\}.
\]

First consider the full set of degree-$v$ monomials
\begin{equation}\label{eq:G1-definition}
\mathcal G_1:=\{g=x_0^{a_0}\cdots x_m^{a_m}: a_0+\cdots+a_m=v,\ 0\le a_0,\dots,a_m\}.
\end{equation}
This is the monomial $\Fq$-basis of the homogeneous component. Equivalently,
\[
\Fq[x_0,\dots,x_m]_v=\Span_{\Fq}(\mathcal G_1).
\]
Since $\PRM(q,m,v)=\ev(\Fq[x_0,\dots,x_m]_v)$, the evaluation vectors
$\{\ev(g):g\in\mathcal G_1\}$ span $\PRM(q,m,v)$, but they need not be
linearly independent. Set
\[
\ell:=|\mathcal G_1|=\binom{m+v}{v}.
\]
Order degree-$v$ monomials by descending lexicographic order on exponent vectors, with variable order $x_0\succ x_1\succ\cdots\succ x_m$, and list $\mathcal G_1$ as $g_1,\dots,g_\ell$ in this order.
To obtain a basis after evaluation, we use the following reduced monomial set of
Beelen, Datta, and Ghorpade \cite[Section~3.1]{BeelenDattaGhorpade22}:
\begin{equation}\label{eq:G-definition}
\mathcal G=\Bigl\{\, e=x_0^{a_0}\cdots x_m^{a_m}: a_0+\cdots+a_m=v,\ 0\le a_0,\dots,a_m,\ a_t\le q-1\ \forall\,t>\min\{i:a_i>0\}\Bigr\}.
\end{equation}
As monomials, $\mathcal G$ is a basis of the subspace
$\Span_{\Fq}(\mathcal G)\subseteq \Fq[x_0,\dots,x_m]_v$; its role is that
the evaluation vectors $\{\ev(g):g\in\mathcal G\}$ form the code basis used
below.

The following cardinality formula is S\o rensen's dimension formula applied to $\mathcal G$.

\begin{lemma}[{\cite[Theorem~1]{Sorensen}}]\label{lem:G-cardinality}
We have
\[
k_{q,m}(v):=\dim \PRM(q,m,v)
=
\sum_{\substack{0<t\le v\\ t\equiv v\;(\mathrm{mod}\ q-1)}}
\ \sum_{j=0}^{m}
(-1)^j\binom{m+1}{j}\binom{t-jq+m}{m}.
\]
\end{lemma}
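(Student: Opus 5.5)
The plan is to identify the code dimension with $|\mathcal G|$ and then count $\mathcal G$ by a stars-and-bars inclusion–exclusion, organized by the position of the first nonzero exponent.

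\textbf{Step 1: $\{\ev(g):g\in\mathcal G\}$ is a basis of $\PRM(q,m,v)$.} Spanning comes first. Every monomial $g\in\mathcal G_1$ can be reduced modulo the vanishing relations $x_i^q\equiv x_i$ on $\Fq$-points. On each projective point we use the standard representative, whose first nonzero coordinate is $1$. If $i_0=\min\{i:a_i>0\}$, every exponent $a_t$ with $t>i_0$ can be lowered by multiples of $q-1$ into $\{1,\dots,q-1\}$ without changing the value at these representatives. The surplus degree can then be absorbed into $x_{i_0}$. Here one must check that this is consistent on points where $x_{i_0}=0$; in that case both sides vanish, since $a_{i_0}>0$ and the exponent on $x_{i_0}$ stays positive.

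For linear independence, I would invoke Beelen–Datta–Ghorpade \cite[Section~3.1]{BeelenDattaGhorpade22}, where $\mathcal G$ is introduced precisely as an evaluation basis. Alternatively, one can argue directly: stratify $\PP^m$ by the index $i$ of the first nonzero coordinate, giving affine pieces $\{1\}\times\Fq^{m-i}$. Then induct from $i=m$ downward, using that reduced monomials (all exponents $\le q-1$) are linearly independent as functions on $\Fq^{m-i}$.

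\textbf{Step 2: count $\mathcal G$.} Split $\mathcal G$ by the first index $i_0$ and put $a=a_{i_0}\ge1$. The remaining exponents $a_{i_0+1},\dots,a_m$ lie in $[0,q-1]$ and sum to $s=v-a$.

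Unconstrained counting would overcount, and the clean way around this is to reorganize with S\o rensen's trick. Homogenize by treating the code as evaluations on $\Fq^{m+1}\setminus\{0\}$ of reduced affine monomials $x_0^{b_0}\cdots x_m^{b_m}$ with $0\le b_i\le q-1$. Their total degree $t$ must satisfy $0<t\le v$ and $t\equiv v\pmod{q-1}$, because multiplying by $x_i^{q-1}$ is the identity off $x_i=0$. The map $\mathcal G\to\{$such exponent vectors$\}$ reduces $a_{i_0}$ modulo $q-1$ into $[1,q-1]$. I would check that this map is a bijection: the inverse adds $v-t$ to the first nonzero exponent.

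Then, for each $t$, inclusion–exclusion on the $m+1$ upper bounds $b_i\le q-1$ counts the vectors of sum $t$ as
\[
\sum_j(-1)^j\binom{m+1}{j}\binom{t-jq+m}{m}.
\]
Summing over the admissible $t$ gives the formula in Lemma~\ref{lem:G-cardinality}.

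\textbf{Main obstacle.} The delicate point is the independence in Step 1, together with the bijection claim in Step 2 when $t<v$ but $v-t$ is not absorbed consistently. The bijection requires every nonzero vector $b$ to have a first nonzero index. It also forces excluding $t=0$, since the zero vector would correspond to a monomial with no nonzero exponent. I would handle both points carefully via the stratification argument.
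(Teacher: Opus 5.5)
Your argument is correct, but it does not follow the paper, which gives no proof of its own. The paper quotes the formula from S\o rensen \cite[Theorem~1]{Sorensen} and the basis property of $\mathcal G$ from Beelen--Datta--Ghorpade (Proposition~\ref{prop:G-basis}), then combines the two to conclude $\#\mathcal G=k_{q,m}(v)$.

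You instead prove the basis property directly. Spanning comes from reducing the exponents after the first nonzero one into $[1,q-1]$ and moving the surplus, a multiple of $q-1$, onto $x_{i_0}$. Independence comes from downward induction over the strata $x_0=\cdots=x_{i-1}=0$, $x_i=1$. On such a stratum, monomials with first index $i$ restrict to distinct reduced monomials, those with smaller first index vanish, and those with larger first index have already been eliminated. You then count $\mathcal G$ through the explicit bijection with $b\in[0,q-1]^{m+1}$, $0<|b|\le v$, $|b|\equiv v\pmod{q-1}$.

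This makes both Proposition~\ref{prop:G-basis} and the lemma self-contained. It also spells out the correspondence that the paper's phrase ``S\o rensen's dimension formula applied to $\mathcal G$'' leaves implicit. The citation route is simply shorter. If only the dimension is wanted, your affine-cone remark in Step~2 already suffices: distinct reduced monomials are independent functions on $\Fq^{m+1}$, and a degree-$v$ homogeneous function vanishing at $0$ is determined by its values at the standard representatives. So the stratified induction is needed only for the stronger statement about $\mathcal G$.

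One detail to add concerns the range of $j$. Inclusion--exclusion over the $m+1$ bounds $b_i\le q-1$ runs over $j=0,\dots,m+1$. To reach the stated sum over $j\le m$, the $j=m+1$ term must vanish (with the convention $\binom{n}{k}=0$ for $n<k$). This holds because $t\le v\le m(q-1)<(m+1)q$, which uses the range hypothesis of Theorem~\ref{thm:sorensen-dim}. Some such restriction, left implicit in the lemma, is genuinely needed: for $q=2$, $m=1$, $v=4$ the truncated sum equals $2$, while $\dim\PRM(2,1,4)=3$.
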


\begin{proposition}[{\cite[Sections~3.1 and~7.1]{BeelenDattaGhorpade22}}]\label{prop:G-basis}
The set
\[
\{\ev(g):g\in \mathcal G\}
\]
forms a basis of $\PRM(q,m,v)$.
\end{proposition}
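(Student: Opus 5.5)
The plan has two parts: show that the vectors $\ev(g)$, $g\in\mathcal G$, span $\PRM(q,m,v)$, and show that $|\mathcal G|$ equals $k_{q,m}(v)$ from Lemma~\ref{lem:G-cardinality}. A spanning set whose size equals the dimension is automatically a basis, so linear independence never has to be checked directly.

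\emph{Spanning.} Since $\PRM(q,m,v)=\ev(\Span_{\Fq}\mathcal G_1)$, it suffices to rewrite each monomial $g=x_0^{a_0}\cdots x_m^{a_m}$ of degree $v$ modulo the kernel of $\ev$ as a combination of elements of $\mathcal G$. Let $i=\min\{s:a_s>0\}$. At every standard representative we have $x_0=\dots=x_{i-1}=0$ wherever $g$ is nonzero, and the identity $x_t^q=x_t$ holds on $\Fq$. So whenever some $a_t\ge q$ with $t>i$, we replace $x_t^{a_t}$ by $x_t^{a_t-(q-1)}$ and multiply by $x_i^{q-1}$. This keeps the degree, keeps the minimal index $i$, and does not change the function on points.

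One check is needed at points $P$ with $x_i(P)=0$. There $g(P)=0$, because $a_i>0$. The new monomial still contains $x_i^{a_i+q-1}$, so it also vanishes at $P$. Hence the two functions agree at every point. The procedure terminates because the total exponent on the variables $x_t$ with $t>i$ strictly decreases. We end with a monomial having $a_t\le q-1$ for all $t>i$, that is, an element of $\mathcal G$. Therefore $\ev(\mathcal G)$ spans the code.

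\emph{Counting.} Group the elements of $\mathcal G$ by their first index $i$. For fixed $i$, the monomials have the form $x_i^{a_i}\prod_{t>i}x_t^{a_t}$ with $a_i\ge1$, $0\le a_t\le q-1$, and $\sum_{t>i}a_t=v-a_i\le v-1$. Their number is therefore the number of vectors in $\{0,\dots,q-1\}^{m-i}$ with coordinate sum at most $v-1$.

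We need $|\mathcal G|\le k_{q,m}(v)$ for the argument to close. Equality of this count with Sørensen's formula should follow by showing that both sides satisfy the same recursion in $m$. Splitting off $x_0$ gives
\[
|\mathcal G_m(v)|=\#\{\text{exponent vectors of }x_1,\dots,x_m\text{ with entries}\le q-1\text{ and sum}\le v-1\}+|\mathcal G_{m-1}(v)|.
\]
On the other side, Sørensen's count admits the matching decomposition $\PP^m=\mathbb A^m\sqcup\PP^{m-1}$. Alternatively, we may simply invoke \cite{BeelenDattaGhorpade22}, where this bijection is established.

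The main obstacle is the counting identity. Without it, spanning only gives $|\mathcal G|\ge\dim$, and independence would instead have to be proved through the affine decomposition and a footprint (Gröbner) argument.
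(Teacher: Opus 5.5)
\textbf{The spanning half is correct.} For $t>i$ with $a_t\ge q$, the monomials $x_t^{a_t}\cdots$ and $x_i^{q-1}x_t^{a_t-(q-1)}\cdots$ agree at every point. Both vanish when $x_i=0$. Otherwise $x_i^{q-1}=1$ and $x_t^{a_t}=x_t^{a_t-(q-1)}$, because $a_t-(q-1)\ge 1$. The first index is preserved and the process terminates. Your aside that $x_0=\dots=x_{i-1}=0$ wherever $g\neq 0$ is false (e.g.\ $g=x_1$ at $(1,1)$), but you never use it.

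\textbf{The gap is the other half, which is the real content of the proposition.} Nothing in the proposal proves $|\mathcal G|\le\dim\PRM(q,m,v)$. The phrases ``should follow'' and ``S\o rensen's count admits the matching decomposition $\PP^m=\mathbb A^m\sqcup\PP^{m-1}$'' are assertions, not arguments. Falling back on \cite{BeelenDattaGhorpade22} just cites the statement itself, which is all the paper does here.

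Splitting the point set does not by itself split the dimension. You would have to show that the kernel of the restriction $\PRM(q,m,v)\to\PRM(q,m-1,v)$ to the hyperplane $x_0=0$ is, on the chart $x_0=1$, exactly the space of polynomial functions of degree $\le v-1$ on $\Fq^m$. That needs the following observation: if $f=x_0h+f_0(x_1,\dots,x_m)$ vanishes at infinity, then the form $f_0$ of degree $v\ge 1$ vanishes on all of $\Fq^m$ by homogeneity, so $f(1,x)=h(1,x)$ as functions. Matching S\o rensen's alternating-sum formula instead would add a binomial identity. It would also take you outside the range $1\le v\le m(q-1)$ in which that formula is quoted, since the recursion keeps $v$ fixed while lowering $m$.

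\textbf{How to close the gap.} You do not need S\o rensen's formula or any Gr\"obner or footprint argument. Prove independence directly, using the same split you used for counting. Suppose $\sum_{g\in\mathcal G}c_g\ev(g)=0$.
\begin{enumerate}[label=\textup{(\arabic*)},leftmargin=1.7em]
\item The points with $x_0=0$ are exactly the standard representatives of $\PP^{m-1}$ in $x_1,\dots,x_m$. There every $g$ with $a_0>0$ vanishes, and the $g$ with $a_0=0$ are precisely the reduced set $\mathcal G$ for $\PP^{m-1}$ in degree $v$. By induction on $m$ their coefficients vanish; the base case $m=0$ has $\mathcal G=\{x_0^v\}$, whose evaluation is $1$.
\item At the points $(1,\alpha)$ with $\alpha\in\Fq^m$, each remaining $g$ evaluates to $\alpha_1^{a_1}\cdots\alpha_m^{a_m}$ with all $a_t\le q-1$. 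Distinct $g$ give distinct exponent vectors, because $a_0=v-\sum_{t\ge1}a_t$. Reduced monomials are linearly independent as functions on $\Fq^m$, so all $c_g=0$.
\end{enumerate}
Together with your spanning argument this proves the proposition. Your count of $\mathcal G$ by first index then gives $\dim\PRM(q,m,v)$ as a consequence, rather than requiring it as input.
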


For use below, list the elements of $\mathcal G$ in the induced lexicographic order:
\[
\mathcal G=\{e_1,\dots,e_k\}.
\]
Then
\[
k=\#\mathcal G=\dim \PRM(q,m,v)=k_{q,m}(v).
\]
The corresponding generator matrix is
\begin{equation}\label{eq:generator-matrix}
G=
\begin{pmatrix}
\ev(e_1)\\
\ev(e_2)\\
\vdots\\
\ev(e_k)
\end{pmatrix}
=
\begin{pmatrix}
e_1(P_1) & e_1(P_2) & \cdots & e_1(P_n)\\
e_2(P_1) & e_2(P_2) & \cdots & e_2(P_n)\\
\vdots & \vdots & \ddots & \vdots\\
e_k(P_1) & e_k(P_2) & \cdots & e_k(P_n)
\end{pmatrix}.
\end{equation}
Let
\[
P_m'=\{P_1,\dots,P_n\}
\]
be the set of all these standard representatives, listed in a fixed order. Different orderings of these standard representatives only permute the coordinates of the evaluation vectors, so they give permutation-equivalent codes and the same Gram matrix.

We also use the full monomial evaluation matrix
\begin{equation}\label{eq:full-monomial-matrix}
G_1=
\begin{pmatrix}
\ev(g_1)\\
\ev(g_2)\\
\vdots\\
\ev(g_\ell)
\end{pmatrix}
=
\begin{pmatrix}
g_1(P_1) & g_1(P_2) & \cdots & g_1(P_n)\\
g_2(P_1) & g_2(P_2) & \cdots & g_2(P_n)\\
\vdots & \vdots & \ddots & \vdots\\
g_\ell(P_1) & g_\ell(P_2) & \cdots & g_\ell(P_n)
\end{pmatrix}.
\end{equation}
The matrices $G$ and $G_1$ span the same row space.

We use the following two results of S\o rensen.

\begin{theorem}[{\cite[Theorem~1]{Sorensen}}]\label{thm:sorensen-dim}
For $1\le v\le m(q-1)$, the projective Reed-Muller code $\PRM(q,m,v)$ has dimension
\[
k_{q,m}(v)
=
\sum_{\substack{0<t\le v\\ t\equiv v\;(\mathrm{mod}\ q-1)}}
\ \sum_{j=0}^{m}
(-1)^j\binom{m+1}{j}\binom{t-jq+m}{m}.
\]
\end{theorem}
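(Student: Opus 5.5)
The plan is to count the reduced monomial basis $\mathcal G$ of \eqref{eq:G-definition} directly, since by Proposition~\ref{prop:G-basis} we have $\dim\PRM(q,m,v)=\#\mathcal G$. The first step is to partition $\mathcal G$ by the index $i=\min\{t:a_t>0\}$ of the first nonzero exponent. For fixed $i$, a monomial in the block is $x_i^{a_i}x_{i+1}^{a_{i+1}}\cdots x_m^{a_m}$ with $a_i\ge1$, $0\le a_t\le q-1$ for $t>i$, and total degree $v$. Setting $s=a_{i+1}+\cdots+a_m$, the count of such monomials is
\[
\sum_{s=0}^{v-1} N_{m-i}(s),
\]
where $N_d(s)$ denotes the number of $(b_1,\dots,b_d)\in\{0,\dots,q-1\}^d$ with $\sum b_t=s$. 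By inclusion--exclusion,
\[
N_d(s)=\sum_{j}(-1)^j\binom{d}{j}\binom{s-jq+d-1}{d-1}.
\]

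The second step is to reorganize this count so that it becomes a sum over the residue class $t\equiv v \pmod{q-1}$, as in the formula of Lemma~\ref{lem:G-cardinality}. Here I would not match the blocks directly; instead I would use the cleaner identity
\[
\#\mathcal G = \#\{\text{monomials } x_0^{a_0}\cdots x_m^{a_m}\ \text{of degree } t\le v,\ t\equiv v\ (\mathrm{mod}\ q-1),\ t>0,\ 0\le a_j\le q-1\}.
\]
This is a bijection argument. Each $e\in\mathcal G$ is reduced by lowering the leading exponent $a_i$ by multiples of $q-1$ until it lies in $\{1,\dots,q-1\}$, which gives a nonzero reduced monomial of degree $t\equiv v$. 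Conversely, from a nonzero reduced monomial of degree $t$ one recovers $e$ by adding $v-t$ to its first nonzero exponent. These maps are mutually inverse: the reduced leading exponent is never $0$, so the leading index is preserved, and the condition $a_t\le q-1$ for non-leading indices is exactly reducedness.

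The third step is to count reduced monomials of degree exactly $t$ in $m+1$ variables, which gives $N_{m+1}(t)=\sum_{j=0}^{m+1}(-1)^j\binom{m+1}{j}\binom{t-jq+m}{m}$. The $j=m+1$ term vanishes for $t\le v\le m(q-1)$, because then $t-(m+1)q+m<0$. Summing over $0<t\le v$ with $t\equiv v$ yields the stated formula.

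The main obstacle I expect is the bijection in step two. Two points need care there: the leading-exponent reduction must land in $\{1,\dots,q-1\}$ rather than $\{0,\dots,q-2\}$, and $t=0$ must be excluded correctly. If this route is not taken, the alternative is to cite Sørensen's original argument via the vanishing ideal of $\PP^m(\Fq)$, which is the route taken by the cited Theorem~1 of \cite{Sorensen}.
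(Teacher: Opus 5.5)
The paper gives no proof of this statement. It quotes the formula from Sørensen, and the surrounding text actually runs the logic the other way: it combines the formula with Proposition~\ref{prop:G-basis} to conclude $\#\mathcal G=k_{q,m}(v)$. Your proof is correct and reverses that direction, so it is a genuinely different route.

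Why it works: lowering the leading exponent is a bijection from $\mathcal G$ onto the monomials in all $m+1$ variables with exponents in $\{0,\dots,q-1\}$ and degree $t\equiv v\pmod{q-1}$, $0<t\le v$. The reduced leading exponent lies in $\{1,\dots,q-1\}$, so the leading index, and hence the inverse map, is well defined. Inclusion--exclusion then counts these monomials, and the hypothesis $v\le m(q-1)$ is used only to discard the $j=m+1$ term. State the convention $\binom{n}{k}=0$ for $n<k$ explicitly, since the inclusion--exclusion formula needs it as well. Your first step, the partition by leading index, can be dropped.

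What your route buys is that it explains the shape of the formula in terms of the basis the paper actually uses for $G$. What it costs is that Proposition~\ref{prop:G-basis} becomes load-bearing. The argument is therefore non-circular only if that basis property is established without Sørensen's count. It can be:
\begin{itemize}
\item Spanning: moving $q-1$ from a non-leading exponent $a_t\ge q$ to the leading exponent $a_i$ does not change the evaluation, because $x_ix_t^q$ and $x_i^qx_t$ agree on $\Fq^{m+1}$.
\item Independence: restrict a vanishing combination to the stratum of points whose first nonzero coordinate is in position $i$, for $i=m,m-1,\dots,0$ in turn. On that stratum, monomials of leading index $<i$ vanish, those of leading index $>i$ already have zero coefficient, and those of leading index $i$ become distinct reduced monomials in $\alpha_{i+1},\dots,\alpha_m$.
\end{itemize}

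Alternatively, you can bypass $\mathcal G$ altogether. Reducing degree-$v$ monomials modulo $x_j^q=x_j$ on $\Fq^{m+1}$ produces exactly the same set of reduced monomials, now as a basis of the functions on $\Fq^{m+1}$ induced by degree-$v$ forms. Any such form vanishes at $0$, and $f(\lambda P)=\lambda^v f(P)$ shows that evaluating only at the standard representatives loses nothing. This gives the same count with no appeal to the projective basis.
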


Duals are taken with respect to the standard inner product
\[
\langle x,y\rangle=\sum_{i=1}^{n}x_iy_i.
\]

\begin{theorem}[{\cite[Theorem~2]{Sorensen}}]\label{thm:sorensen-dual}
For $1\le v\le m(q-1)$, put
\[
\mu=m(q-1)-v.
\]
Then
\[
\PRM(q,m,v)^\perp=
\begin{cases}
\PRM(q,m,\mu), & v\not\equiv0\pmod{q-1},\\[1ex]
\Span_{\Fq}\{\mathbf 1,\PRM(q,m,\mu)\}, & v\equiv0\pmod{q-1},
\end{cases}
\]
where $\mathbf 1=(1,\dots,1)\in \Fq^n$.
\end{theorem}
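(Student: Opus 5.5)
The plan is to prove both inclusions by a direct character-sum computation and then close the argument by a dimension count based on Theorem~\ref{thm:sorensen-dim}.

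\emph{Step 1 (orthogonality).} Take monomials $x^a$ of degree $v$ and $x^b$ of degree $\mu$, with the convention $0^0=1$. Their product is $x^c$ with $|c|=m(q-1)$. Since $|c|$ is a multiple of $q-1$, the function $x^c$ is constant on each line $\{\lambda P:\lambda\in\Fq^*\}$. Hence
\[
(q-1)\sum_{i=1}^n P_i^{\,c}=\sum_{y\in\Fq^{m+1}\setminus\{0\}} y^c=\sum_{y\in\Fq^{m+1}} y^c,
\]
where the last equality holds because $|c|>0$, so $x^c$ vanishes at $0$. The right-hand side factors as $\prod_{i=0}^m\sum_{x\in\Fq}x^{c_i}$. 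Each factor is nonzero only when $c_i$ is a positive multiple of $q-1$. But $\sum c_i=m(q-1)<(m+1)(q-1)$, so some $c_i=0$, and that factor equals $q=0$. Therefore $\langle \ev(x^a),\ev(x^b)\rangle=0$, because $q-1\equiv-1$ is invertible. By bilinearity, $\PRM(q,m,\mu)\subseteq\PRM(q,m,v)^\perp$.

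When $v\equiv0\pmod{q-1}$, I will run the same computation with $x^a$ alone, using $\lambda^v=1$ and $v\le m(q-1)$. This gives $\mathbf 1\perp\PRM(q,m,v)$. So in both cases the proposed right-hand side is contained in $\PRM(q,m,v)^\perp$.

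\emph{Step 2 (dimension count).} It remains to show that the right-hand side has dimension $n-k_{q,m}(v)$. That is, I need
\[
k_{q,m}(v)+k_{q,m}(\mu)=n \qquad\text{if } (q-1)\nmid v,
\]
and
\[
k_{q,m}(v)+\dim\Span\{\mathbf 1,\PRM(q,m,\mu)\}=n \qquad\text{if } (q-1)\mid v.
\]
I will reinterpret $k_{q,m}(v)$ combinatorially. The inner sum $\sum_j(-1)^j\binom{m+1}{j}\binom{t-jq+m}{m}$ is, by inclusion--exclusion, the number of exponent vectors $c\in\{0,\dots,q-1\}^{m+1}$ with $|c|=t$. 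Thus $k_{q,m}(v)$ counts reduced affine exponent vectors $c\neq0$ with $|c|\le v$ and $|c|\equiv v\pmod{q-1}$.

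Next I consider the involution $c\mapsto(q-1,\dots,q-1)-c$ on $\{0,\dots,q-1\}^{m+1}$. It sends total degree $t$ to $(m+1)(q-1)-t\equiv -t$. I then partition the nonzero vectors by the residue of $|c|$ modulo $q-1$ and by the position of $|c|$ relative to $v$ and $\mu$, and compare with $n=(q^{m+1}-1)/(q-1)$. The goal is to get the complementary count, with a discrepancy of exactly one vector in the case $(q-1)\mid v$; that vector accounts for the extra $\mathbf 1$.

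In the case $(q-1)\mid v$, I also have to check whether $\mathbf 1\in\PRM(q,m,\mu)$. This happens only when $\mu=0$, that is $v=m(q-1)$. In that case the count should give $k_{q,m}(m(q-1))=n-1$ directly.

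I expect this bookkeeping in Step 2 to be the main obstacle. The subtlety is that $k_{q,m}(v)$ sums over all $t\equiv v$ up to $v$, not only $t=v$. So the involution must be combined with the observation that lowering $t$ by $q-1$ corresponds to raising the complementary degree, which lets the $\mu$-side count fill exactly the remaining classes. If this direct approach gets tangled, the fallback is to use Proposition~\ref{prop:G-basis}. I would identify $\PRM(q,m,v)$ with the space of functions on $\PP^m(\Fq)$ spanned by $\mathcal G$, and compare bases through the same complementation on exponents, applied block by block according to the leading variable $\min\{i:a_i>0\}$.
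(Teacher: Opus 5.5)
The paper gives no proof of this statement. It quotes it from S\o rensen \cite[Theorem~2]{Sorensen} and uses it as a black box in Corollary~\ref{cor:complete}(iii). Your orthogonality-plus-dimension argument is therefore a different, self-contained route that relies only on Theorem~\ref{thm:sorensen-dim}.

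Step~1 is correct. Lifting to $\Fq^{m+1}\setminus\{0\}$ is cleaner here than the paper's stratified entry formula (Lemma~\ref{lem:entry-formula}). There is one slip. From $\sum_i c_i=m(q-1)<(m+1)(q-1)$ you may only conclude that some $c_i$ is not a positive multiple of $q-1$, not that some $c_i=0$; for example, $c=(1,2)$ with $q=4$, $m=1$. The weaker conclusion is all your previous sentence needs, so the argument survives. Step~1 also settles the claim you left open. If $\mu>0$ and $(q-1)\mid\mu$, your $\mathbf 1$-computation at degree $\mu$ gives $\mathbf 1\perp\PRM(q,m,\mu)$. Since $\langle\mathbf 1,\mathbf 1\rangle=n\equiv 1$, it follows that $\mathbf 1\notin\PRM(q,m,\mu)$.

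Step~2, however, is still only a plan, and the one fact that closes it is missing. Let $N(t)$ be the number of $c\in[0,q-1]^{m+1}$ with $|c|=t$. The involution gives $N(t)=N((m+1)(q-1)-t)$. Substituting $s=(m+1)(q-1)-t$ turns $k_{q,m}(\mu)$ into the sum of $N(s)$ over $s\equiv v$ with $v+(q-1)\le s<(m+1)(q-1)$. Together with $k_{q,m}(v)$, this covers every $t\equiv v$ with $0<t<(m+1)(q-1)$, so
\[
k_{q,m}(v)+k_{q,m}(\mu)=\#\{c:\ |c|\equiv v\ (\mathrm{mod}\ q-1)\}-2\delta .
\]
Here $\delta=1$ if $(q-1)\mid v$ and $\delta=0$ otherwise; the two removed vectors are $0$ and $(q-1,\dots,q-1)$.

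What you need, and do not state, is the residue count: the number of $c$ with $|c|\equiv j \pmod{q-1}$ is $n$ for $j\not\equiv 0$ and $n+1$ for $j\equiv0$. One line proves it. In $\mathbb Z[x]/(x^{q-1}-1)$ put $J=1+x+\cdots+x^{q-2}$. Then $1+x+\cdots+x^{q-1}\equiv 1+J$ and $J^2=(q-1)J$, so $(1+J)^{m+1}=1+\frac{q^{m+1}-1}{q-1}J=1+nJ$. Equivalently, the reduced monomials with $|c|\equiv j$ form a basis of the functions $f$ on $\Fq^{m+1}$ with $f(\lambda y)=\lambda^j f(y)$. Such an $f$ is given by free values at the $n$ standard representatives together with $f(0)$, and $f(0)$ is forced to be $0$ unless $j\equiv 0$.

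This yields $k_{q,m}(v)+k_{q,m}(\mu)=n$ when $(q-1)\nmid v$, and $n-1$ when $(q-1)\mid v$ and $\mu\ge 1$. Combined with $\mathbf 1\notin\PRM(q,m,\mu)$, that is exactly the required count. Your separate treatment of $\mu=0$ is genuinely necessary. The formula of Theorem~\ref{thm:sorensen-dim} evaluates to the empty sum $0$ at degree $0$, not to $\dim\PRM(q,m,0)=1$. The same residue count gives $k_{q,m}(m(q-1))=(n+1)-2=n-1$ there, as you predicted.
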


If $v\ge m(q-1)+1$, then $\PRM(q,m,v)=\Fq^n$ by \cite[Remark~3]{Sorensen}. Therefore the nontrivial projective range is
\[
1\le v\le m(q-1).
\]

\subsection{The hull of a linear code}

For a linear code $C\subseteq \Fq^n$, its dual code is
\[
C^\perp=\{y\in \Fq^n:\langle x,y\rangle=0\text{ for all }x\in C\}.
\]
The hull of $C$ is defined by
\[
\Hull(C)=C\cap C^\perp.
\]
We use four related situations for projective Reed-Muller codes. The code $C$ is self-dual if $C=C^\perp$, self-orthogonal if $C\subseteq C^\perp$, dual-containing if $C^\perp\subseteq C$, and LCD if $\Hull(C)=0$. For PRM codes, Kaplan and Kim characterized when these cases occur \cite[Theorem~3.2, Theorem~3.5, Corollary~3.8, and Remark~3.11]{KaplanKim}. Song and Luo also pointed out an additional range not covered by these categories: if
\[
\frac{m(q-1)}{2}<v<m(q-1)
\qquad\text{and}\qquad
v\equiv0\pmod{q-1},
\]
then $\PRM(q,m,v)$ is not self-dual, self-orthogonal, LCD, nor dual-containing, but its hull is a projective Reed-Muller code, namely
\[
\Hull\bigl(\PRM(q,m,v)\bigr)=\PRM\bigl(q,m,m(q-1)-v\bigr)
\]
\cite[Theorem~4.1]{SongLuo}.

The following proposition gives the rank relation between the hull and the Gram matrix of a generator matrix.

\begin{proposition}[{\cite[Proposition~3.1]{GuendaJitmanGulliver}}]\label{prop:hull-rank}
Let $C\subseteq \Fq^n$ be a $k$-dimensional linear code, and let $G$ be a generator matrix of $C$. Then
\[
\rank(GG^T)=k-\dim \Hull(C).
\]
In particular, $C$ is self-orthogonal if and only if $GG^T=0$, and it is LCD if and only if $GG^T$ is invertible.
\end{proposition}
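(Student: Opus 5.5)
We would prove the identity by exhibiting $\Hull(C)$ as the isomorphic image of the null space of the $k\times k$ matrix $GG^T$ and then applying rank–nullity. Since $G$ is a generator matrix of the $k$-dimensional code $C$, its rows are linearly independent. Hence the map
\[
\phi:\Fq^k\to C,\qquad x\mapsto xG,
\]
is a linear isomorphism. Every codeword of $C$ is therefore uniquely of the form $xG$ with $x\in\Fq^k$ a row vector.

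The key step is to characterise when $xG$ lies in $C^\perp$. A vector $y\in\Fq^n$ is in $C^\perp$ if and only if it is orthogonal to every row of $G$, because the rows span $C$. In matrix form this condition reads $Gy^T=0$. Taking $y=xG$, we get
\[
xG\in C^\perp \iff G(xG)^T=0 \iff GG^Tx^T=0.
\]
Thus $xG\in\Hull(C)=C\cap C^\perp$ exactly when $x^T\in\ker(GG^T)$. Since $\phi$ is injective, it restricts to an isomorphism between $\{x: GG^Tx^T=0\}$ and $\Hull(C)$. It follows that $\dim\Hull(C)=\dim\ker(GG^T)$. Because $GG^T$ is symmetric, its left and right kernels coincide, so no care about row versus column conventions is needed.

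Rank–nullity for the $k\times k$ matrix $GG^T$ then gives $\rank(GG^T)=k-\dim\ker(GG^T)=k-\dim\Hull(C)$.

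The two special cases follow at once. $C$ is self-orthogonal precisely when $\Hull(C)=C$, that is, $\dim\Hull(C)=k$. This holds if and only if $\rank(GG^T)=0$, which means $GG^T=0$. $C$ is LCD precisely when $\Hull(C)=0$. This holds if and only if $\rank(GG^T)=k$, which means $GG^T$ is invertible.

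The only point needing attention is the injectivity of $\phi$, which rests on $G$ having full row rank. This is why the statement requires $G$ to be a generator matrix rather than an arbitrary spanning matrix such as $G_1$ from \eqref{eq:full-monomial-matrix}. For a matrix with dependent rows the kernel of its Gram matrix would overcount the hull.
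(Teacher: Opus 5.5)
Your proof is correct and complete. The paper gives no argument of its own here, only the citation to Guenda--Jitman--Gulliver, so your identification of $\Hull(C)$ with $\ker(GG^T)$ via the isomorphism $x\mapsto xG$, followed by rank--nullity, gives a full self-contained justification.

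One refinement of your closing remark. Full row rank is what your kernel argument needs, and it is also needed for the invertibility criterion. The rank identity itself, however, still holds for the spanning matrix $G_1$, namely $\rank(G_1G_1^T)=\dim C-\dim\Hull(C)$. This is exactly what the paper obtains via Lemma~\ref{lem:same-row-space} and \eqref{eq:hull-rank-G1}.
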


\begin{proof}
This is the cited hull-rank identity.
\end{proof}

We use $G_1$ instead of $G$. The following lemma shows that this does not change the Gram rank.

\begin{lemma}[{\cite[Lemma~3.2]{SongLuo}}]\label{lem:same-row-space}
Let $G$ be a full-row-rank matrix, and let $G_1$ be a matrix with the same row space as $G$. Then
\[
\rank(GG^T)=\rank(G_1G_1^T).
\]
\end{lemma}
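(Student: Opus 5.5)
The plan is to write $G=MG_1$ for a suitable matrix $M$ and reduce everything to linear algebra over the common row space. Since $G$ has full row rank (Proposition~\ref{prop:G-basis}) and $G_1$ spans the same row space, every row of $G_1$ is an $\Fq$-combination of rows of $G$, and conversely. So there are matrices $M\in\Fq^{\ell\times k}$ and $N\in\Fq^{k\times \ell}$ with
\[
G_1=MG,\qquad G=NG_1 .
\]
Composing the two relations gives $G=NMG$. Because the rows of $G$ are linearly independent, this forces $NM=I_k$. In particular $M$ has full column rank $k$ and $N$ has full row rank $k$.

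Next we rewrite the Gram matrix:
\[
G_1G_1^T=M\,(GG^T)\,M^T .
\]
For the upper bound, $\rank(G_1G_1^T)\le\rank(GG^T)$ holds trivially, since rank cannot increase under multiplication. For the reverse bound we sandwich with $N$:
\[
N\,(G_1G_1^T)\,N^T=(NM)\,GG^T\,(NM)^T=GG^T .
\]
This gives $\rank(GG^T)\le\rank(G_1G_1^T)$, and the two ranks coincide.

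An equivalent way to see the same fact: left multiplication by a full-column-rank $M$ and right multiplication by a full-row-rank $M^T$ both preserve rank, because they are injective on column spaces and surjective on row spaces respectively. Either formulation suffices.

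There is no real obstacle in this argument. The only point needing care is $NM=I_k$, which uses the full row rank of $G$ through the cancellation $XG=0\Rightarrow X=0$. Note that the argument never uses the specific structure of PRM codes or of the monomial orderings. It holds for any two matrices with the same row space, provided one of them has full row rank.
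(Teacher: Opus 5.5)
Your proof is correct and is essentially the paper's argument. The paper writes $G_1=PG$, gets $\rank(G_1G_1^T)\le\rank(GG^T)$ from $G_1G_1^T=P(GG^T)P^T$, and recovers $GG^T=L(G_1G_1^T)L^T$ from a left inverse $L$ of $P$; you do the same, except that you build the left inverse explicitly as $N$ in $G=NG_1$ and derive $NM=I_k$ from the full row rank of $G$, which justifies the full-column-rank claim the paper simply asserts.
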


\begin{proof}
Since the row spaces are equal and $G$ has full row rank, there exists a matrix $P$ of full column rank such that $G_1=PG$. Therefore
\[
G_1G_1^T=P(GG^T)P^T.
\]
It follows that $\rank(G_1G_1^T)\le \rank(GG^T)$. Since $P$ has full column rank, it has a left inverse $L$ with $LP=I$. Therefore
\[
GG^T=L(G_1G_1^T)L^T,
\]
and hence $\rank(GG^T)\le \rank(G_1G_1^T)$. The two ranks are equal.
\end{proof}

Applying Proposition~\ref{prop:hull-rank} and Lemma~\ref{lem:same-row-space} to $G$ and $G_1$, we obtain
\begin{equation}\label{eq:hull-rank-G1}
\rank(G_1G_1^T)
=
\dim \PRM(q,m,v)-\dim \Hull\bigl(\PRM(q,m,v)\bigr).
\end{equation}

For $v\in I_r$ and $m\ge r+1$, put
\[
\Delta_r(v):=\rank(G_1G_1^T).
\]
By \eqref{eq:hull-rank-G1},
\[
\dim \Hull\bigl(\PRM(q,m,v)\bigr)=k_{q,m}(v)-\Delta_r(v).
\]

\subsection{The entry formula}

	For a nonnegative integer $d$, set
	\[
	\sigma(d):=\sum_{\alpha\in\Fq}\alpha^d=
	\begin{cases}
		-1,& d>0\text{ and }\Qm\mid d,\\
		0,& \text{otherwise}.
	\end{cases}
	\]

	The following lemma gives the projective Gram-entry formula used below.

	\begin{lemma}\label{lem:entry-formula}
		If $X^a$ and $X^b$ are monomials of degree $v$ and $c_i:=a_i+b_i$, then, with the convention $0^0=1$,
		\begin{equation}\label{eq:entry-formula}
			\bigl\langle \ev(X^a),\ev(X^b)\bigr\rangle
			=
			\sum_{j=0}^{m}\Bigl(\prod_{i<j}0^{c_i}\Bigr)\prod_{t>j}\sigma(c_t).
		\end{equation}
	\end{lemma}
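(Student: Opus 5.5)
The approach is to compute the inner product directly as a sum over the standard representatives, grouping the points of $P_m'$ by the position of their first nonzero coordinate. For $j\in\{0,\dots,m\}$, let $S_j$ be the set of standard representatives whose first nonzero coordinate is $x_j$. Such a point has $x_0=\cdots=x_{j-1}=0$, $x_j=1$, and $(x_{j+1},\dots,x_m)\in\Fq^{m-j}$ arbitrary. The sets $S_0,\dots,S_m$ partition $P_m'$. The product $X^a(P)X^b(P)$ equals $\prod_i P_i^{c_i}$, so
\[
\bigl\langle \ev(X^a),\ev(X^b)\bigr\rangle=\sum_{j=0}^{m}\sum_{P\in S_j}\prod_{i=0}^{m}P_i^{c_i}.
\]

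Next I evaluate the inner sum for each fixed $j$. The coordinates with $i<j$ equal $0$, and they contribute the constant factor $\prod_{i<j}0^{c_i}$ under the convention $0^0=1$. This factor is $1$ exactly when $c_i=0$ for all $i<j$, and $0$ otherwise. The coordinate $x_j=1$ contributes $1^{c_j}=1$. The remaining coordinates $x_t$ with $t>j$ range independently over $\Fq$. Since the summand factors as a product, the sum over $\Fq^{m-j}$ factors as
\[
\prod_{t>j}\sum_{\alpha\in\Fq}\alpha^{c_t}=\prod_{t>j}\sigma(c_t).
\]
Summing over $j$ then gives \eqref{eq:entry-formula}.

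It remains to justify the stated values of $\sigma(d)$. For $d=0$, the sum $\sum_{\alpha\in\Fq}\alpha^0$ equals $q\cdot 1=0$ in $\Fq$, using $0^0=1$. For $d>0$, the term at $\alpha=0$ vanishes, and the sum becomes $\sum_{\alpha\in\Fq^*}\alpha^d$. If $\Qm\mid d$, every term is $1$, so the sum is $q-1=-1$. Otherwise, pick a generator $\gamma$ of $\Fq^*$. The sum is a geometric series $\sum_{k=0}^{q-2}\gamma^{dk}=(\gamma^{d(q-1)}-1)/(\gamma^d-1)=0$, which is valid since $\gamma^d\ne1$.

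There is no real obstacle here. The one point that needs care is the convention $0^0=1$. It must be applied consistently in two places: in the factor for the vanishing coordinates $i<j$, and in the $d=0$ case of $\sigma$, where it makes $\sigma(0)=0$ rather than something else. Applying it consistently ensures that monomials with $c_t=0$ for some $t>j$ correctly kill the $j$-th term.
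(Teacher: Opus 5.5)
Your proposal is correct and follows the paper's proof essentially step for step. Like the paper, you stratify the standard representatives by the position $j$ of the first nonzero coordinate, pull out the prefix factor $\prod_{i<j}0^{c_i}$, factor the sum over the free coordinates into $\prod_{t>j}\sigma(c_t)$, and then sum over $j$. Your extra check of the closed-form values of $\sigma(d)$ is correct; the paper only states these values in its definition, and the identity itself does not need them, since $\sigma(d)$ is defined as the power sum $\sum_{\alpha\in\Fq}\alpha^d$.
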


	\begin{proof}
		Partition the set $P_m'$ of standard representatives according to the position of the first nonzero coordinate. For every $j\in\{0,\dots,m\}$, the representatives whose first nonzero coordinate is the $j$th one are 
		\[
		\{(0,\dots,0,1,\alpha_{j+1},\dots,\alpha_m):\alpha_{j+1},\dots,\alpha_m\in\Fq\}.
		\]
		For such a point, the monomial product $X^aX^b$ takes the value
		\[
		\Bigl(\prod_{i<j}0^{c_i}\Bigr)\,1^{c_j}\,\prod_{t>j}\alpha_t^{c_t},
		\qquad c_i=a_i+b_i.
		\]
		It follows that the sum over this $j$th stratum is
		\begin{align*}
			\sum_{\alpha_{j+1},\dots,\alpha_m\in\Fq}
			\Bigl(\prod_{i<j}0^{c_i}\Bigr)\prod_{t>j}\alpha_t^{c_t}
			&=
			\Bigl(\prod_{i<j}0^{c_i}\Bigr)
			\prod_{t>j}\Bigl(\sum_{\alpha_t\in\Fq}\alpha_t^{c_t}\Bigr)\\
			&=
			\Bigl(\prod_{i<j}0^{c_i}\Bigr)\prod_{t>j}\sigma(c_t).
		\end{align*}
		Summing over all positions $j$ of the first nonzero coordinate gives \eqref{eq:entry-formula}.
	\end{proof}

	The following lemma gives the condition for a nonzero entry in the open-interval situation treated in this paper.

	\begin{lemma}\label{lem:nonzero-criterion}
		Let $v\in I_r$ and put
		\[
		2v=r\Qm+\beta,\qquad 0<\beta<\Qm.
		\]
		Let $X^a$ and $X^b$ be monomials of degree $v$ and set $c_i:=a_i+b_i$. Then
		\[
		\bigl\langle \ev(X^a),\ev(X^b)\bigr\rangle\neq 0
		\]
		if and only if there is a unique index $s$ such that
		\[
		c_0=\cdots=c_{s-1}=0,\qquad c_s>0,\qquad c_s\equiv \beta\pmod{\Qm},
		\]
		\[
		c_j\in \Qm\mathbb Z_{>0}\qquad (j>s).
		\]
		Then the unique surviving term in \eqref{eq:entry-formula} is $j=s$, and
		\[
		\bigl\langle \ev(X^a),\ev(X^b)\bigr\rangle=(-1)^{m-s}.
		\]
	\end{lemma}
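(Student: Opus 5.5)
The plan is to read off the criterion directly from the entry formula of Lemma~\ref{lem:entry-formula}. I will first find exactly when a single summand of \eqref{eq:entry-formula} is nonzero. Then I will show that the congruence $2v\equiv\beta\not\equiv 0\pmod{\Qm}$ allows at most one summand to survive, so no cancellation can occur.

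\emph{Step 1 (a single summand).} Fix $j\in\{0,\dots,m\}$. Since $0^{c_i}$ equals $1$ when $c_i=0$ and $0$ otherwise, the factor $\prod_{i<j}0^{c_i}$ is nonzero exactly when $c_0=\cdots=c_{j-1}=0$. By the definition of $\sigma$, the factor $\prod_{t>j}\sigma(c_t)$ is nonzero exactly when every $c_t$ with $t>j$ is a positive multiple of $\Qm$. In that case the factor equals $(-1)^{m-j}$. So the $j$th term is $(-1)^{m-j}$ if both conditions hold, and $0$ otherwise.

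\emph{Step 2 (what a surviving term forces).} Suppose the $j$th term survives. Then $\sum_i c_i=c_j+\sum_{t>j}c_t$, and every $c_t$ with $t>j$ is divisible by $\Qm$. Also $\sum_i c_i=\deg(X^aX^b)=2v=r\Qm+\beta$. Hence $c_j\equiv\beta\pmod{\Qm}$. Because $0<\beta<\Qm$, we get $c_j\not\equiv 0$, and in particular $c_j>0$. Thus $j$ is exactly the first index with nonzero $c$, namely $j=\min\{i:c_i>0\}$, and the conditions of the lemma hold with $s=j$.

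\emph{Step 3 (uniqueness and conclusion).} If two indices $j<j'$ both survived, the term $j'$ would force $c_j=0$. But Step~2 applied to $j$ gives $c_j>0$, a contradiction. So at most one term is nonzero, and more precisely only $s=\min\{i:c_i>0\}$ can contribute. Consequently the inner product is either $0$ or $(-1)^{m-s}$, and $(-1)^{m-s}\neq0$ in $\Fq$.

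For the converse, assume the stated conditions hold for some $s$. By Step~1 the $s$th term equals $(-1)^{m-s}$, and by Step~3 every other term vanishes. The index $s$ is automatically unique, since it is forced to be the first index with $c_s>0$.

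I do not expect a genuine obstacle. The only point needing care is Step~2: the hypothesis $v\in I_r$, i.e.\ $\beta\not\equiv 0\pmod{\Qm}$, is what rules out $c_j=0$ for a surviving term, and hence rules out two surviving terms cancelling. One should also remember that $c_j>0$ is not imposed by $\sigma$ on the $j$th coordinate itself; it comes only from this congruence.
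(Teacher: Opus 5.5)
Your proof is correct and follows essentially the paper's argument: both look at the summands of \eqref{eq:entry-formula} one at a time and use the congruence $c_s\equiv\beta\not\equiv 0\pmod{\Qm}$ to show that only the term at $s=\min\{i:c_i>0\}$ can survive, which gives the value $(-1)^{m-s}$. Your per-term ordering (first saying exactly when the $j$th summand is nonzero, then forcing $j$ to be the first positive coordinate) is a slightly more streamlined arrangement of the paper's case analysis, which instead fixes $s$ first and then shows that the other configurations vanish.
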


	\begin{proof}
		Let $s=\min\{i:c_i>0\}$. If some $j_0>s$ satisfies $c_{j_0}=0$, then every term in \eqref{eq:entry-formula} vanishes: for $j\le s$ the factor $\sigma(c_{j_0})=\sum_{\alpha\in\Fq}\alpha^0=q=0$ occurs, while for $j>s$ the prefix contains $0^{c_s}=0$. Therefore a nonzero entry forces $c_j>0$ for every $j>s$. If some $t>s$ fails to lie in $\Qm\mathbb Z_{>0}$, then $\sigma(c_t)=0$, so every term in \eqref{eq:entry-formula} vanishes: for $j<t$ the factor $\sigma(c_t)$ occurs, while for $j\ge t$ the prefix contains $0^{c_s}$. Therefore a nonzero entry may occur only when
		\[
		c_j\in \Qm\mathbb Z_{>0}\qquad (j>s).
		\]
		Then,
		\[
		c_s=2v-\sum_{j>s}c_j\equiv r\Qm+\beta\equiv \beta\pmod{\Qm}.
		\]
		Since $0<\beta<\Qm$, we obtain $c_s>0$ and $\sigma(c_s)=0$. Therefore in \eqref{eq:entry-formula} every term with $j<s$ vanishes since the product over $t>j$ contains $\sigma(c_s)$, and every term with $j>s$ vanishes since the prefix contains $0^{c_s}$. Therefore the surviving term is $j=s$, and its value is
		\[
		\prod_{j>s}\sigma(c_j)=(-1)^{m-s}.
		\]
		This proves the criterion and the value formula.
	\end{proof}

	\begin{remark}\label{rem:cancellation}
		The open-interval assumption $0<\beta<\Qm$ is used to force
		\[
		c_s\equiv \beta\not\equiv 0\pmod{\Qm},
		\]
	so $\sigma(c_s)=0$ and the surviving term in \eqref{eq:entry-formula} is $j=s$. When $\beta=0$, the argument no longer applies. Those boundary cases are the self-orthogonal and dual-containing cases treated by Kaplan and Kim \cite{KaplanKim}, together with the Song and Luo case \cite{SongLuo} where the hull is another projective Reed--Muller code.
	\end{remark}

	\begin{corollary}\label{cor:support-theorem}
		Let $v\in I_r$, and suppose $m\ge r+1$. If
		\[
		\bigl\langle \ev(X^a),\ev(X^b)\bigr\rangle\neq 0,
		\]
		then $a_i=b_i=0$ for every $i<m-r$. It follows that the nonzero principal block of $G_1G_1^T$ has index set consisting of monomials in the last $r+1$ variables.
	\end{corollary}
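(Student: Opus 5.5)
By Lemma~\ref{lem:nonzero-criterion}, a nonzero entry forces a specific shape on the combined exponent vector $c=a+b$. There is an index $s$ with $c_0=\dots=c_{s-1}=0$ and $c_s\equiv\beta\pmod{\Qm}$, $c_s>0$. Every later coordinate $c_j$ with $j>s$ is a positive multiple of $\Qm$. The plan is to count degrees and show that $s$ cannot be too small.

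\textbf{Step 1: a lower bound on the total degree.} The total degree is $|c|=2v=r\Qm+\beta$ with $0<\beta<\Qm$. There are $m-s$ indices $j>s$, and each contributes $c_j\ge \Qm$. Also $c_s\ge\beta$, since $c_s$ is positive and congruent to $\beta$ with $0<\beta<\Qm$. Hence
\[
r\Qm+\beta=2v=\sum_i c_i\ \ge\ \beta+(m-s)\Qm .
\]

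\textbf{Step 2: bounding $s$.} Step 1 gives $m-s\le r$, that is, $s\ge m-r$. So $c_i=0$ for every $i<m-r$. Since exponents are nonnegative and $c_i=a_i+b_i$, this forces $a_i=b_i=0$ for all $i<m-r$. The hypothesis $m\ge r+1$ only guarantees $m-r\ge1$, so the statement is not vacuous.

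\textbf{Step 3: the block structure.} Call a monomial \emph{active} if it involves only $x_{m-r},\dots,x_m$. By Step 2, if $X^a$ is not active, then its row and column in $G_1G_1^T$ are identically zero. Ordering the index set so that active monomials come first, $G_1G_1^T$ becomes block diagonal, with the principal block on active monomials and zeros elsewhere. These active monomials are exactly the monomials in the last $r+1$ variables. This gives the second sentence of the corollary.

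No step is a real obstacle. The only point needing care is that $c_s\ge\beta$ requires $\beta>0$, which is exactly the open-interval assumption on $v\in I_r$.
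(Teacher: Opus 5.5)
Your proof is correct and follows the paper's argument. Lemma~\ref{lem:nonzero-criterion} forces each of the $m-s$ coordinates after the first positive one to be at least $\Qm$, and a degree count against $2v=r\Qm+\beta<(r+1)\Qm$ then gives $m-s\le r$ (the paper uses $(m-s)\Qm\le 2v<(r+1)\Qm$ directly, so it does not need your extra bound $c_s\ge\beta$).
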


	\begin{proof}
		By Lemma~\ref{lem:nonzero-criterion}, if the first positive coordinate occurs at position $t$, then the $m-t$ later coordinates are positive multiples of $\Qm$. Since $2v<(r+1)\Qm$, there can be at most $r$ such later coordinates. Therefore $m-t\le r$, so $t\ge m-r$. Therefore every coordinate before $m-r$ is zero in both monomials.
	\end{proof}

	In what follows, assume $m\ge r+1$ and relabel the last $r+1$ active variables as
	\[
	z_0=x_{m-r},\ z_1=x_{m-r+1},\ \dots,\ z_r=x_m.
	\]
	We define
	\begin{equation}\label{eq:ActSet-definition}
	\ActSet_r(v):=\{M:\exists N,\ \langle \ev(M),\ev(N)\rangle\neq 0\},
	\end{equation}
	where $M$ and $N$ run through all monomials of degree $v$ in $z_0,\dots,z_r$.
	After fixing any order on $\ActSet_r(v)$, we define
	\begin{equation}\label{eq:Sr-definition}
	S_r^{\mathrm{sym}}(v):=\bigl(\langle \ev(M),\ev(N)\rangle\bigr)_{M,N\in\ActSet_r(v)}.
	\end{equation}
	This is the corresponding principal nonzero block.

	\begin{proposition}\label{prop:ambient-independence}
		Let $m,m'\ge r+1$ and $v\in I_r$. After relabeling the active variables in dimension $m$ and in dimension $m'$ as $z_0,\dots,z_r$, the two nonzero principal blocks are identical. It follows that
		\[
		\rank S_r^{\mathrm{sym}}(v)
		\]
		depends only on $(q,r,v)$, so the notation $\Delta_r(v)$ is well defined.
	\end{proposition}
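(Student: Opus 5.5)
We will show that both the index set and the entries of the nonzero principal block are intrinsic to the exponent pattern in the last $r+1$ variables. The data $m$ and $m'$ enter only through the padding zeros in the leading coordinates and through the sign $(-1)^{m-s}$.

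\textbf{Step 1: the index set.} Fix $m\ge r+1$. By Corollary~\ref{cor:support-theorem}, every monomial $M$ with some $N$ satisfying $\langle \ev(M),\ev(N)\rangle\neq 0$ involves only $x_{m-r},\dots,x_m$. The map $x_{m-r+i}\mapsto z_i$ is a bijection between degree-$v$ monomials in these variables and degree-$v$ monomials in $z_0,\dots,z_r$. Under this map, the exponent vector $a\in\mathbb Z_{\ge0}^{m+1}$ becomes $(0,\dots,0,a')$ with $a'\in\mathbb Z_{\ge 0}^{r+1}$.

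Lemma~\ref{lem:nonzero-criterion} is a condition purely on $c=a+b$. For padded vectors $c=(0,\dots,0,c')$, the condition reads as follows: there is a first positive index $s=(m-r)+s'$ with $c'_{s'}\equiv\beta \pmod{\Qm}$, and all later $c'_j$ lie in $\Qm\mathbb Z_{>0}$. This is exactly the same condition on $c'$ in the $z$-coordinates, independent of $m$. Hence the set $\ActSet_r(v)$, viewed as a set of monomials in $z_0,\dots,z_r$, does not depend on $m$.

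\textbf{Step 2: the entries.} By Lemma~\ref{lem:nonzero-criterion}, each nonzero entry equals $(-1)^{m-s}$. With $s=(m-r)+s'$ we get $m-s=r-s'$. The value $(-1)^{r-s'}$ depends only on $c'$, so the entries agree for $m$ and $m'$ once the same order is fixed on $\ActSet_r(v)$.

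Zero entries also agree, because the criterion is an equivalence. One point needs a check: in the definition of $S_r^{\mathrm{sym}}(v)$ the inner product is computed in $\PP^m$, not in $\PP^r$. This is harmless, since the same formula \eqref{eq:entry-formula} applies in both cases. Concretely, the terms with $j<m-r$ carry the factor $\sigma(c_t)$ for some $t$ with $c_t$ not in $\Qm\mathbb Z_{>0}$, and they vanish exactly as in the proof of the lemma.

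\textbf{Step 3: conclusion.} The two blocks are therefore identical matrices, so they have the same rank.

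It remains to identify $\Delta_r(v)=\rank(G_1G_1^T)$ with $\rank S_r^{\mathrm{sym}}(v)$. Consider a row of $G_1G_1^T$ indexed by a monomial outside $\ActSet_r(v)$, possibly one involving the earlier variables. By the definition of $\ActSet_r(v)$ and Corollary~\ref{cor:support-theorem}, such a row is identically zero, and by symmetry so is the corresponding column. After permuting indices, $G_1G_1^T$ is therefore $S_r^{\mathrm{sym}}(v)\oplus 0$, and its rank equals $\rank S_r^{\mathrm{sym}}(v)$. This rank depends only on $(q,r,v)$, so $\Delta_r(v)$ is well defined.

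The only delicate step is the sign bookkeeping $m-s=r-s'$ together with the check that the leading strata contribute nothing. Both follow directly from the proof of Lemma~\ref{lem:nonzero-criterion}.
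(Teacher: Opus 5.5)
Your proof is correct and follows the same route as the paper: Corollary~\ref{cor:support-theorem} localizes the index set to monomials in $z_0,\dots,z_r$, and Lemma~\ref{lem:nonzero-criterion} shows that each entry depends only on the local array $(c_0,\dots,c_r)$. You also make explicit two points the paper leaves implicit, namely the sign bookkeeping $m-s=r-s'$ and the zero-row argument identifying $\rank(G_1G_1^T)$ with $\rank S_r^{\mathrm{sym}}(v)$.
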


	\begin{proof}
		By Corollary~\ref{cor:support-theorem}, the nonzero block has index set consisting of monomials of degree $v$ in the active variables $z_0,\dots,z_r$. For two such monomials, the corresponding exponent-sum array is the local array $(c_0,\dots,c_r)$. Lemma~\ref{lem:nonzero-criterion} shows that the corresponding entry is determined entirely by this local array. Therefore the entry formula is the same in every ambient dimension once the active variables are relabeled, so the two support blocks agree entry by entry.
	\end{proof}

	It follows that
	\[
	\Delta_r(v)=\rank S_r^{\mathrm{sym}}(v).
	\]

	We use the notation $S_r^{\mathrm{sym}}(v)$ and $\Delta_r(v)$ throughout. The remaining local sets and maps will be introduced when needed.

\subsection{Known base intervals}\label{subsec:prototype-intervals}

The recursion below is anchored at the first two lower intervals, which we record here.

\begin{corollary}\label{cor:prototype-I0}
	For $v\in I_0$,
	\[
	S_0^{\mathrm{sym}}(v)=\begin{bmatrix}1\end{bmatrix},
	\qquad
	\Delta_0(v)=1.
	\]
\end{corollary}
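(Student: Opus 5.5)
For $r=0$ the setting is very small, so the plan is to compute the active set and the block directly. Take $m\ge 1$ and $v\in I_0$, so $0<v<\Qm/2$ and $2v=\beta$ with $0<\beta<\Qm$. By Corollary~\ref{cor:support-theorem}, the only active variable is $z_0=x_m$. Every degree-$v$ monomial in $z_0$ alone is $z_0^v$, so there is at most one candidate, and we have $\ActSet_0(v)\subseteq\{z_0^v\}$.

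Next we check that this candidate really is active by applying Lemma~\ref{lem:nonzero-criterion} to the pair $M=N=z_0^v$. In the ambient space the exponent-sum array is $c_i=0$ for $i<m$ and $c_m=2v$. Hence the first positive index is $s=m$, and $c_s=2v=\beta\equiv\beta\pmod{\Qm}$ with $c_s>0$. The condition on indices $j>s$ is vacuous. The criterion therefore holds, and it gives $\langle\ev(z_0^v),\ev(z_0^v)\rangle=(-1)^{m-m}=1$.

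As a sanity check, the same value comes directly from Lemma~\ref{lem:entry-formula}. For $j<m$, the term contains $\sigma(2v)=0$, because $0<2v<\Qm$. For $j=m$, the prefix product is $1$, since all $c_i=0$ for $i<m$ and $0^0=1$, and the product over $t>m$ is empty.

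It follows that $\ActSet_0(v)=\{z_0^v\}$ and $S_0^{\mathrm{sym}}(v)=[1]$. By Proposition~\ref{prop:ambient-independence}, this block is the same in every ambient dimension $m\ge1$, so $\Delta_0(v)=\rank[1]=1$. None of these steps is a real obstacle; the only point needing care is the $0^0=1$ convention in the $j=m$ term, which is what makes the entry equal to $1$ rather than $0$.
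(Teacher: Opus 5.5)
Your proof is correct and takes the same approach as the paper: Corollary~\ref{cor:support-theorem} leaves only the monomial $z_0^v$, so the support block is $1\times 1$. The paper simply asserts that this entry is $1$, while you check it explicitly through Lemma~\ref{lem:nonzero-criterion} with $s=m$, and again through the entry formula using $\sigma(2v)=0$ and $0^0=1$; this fills in a step the paper leaves implicit rather than giving a different argument.
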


\begin{proof}
	By Corollary~\ref{cor:support-theorem}, only the variable $z_0$ can occur, so the support block is the $1\times1$ matrix $\bigl[\begin{smallmatrix}1\end{smallmatrix}\bigr]$.
\end{proof}

\begin{corollary}\label{cor:prototype-I1}
	For $v\in I_1$,
	\[
	\Delta_1(v)=2v-\Qm+1=A_1(v).
	\]
\end{corollary}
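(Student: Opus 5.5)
We reduce to the two active variables $z_0,z_1$, determine the support block $S_1^{\mathrm{sym}}(v)$ explicitly from Lemma~\ref{lem:nonzero-criterion}, and show it is a permuted triangular matrix of size $2v-\Qm+1$. By Proposition~\ref{prop:ambient-independence} we may compute with any $m\ge 2$.

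\textbf{The nonzero pattern.} Write $2v=\Qm+\beta$ with $\beta=2v-\Qm$. Since $\Qm/2<v<\Qm$, we have $0<\beta<\Qm$, and also $\beta<v$. Take degree-$v$ monomials $M=z_0^{a}z_1^{v-a}$ and $N=z_0^{b}z_1^{v-b}$. The local exponent sums are $c_0=a+b$ and $c_1=2v-a-b$. Lemma~\ref{lem:nonzero-criterion} allows exactly two configurations.
\begin{itemize}
\item[(i)] $s=0$: here $c_0\equiv\beta\pmod{\Qm}$, $c_0>0$, and $c_1\in\Qm\mathbb Z_{>0}$. Since $c_1\le 2v<2\Qm$, this forces $c_1=\Qm$, so $a+b=\beta$. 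Conversely, every pair with $a+b=\beta$ qualifies.
\item[(ii)] $s=1$: here $c_0=0$, so $a=b=0$. Then $c_1=2v\equiv\beta\not\equiv 0$, so this case indeed occurs.
\end{itemize}
Cases (i) and (ii) are disjoint, because $\beta>0$. The entry value is $\pm1$, given by $(-1)^{m-s}$. Hence
\[
\ActSet_1(v)=\{z_0^{a}z_1^{v-a}:0\le a\le\beta\}.
\]
Every $a\le\beta$ has the partner $b=\beta-a$, and $a\le\beta<v$, so all these monomials exist.

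\textbf{The rank.} Index rows and columns by $a=0,\dots,\beta$. Then $S_1^{\mathrm{sym}}(v)$ has a nonzero entry $\pm1$ at every antidiagonal position $(a,\beta-a)$, one extra nonzero entry at $(0,0)$, and zeros elsewhere. Column $\beta$ is nonzero only in row $0$. Expanding along that column, the determinant equals $\pm1$ times the determinant of the minor on rows $1,\dots,\beta$ and columns $0,\dots,\beta-1$. That minor is an antidiagonal matrix with $\pm1$ entries, so it is invertible. Hence $S_1^{\mathrm{sym}}(v)$ is invertible, and
\[
\Delta_1(v)=\beta+1=2v-\Qm+1.
\]

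\textbf{Comparison with $A_1(v)$.} For $r=1$ the inner sum is
\[
\binom{t}{0}-\binom{t-q}{0},
\]
where $\binom{n}{0}$ is read as $1$ for $n\ge0$ and $0$ for $n<0$. For $t\le v<q$ this equals $1$. Summing over $t=\Qm-v,\dots,v$ gives $A_1(v)=2v-\Qm+1$.

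The only delicate point is using $c_1\le 2v<2\Qm$ to pin down $c_1=\Qm$ in case (i). The remainder is bookkeeping.
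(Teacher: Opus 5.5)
Your proof is correct, but it takes a different route from the paper. The paper does not compute anything for this corollary; it only cites the lower-interval result of Kaplan and Kim \cite{KaplanKim}.

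You instead derive the base case inside the paper's own framework, using only Lemma~\ref{lem:nonzero-criterion} and the support reduction. Restricted to $z_0,z_1$, you show that the only nonzero entries are $-1$ on the antidiagonal $a+b=\beta$ and $+1$ at the corner $(z_1^v,z_1^v)$. Hence $|\ActSet_1(v)|=\beta+1$, and the block is invertible by your column expansion. In the language of Section~III, this says $\ActSet_1(v)=\TopSet_1(v)$, that is, $\RemSet_1(v)=\varnothing$: by Lemma~\ref{lem:remainder-shape}, a remainder monomial would have to be $z_1^v$ with $v>\Qm$, which is impossible in $I_1$. Your antidiagonal-plus-corner matrix is exactly $P_1(v)$ of Theorem~\ref{thm:top-full-rank}, with diagonal blocks $(-1)^1 I$ and the corner block $P_{U,L}$. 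So you could also have concluded directly from that theorem once you noted $\RemSet_1(v)=\varnothing$.

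What your route buys is self-containment. The base case, and hence the whole recursion, no longer depends on matching an external hull-dimension statement to the Gram-rank quantity $\Delta_1(v)$. It also makes the later claim in Corollary~\ref{cor:recursive-principal-block}, that the $r=1$ case ``is Theorem~\ref{thm:top-full-rank}'', genuinely internal rather than resting on the citation. The paper's citation buys brevity and ties the base case to the prior literature. Your verification that $A_1(v)=2v-\Qm+1$, via $\binom{t}{0}-\binom{t-q}{0}=1$ for $0<\Qm-v\le t\le v<q$, is also fine.
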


\begin{proof}
	This is the known lower-interval formula in the case $I_1$; see \cite{KaplanKim}.
\end{proof}

\section{The top layer}
	
	We now define the top layer and prove that its Gram block is full rank.
	
	Let $v\in I_r$ with $r\ge 1$, and put
	\[
	2v=r\Qm+\beta,\qquad 0<\beta<\Qm.
	\]
	To describe the top layer, we separate the tail exponents from the first active exponent. Therefore, for a tail vector
	\[
	a=(a_1,\dots,a_r)\in[0,\Qm]^r,
	\]
	denote
	\[
	|a|:=a_1+\cdots+a_r
	\]
	and encode the corresponding monomial of degree $v$ as
	\[
	z_0^{\,v-|a|}z_1^{a_1}\cdots z_r^{a_r}.
	\]
	The bounds
	\[
	L:=r\Qm-v,\qquad U:=v
	\]
	are the right bounds for two simple reasons. First, $|a|\le U$ is the condition that the $z_0$-exponent $v-|a|$ be nonnegative. Second, the complement map
	\[
	a\longmapsto \bar a:=(\Qm-a_1,\dots,\Qm-a_r)
	\]
	satisfies $|\bar a|=r\Qm-|a|$, so the condition $|a|\ge L$ is equivalent to saying that the corresponding complement-partner monomial also has nonnegative $z_0$-exponent. Since $2v=r\Qm+\beta$, we obtain $L=v-\beta$, and hence
	\[
	L\le |a|\le U
	\qquad\Longleftrightarrow\qquad
	0\le v-|a|\le \beta.
	\]
	Thus $[L,U]$ is the range of tail sums preserved by the complement map.

	The standard top layer is defined by
	\begin{equation}\label{eq:TopSet-definition}
	\TopSet_r(v):=\Bigl\{z_0^{v-|a|}z_1^{a_1}\cdots z_r^{a_r}:
	0\le a_i\le \Qm,\ L\le |a|\le U\Bigr\},
	\end{equation}
	and denote
	\[
	A_r(v):=|\TopSet_r(v)|
	=\#\bigl\{a\in[0,\Qm]^r:L\le |a|\le U\bigr\}.
	\]
	
	\begin{proposition}\label{prop:A-explicit}
		For $r\ge1$ and $v\in I_r$, we have
		\[
		A_r(v)=\sum_{t=L}^{U}\sum_{j=0}^{r}(-1)^j\binom{r}{j}\binom{t-jq+r-1}{r-1},
		\]
		where, by convention, $\binom{n}{k}=0$ if $n<k$ or $k<0$.
	\end{proposition}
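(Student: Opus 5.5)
The formula is a standard inclusion--exclusion count, so we would prove it in two steps. First, $A_r(v)=|\TopSet_r(v)|$ equals the number of vectors $a\in[0,\Qm]^r$ with $L\le|a|\le U$: the map $a\mapsto z_0^{v-|a|}z_1^{a_1}\cdots z_r^{a_r}$ is injective. Its image consists of genuine monomials of degree $v$, since $|a|\le U=v$ makes the $z_0$-exponent nonnegative. Hence
\[
A_r(v)=\sum_{t=L}^{U} N_r(t),\qquad N_r(t):=\#\{a\in\mathbb Z^r:0\le a_i\le \Qm,\ a_1+\cdots+a_r=t\}.
\]
Note that $L=v-\beta\ge 0$. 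Indeed, $v>r\Qm/2$ and $\beta<\Qm$ give $2L=r\Qm-\beta>(r-1)\Qm\ge 0$ for $r\ge1$. So the range of $t$ is a set of nonnegative integers and nothing degenerate occurs there.

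Second, we compute $N_r(t)$ for $t\ge0$ by inclusion--exclusion on the set $J\subseteq\{1,\dots,r\}$ of coordinates violating the upper bound $a_i\le \Qm$, i.e.\ with $a_i\ge q$. For a fixed $J$ with $|J|=j$, substitute $a_i=a_i'+q$ for $i\in J$. This gives a bijection between vectors with $a_i\ge q$ on $J$ and sum $t$, and nonnegative integer vectors with sum $t-jq$. By stars and bars there are $\binom{t-jq+r-1}{r-1}$ of these when $t-jq\ge0$, and $0$ otherwise. Summing over $J$ with signs $(-1)^{|J|}$ yields
\[
N_r(t)=\sum_{j=0}^{r}(-1)^j\binom{r}{j}\binom{t-jq+r-1}{r-1}.
\]
Substituting into the first display gives the stated formula.

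The only point needing care is the compatibility of the stated convention $\binom{n}{k}=0$ for $n<k$ with the stars-and-bars count. When $t-jq<0$, the upper entry is $t-jq+r-1<r-1$, so the convention gives $0$, as required. When $t-jq\ge0$, the upper entry is at least $r-1$ and the usual binomial applies. For $r=1$ the lower index is $0$: then $\binom{t-jq}{0}$ equals $1$ exactly when $t-jq\ge0$, which matches $N_1(t)=[0\le t\le \Qm]$ via $1-[t\ge q]$. We expect no real obstacle; the main thing to verify is this boundary behaviour of the binomial convention, including the $r=1$ case.
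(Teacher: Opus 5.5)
Your proof is correct and takes essentially the paper's approach: both write $A_r(v)$ as the sum of the level-set counts $N_r(t)$ for $L\le t\le U$, and your inclusion--exclusion over the coordinates with $a_i\ge q$ is the combinatorial form of the paper's coefficient extraction $[x^t](1-x^q)^r(1-x)^{-r}$. Your explicit checks that $L\ge 0$ and that the binomial convention gives $0$ exactly when $t-jq<0$ (including the case $r=1$) are details the paper leaves implicit.
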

	
	\begin{proof}
		For a fixed $t$ with $L\le t\le U$, the number of tail vectors $a=(a_1,\dots,a_r)\in[0,\Qm]^r$ with $|a|=t$ is
		\[
		[x^t](1+x+\cdots+x^{\Qm})^r.
		\]
		Using
		\[
		1+x+\cdots+x^{\Qm}=\frac{1-x^q}{1-x},
		\]
		we obtain
		\[
		(1+x+\cdots+x^{\Qm})^r=(1-x^q)^r(1-x)^{-r}
		=\left(\sum_{j=0}^{r}(-1)^j\binom{r}{j}x^{jq}\right)
		\left(\sum_{n\ge 0}\binom{n+r-1}{r-1}x^n\right).
		\]
		It follows that
		\[
		[x^t](1+x+\cdots+x^{\Qm})^r
		=
		\sum_{j=0}^{r}(-1)^j\binom{r}{j}\binom{t-jq+r-1}{r-1}.
		\]
		After summing over $t=L,\dots,U$, we obtain the formula for $A_r(v)$.
	\end{proof}
	
	For $a=(a_1,\dots,a_r)$ in this interval, let
	\[
	M_a:=z_0^{v-|a|}z_1^{a_1}\cdots z_r^{a_r}
	\]
	and write its complement partner as
	\[
	\bar a:=(\Qm-a_1,\dots,\Qm-a_r),\qquad
	M_{\bar a}=z_0^{|a|-L}z_1^{\Qm-a_1}\cdots z_r^{\Qm-a_r}.
	\]
		Since $L\le |a|\le U$ if and only if $L\le r\Qm-|a|\le U$, the tail complement $a\mapsto \bar a$ induces a bijection $M_a\mapsto M_{\bar a}$ of the top layer. Therefore the family $(M_{\bar a})_a$ is a permutation of the top layer.
		
		\begin{definition}\label{def:row-column-layers}
			For every $s\in\{L,\dots,U\}$, the \emph{row layer} $s$ is the family of top-row monomials $M_a$ with $|a|=s$, while the \emph{column layer} $s$ is the family of top-column complement partners $M_{\bar b}$ with $|b|=s$. Equivalently, row layer $s$ consists of top-row monomials with $z_0$-exponent $v-s$, and column layer $s$ consists of top-column monomials with $z_0$-exponent $s-L$.
			
			In particular, $|a|=U$ means that the row index lies in the \emph{upper row boundary}, i.e.\ among the top-row monomials $M_a$ with $z_0$-exponent $0$. The condition $|b|=L$ means that the column index lies in the \emph{lower column boundary}, i.e.\ among the top-column monomials $M_{\bar b}$ with $z_0$-exponent $0$. Finally, $|b|=U$ means that the column index lies in the \emph{upper column layer}, i.e.\ among the top-column monomials $M_{\bar b}$ with $z_0$-exponent $U-L=\beta$.
		\end{definition}
		
		\begin{theorem}\label{thm:top-full-rank}
			For every $s$ with $L\le s\le U$, let
		\[
		n_s:=\#\{a\in[0,\Qm]^r:|a|=s\}.
		\]
		We index the rows of $P_r(v)$ by all tuples
		\[
		a=(a_1,\dots,a_r)\in[0,\Qm]^r,\qquad L\le |a|\le U,
		\]
		ordered first by increasing $|a|$ and then lexicographically inside each level set $\{|a|=s\}$. We index the columns by all tuples
		\[
		b=(b_1,\dots,b_r)\in[0,\Qm]^r,\qquad L\le |b|\le U,
		\]
		in the same way, and attach to the column indexed by $b$ the complement partner monomial
		\[
		M_{\bar b}=z_0^{|b|-L}z_1^{\Qm-b_1}\cdots z_r^{\Qm-b_r}.
		\]
		It follows that
		\begin{equation}\label{eq:Pr-definition}
		P_r(v):=\bigl(\langle \ev(M_a),\ev(M_{\bar b})\rangle\bigr)_{a,b}
		\end{equation}
		is an $A_r(v)\times A_r(v)$ matrix, with block decomposition
		\[
		P_r(v)=\bigl(P_{s,t}\bigr)_{L\le s,t\le U},\qquad
		P_{s,t}\in M_{n_s\times n_t}(\Fq).
		\]
		Then $P_r(v)$ is a block lower triangular matrix, every diagonal block is
		\[
		P_{s,s}=(-1)^rI_{n_s},
		\]
			and the possible off-diagonal block is the corner block $P_{U,L}$, from the upper row boundary $|a|=U$ to the lower column boundary $|b|=L$. In particular,
		\[
		\rank P_r(v)=A_r(v).
		\]
		It follows that the principal submatrix of $S_r^{\mathrm{sym}}(v)$ indexed by $\TopSet_r(v)$ has rank $A_r(v)$.
	\end{theorem}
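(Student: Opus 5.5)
The plan is to compute every entry of $P_r(v)$ directly from Lemma~\ref{lem:nonzero-criterion}. After relabeling, that lemma is applied in the local ambient dimension $r$ with variables $z_0,\dots,z_r$; this is legitimate by Proposition~\ref{prop:ambient-independence}, and the sign of a surviving entry is then $(-1)^{r-s}$. For a row tuple $a$ and a column tuple $b$, the exponent-sum array of $M_a\cdot M_{\bar b}$ is
\[
c_0=(v-|a|)+(|b|-L)=\beta+|b|-|a|,\qquad c_i=\Qm+a_i-b_i\in[0,2\Qm]\quad(1\le i\le r),
\]
using $v-L=\beta$. I will split according to whether $c_0>0$ or $c_0=0$.

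\emph{Case $c_0>0$.} Then the first positive index is $s=0$, and Lemma~\ref{lem:nonzero-criterion} requires each $c_i$ with $i\ge1$ to be a positive multiple of $\Qm$ in $[0,2\Qm]$, i.e.\ $c_i\in\{\Qm,2\Qm\}$. Let $k$ be the number of indices $i$ with $c_i=2\Qm$. Then $\sum_{i\ge1}c_i=r\Qm+|a|-|b|=(r+k)\Qm$, so $|a|-|b|=k\Qm$. Hence $c_0=\beta-k\Qm$, and since $0<\beta<\Qm$, the condition $c_0>0$ forces $k=0$. Thus $c_i=\Qm$ for all $i\ge 1$, so $a=b$. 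Conversely, if $a=b$ then $c_0=\beta$ and $c_i=\Qm$, so the entry equals $(-1)^{r}$. This gives $P_{s,s}=(-1)^rI_{n_s}$, and shows that no other entry with $c_0>0$ is nonzero.

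\emph{Case $c_0=0$.} Since both summands $v-|a|$ and $|b|-L$ are nonnegative on the top layer, $c_0=0$ forces $|a|=U$ and $|b|=L$. So any entry not covered by the first case lies in the corner block $P_{U,L}$. I do not need to evaluate these entries.

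Because the levels are ordered increasingly and $U-L=\beta>0$, the row level $U$ comes strictly after the column level $L$, so $P_{U,L}$ sits strictly below the block diagonal. Hence $P_r(v)$ is block lower triangular with diagonal blocks $\pm I$, and $\det P_r(v)=\pm1$, giving $\rank P_r(v)=A_r(v)$.

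For the final assertion, I first note that each top-layer monomial $M_a$ has a nonzero pairing with $M_{\bar a}$, so $\TopSet_r(v)\subseteq\ActSet_r(v)$. Next, $b\mapsto \bar b$ is a bijection of the top layer. Therefore the principal submatrix of $S_r^{\mathrm{sym}}(v)$ indexed by $\TopSet_r(v)$ equals $P_r(v)$ up to a permutation of columns, and so it has the same rank $A_r(v)$.

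The only delicate step is the case $c_0>0$. There the bound $\beta<\Qm$ is exactly what rules out $c_i=2\Qm$ and pins down $a=b$, and I expect this to be the main point requiring care.
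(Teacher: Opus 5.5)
Your proposal is correct and takes essentially the same route as the paper. Like the paper, you compute $c_0=\beta+|b|-|a|$ and $c_i=\Qm+a_i-b_i$, use Lemma~\ref{lem:nonzero-criterion} to show that $c_0>0$ forces $a=b$, and note that $c_0=0$ confines the entry to the corner block $P_{U,L}$. Your count $k$ of indices with $c_i=2\Qm$ is a slightly tidier form of the paper's ``$|a|-|b|$ is a nonnegative multiple of $\Qm$ below $\Qm$''.

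One small citation point: Proposition~\ref{prop:ambient-independence} only compares ambient dimensions $m,m'\ge r+1$, so it does not justify passing to dimension $r$. Read the sign directly in dimension $m$ instead, where $s=m-r$ gives $(-1)^{m-(m-r)}=(-1)^r$, the same value.
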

	
	\begin{proof}
		For a row index $a$ and a column index $b$, let $c_i$ be the exponent sums for $M_a$ and $M_{\bar b}$. Then
		\[
		c_0=(v-|a|)+(|b|-L)=\beta+|b|-|a|,
		\]
		and for $1\le i\le r$,
		\[
		c_i=a_i+(\Qm-b_i)=\Qm+a_i-b_i.
		\]
		For $a=b$,
		\[
		(c_0,c_1,\dots,c_r)=(\beta,\Qm,\dots,\Qm),
		\]
		and Lemma~\ref{lem:nonzero-criterion} gives
		\[
		\langle \ev(M_a),\ev(M_{\bar a})\rangle=(-1)^r.
		\]
		It follows that every diagonal block $P_{s,s}$ is the scalar matrix $(-1)^rI_{n_s}$.
		
		Now assume $a\neq b$ and that the entry is nonzero. If $c_0>0$, then Lemma~\ref{lem:nonzero-criterion} forces every $c_i$ with $i\ge 1$ to be a positive multiple of $\Qm$. Since $0\le c_i\le 2\Qm$, each $c_i$ is either $\Qm$ or $2\Qm$, so every $a_i-b_i$ is either $0$ or $\Qm$. Therefore $|a|-|b|$ is a nonnegative multiple of $\Qm$. By contrast,
		\[
		0<c_0=\beta+|b|-|a|<\Qm
		\]
		since $0<\beta<\Qm$. Therefore $0\le |a|-|b|<\Qm$, and the only nonnegative multiple of $\Qm$ in this interval is $0$. Therefore $|a|=|b|$ and all $a_i=b_i$, contradicting $a\neq b$.
		
		Thus every nonzero off-diagonal entry must satisfy $c_0=0$, i.e.,
		\[
		|a|=|b|+\beta.
		\]
			Since $U-L=\beta$, this can happen only for the corner pair $(|a|,|b|)=(U,L)$, namely from the upper row boundary to the lower column boundary. Therefore
		\[
		P_{s,t}=0\qquad\text{whenever }s<t\text{ or }(s,t)\neq(U,L)\text{ with }s\neq t.
		\]
		It follows that $P_r(v)$ is block lower triangular with diagonal blocks $(-1)^rI_{n_s}$, and so
		\[
		\rank P_r(v)=\sum_{s=L}^{U}n_s=A_r(v).
		\]
		Since the column family $(M_{\bar b})_b$ is a permutation of the top-layer monomials, the corresponding principal block of $S_r^{\mathrm{sym}}(v)$ has the same rank.
	\end{proof}

	\section{The rightmost lift}
	
	We define the rightmost lift and prove the property needed below.
	
	\begin{definition}
		Let
		\[
		M=z_0^{a_0}z_1^{a_1}\cdots z_s^{a_s}
		\]
		be a nonconstant monomial. Define
		\[
		\rho(M):=\max\{j:a_j>0\},
		\]
		and let
		\[
		\Lift(M):=z_{\rho(M)}^{\Qm}M.
		\]
	\end{definition}
	
	\begin{proposition}\label{prop:rightmost-lift}
		Let $M$ be any nonconstant monomial. Then $\ev(\Lift(M))=\ev(M)$ on the projective point set under consideration. The map $M\mapsto \Lift(M)$ is injective. It follows that if $\mathcal F$ is any family of nonconstant monomials of degree $u$, then the degree-$(u+\Qm)$ Gram matrix of the lifted family $\Lift(\mathcal F)$ is the same as the degree-$u$ Gram matrix of $\mathcal F$. In particular, the latter occurs as a principal submatrix of the former.
	\end{proposition}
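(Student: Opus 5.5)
The plan is to verify the three claims in order, pointwise and then entrywise; nothing beyond Lemma~\ref{lem:entry-formula} or a direct evaluation is needed.

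\emph{Evaluation identity.} Let $M$ be nonconstant and put $j=\rho(M)$, so that $a_j>0$. For any point $P$ in the chosen representative set, there are two cases.
\begin{itemize}
\item If the $z_j$-coordinate $\alpha$ of $P$ is $0$, then both $M(P)$ and $\Lift(M)(P)$ vanish, because $z_j$ divides both monomials.
\item If $\alpha\in\Fq^*$, then $\alpha^{\Qm}=1$, so $\Lift(M)(P)=\alpha^{\Qm}M(P)=M(P)$.
\end{itemize}
Hence $\ev(\Lift(M))=\ev(M)$ coordinatewise. This does not depend on the choice of representatives, since we only evaluate at the fixed list $P_1,\dots,P_n$.

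\emph{Injectivity.} The key observation is that lifting does not change the last occurring variable: $\rho(\Lift(M))=\rho(M)$, since we only raise the exponent of $z_{\rho(M)}$. Suppose $\Lift(M)=\Lift(N)$. Then
\[
\rho(M)=\rho(\Lift(M))=\rho(\Lift(N))=\rho(N)=:j,
\]
and dividing both sides by $z_j^{\Qm}$ gives $M=N$. So $M\mapsto\Lift(M)$ is injective.

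\emph{Gram matrices.} Let $\mathcal F$ be a family of nonconstant monomials of degree $u$. The lifted family $\Lift(\mathcal F)$ consists of monomials of degree $u+\Qm$, and by injectivity it is indexed bijectively by $\mathcal F$. For $M,N\in\mathcal F$, the evaluation identity gives
\[
\langle\ev(\Lift M),\ev(\Lift N)\rangle=\langle\ev(M),\ev(N)\rangle.
\]
Therefore the two Gram matrices coincide entry by entry under this indexing. In particular, the Gram matrix of $\mathcal F$ is exactly the principal submatrix of the full degree-$(u+\Qm)$ Gram matrix indexed by the rows and columns belonging to $\Lift(\mathcal F)$.

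The only point needing care is the injectivity step, specifically that $\rho$ is preserved under lifting. A naive lift that raised some other variable could collide two distinct monomials; choosing the rightmost occurring variable is exactly what makes $\rho$ recoverable from the lifted monomial. No other obstacle is expected.
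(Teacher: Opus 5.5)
Your proposal is correct and follows essentially the same route as the paper: the same two-case pointwise check at the coordinate of $z_{\rho(M)}$, injectivity from $\rho(\Lift(M))=\rho(M)$ so that $M$ is recovered by dividing by $z_{\rho(M)}^{\Qm}$, and entrywise equality of the Gram matrices because the evaluation vectors coincide. Your reading of the last clause, as the principal submatrix of the full degree-$(u+\Qm)$ Gram matrix indexed by $\Lift(\mathcal F)$, is exactly how the paper uses it in Corollary~\ref{cor:lower-embedding}.
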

	
	\begin{proof}
		Let $j=\rho(M)$. For every standard projective representative $P$, if the $j$th coordinate is zero then both $M(P)$ and $\Lift(M)(P)$ are zero since $a_j>0$. If the $j$th coordinate is nonzero, then it lies in $\Fq^*$ and so its $\Qm$th power is $1$. Therefore $\Lift(M)(P)=M(P)$ for every $P$.
		
		Injectivity follows from the definition: the rightmost positive variable of $\Lift(M)$ is still $z_j$, and its exponent there is larger by $\Qm$, so $M$ can be recovered by subtracting $\Qm$ from that exponent. The Gram-matrix claim follows since every lifted row vector and every lifted column vector equals the original one.
	\end{proof}
	
	Let $r\ge 2$ and $v\in I_r$, and set
	\[
	u:=v-\Qm\in I_{r-2}.
	\]
	By Corollary~\ref{cor:support-theorem}, the lower support block $S_{r-2}^{\mathrm{sym}}(u)$ has index set consisting of monomials in the variables $z_2,\dots,z_r$. Applying Proposition~\ref{prop:rightmost-lift} to that lower family and padding by the two leading zeros $z_0^0z_1^0$, we obtain the following exact embedding.
	
	\begin{corollary}\label{cor:lower-embedding}
		The block $S_{r-2}^{\mathrm{sym}}(u)$ occurs as a principal submatrix of $S_r^{\mathrm{sym}}(v)$. More precisely, if $L$ runs through the lower index monomials in $z_2,\dots,z_r$, then the lifted monomials $\Lift(L)$ are pairwise distinct, lie in $\ActSet_r(v)$, and satisfy
		\[
		S_r^{\mathrm{sym}}(v)\big|_{\Lift(\ActSet_{r-2}(u))\times \Lift(\ActSet_{r-2}(u))}=S_{r-2}^{\mathrm{sym}}(u).
		\]
	\end{corollary}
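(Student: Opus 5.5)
The argument applies Proposition~\ref{prop:rightmost-lift} to the index family $\ActSet_{r-2}(u)$, with all monomials viewed in the variables $z_0,\dots,z_r$. By Corollary~\ref{cor:support-theorem} applied to $u\in I_{r-2}$ in an ambient with $r-1$ active variables, every $L\in\ActSet_{r-2}(u)$ is a monomial of degree $u$ in $z_2,\dots,z_r$. By Proposition~\ref{prop:ambient-independence}, its Gram entries with other such monomials are the same whether computed in the $(r-1)$-variable frame or after padding by $z_0^0z_1^0$. The padding is harmless in the entry formula: the two leading coordinates contribute $c_0=c_1=0$, so the index $s$ of Lemma~\ref{lem:nonzero-criterion} merely shifts by $2$. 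This shift does not change the value $(-1)^{m-s}$, which depends only on the number of trailing coordinates.

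Next, $u\ge1$ holds because $u\in I_{r-2}$ and the interval is open with positive left endpoint, or because $u>0$ when $r=2$ since $v>\Qm$. Hence every $L$ is nonconstant and $\Lift(L)=z_{\rho(L)}^{\Qm}L$ is defined, has degree $u+\Qm=v$, and still involves only $z_2,\dots,z_r$. Proposition~\ref{prop:rightmost-lift} gives injectivity and $\ev(\Lift(L))=\ev(L)$. Therefore, for $L,L'\in\ActSet_{r-2}(u)$,
\[
\langle \ev(\Lift(L)),\ev(\Lift(L'))\rangle=\langle \ev(L),\ev(L')\rangle .
\]
This is exactly the asserted equality of principal submatrices, once the rows and columns are ordered compatibly via the bijection $L\mapsto\Lift(L)$.

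It remains to check membership $\Lift(L)\in\ActSet_r(v)$. By definition of $\ActSet_{r-2}(u)$, there is some $N$ of degree $u$ with $\langle\ev(L),\ev(N)\rangle\ne0$. Then $N$ must itself lie in the active set, since the relation is symmetric, so $N$ is also a nonconstant monomial in $z_2,\dots,z_r$. Consequently $\langle\ev(\Lift(L)),\ev(\Lift(N))\rangle\neq0$ with $\Lift(N)$ of degree $v$ in $z_0,\dots,z_r$, which places $\Lift(L)$ in $\ActSet_r(v)$.

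One point needs care and is the only real obstacle. The evaluation identity $\ev(\Lift(M))=\ev(M)$ is proved on the full standard-representative set, whereas $\ActSet$ and $S^{\mathrm{sym}}$ are defined through entries computed in the ambient space. I will work throughout in a fixed ambient dimension $m\ge r+1$, where both $u$ and $v$ lie in the lower half, and verify the equality there. The identity is then pointwise on $P'_m$, so no frame mismatch arises, and ambient independence transfers the result to the local blocks.
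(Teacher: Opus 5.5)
Your proposal is correct and follows the same route as the paper: membership in $\ActSet_r(v)$ comes from lifting a witness $N$ with $\langle \ev(L),\ev(N)\rangle\neq 0$, and the injectivity and evaluation invariance $\ev(\Lift(L))=\ev(L)$ from Proposition~\ref{prop:rightmost-lift} give the entrywise equality of the principal blocks. Your extra checks simply make explicit what the paper's proof leaves implicit: that $u\ge 1$, so $\Lift$ is defined and produces degree $v$, and that one can work in a single ambient dimension $m\ge r+1$ to avoid frame mismatches.
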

	
	\begin{proof}
		Let $L\in \ActSet_{r-2}(u)$. From the definition of the lower index set, there exists a lower monomial $L'$ with
		\[
		\langle \ev(L),\ev(L')\rangle\neq 0.
		\]
		By Proposition~\ref{prop:rightmost-lift},
		\[
		\langle \ev(\Lift(L)),\ev(\Lift(L'))\rangle
		=\langle \ev(L),\ev(L')\rangle\neq 0,
		\]
		so every lifted monomial belongs to $\ActSet_r(v)$. The proposition implies that $L\mapsto \Lift(L)$ is injective, so the lifted index set is pairwise distinct.
		
		Order the lifted rows and columns by the original lower order. For every lower index monomials $L_1,L_2$ we obtain
		\[
		\bigl\langle \ev(\Lift(L_1)),\ev(\Lift(L_2))\bigr\rangle
		=\langle \ev(L_1),\ev(L_2)\rangle.
		\]
		It follows that the restricted principal submatrix of $S_r^{\mathrm{sym}}(v)$ on the lifted index set equals $S_{r-2}^{\mathrm{sym}}(u)$ entry by entry. In particular, the embedding is exact, not only equivalent up to permutation or rank.
	\end{proof}

	Recall \eqref{eq:ActSet-definition} and \eqref{eq:TopSet-definition} that $\ActSet_r(v)$ indexes the whole nonzero principal block $S_r^{\mathrm{sym}}(v)$, while $\TopSet_r(v)\subseteq \ActSet_r(v)$ is the top layer. Denote
	\begin{equation}\label{eq:RemSet-definition}
	\RemSet_r(v):=\ActSet_r(v)\setminus \TopSet_r(v).
	\end{equation}
	for the remainder index set.
	
\begin{corollary}\label{cor:lift-remainder}
		Every monomial in $\Lift(\ActSet_{r-2}(u))$ has $z_0$- and $z_1$-exponents equal to $0$ and has a tail exponent greater than $\Qm$. It follows that
		\[
		\Lift(\ActSet_{r-2}(u))\subseteq \RemSet_r(v)
		\qquad\text{and}\qquad
		\Lift(\ActSet_{r-2}(u))\cap \TopSet_r(v)=\varnothing.
		\]
	\end{corollary}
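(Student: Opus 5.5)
The plan is to read off the exponents of a lifted monomial directly from Corollary~\ref{cor:support-theorem} and the definition of $\Lift$. Then we compare them with the defining condition of $\TopSet_r(v)$ in \eqref{eq:TopSet-definition}.

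First, the zero exponents. Let $N\in\ActSet_{r-2}(u)$ with $u=v-\Qm\in I_{r-2}$. By Corollary~\ref{cor:support-theorem}, applied with $r-2$ in place of $r$, $N$ is a monomial of degree $u$ in the last $r-1$ active variables. In the padded labelling used for Corollary~\ref{cor:lower-embedding} these are $z_2,\dots,z_r$. So $N$ has $z_0$- and $z_1$-exponents equal to $0$. The lift multiplies only by $z_{\rho(N)}^{\Qm}$, and $\rho(N)\ge 2$, so $\Lift(N)$ also has $z_0$- and $z_1$-exponents equal to $0$.

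Next, the large tail exponent. Let $j=\rho(N)$, and let $a_j>0$ be the exponent of $z_j$ in $N$. The exponent of $z_j$ in $\Lift(N)$ is $a_j+\Qm>\Qm$. Here $N$ is nonconstant because $u>0$, since $u\in I_{r-2}$ is an open interval of positive reals. This gives the first sentence of the statement.

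For the inclusion, we already know from Corollary~\ref{cor:lower-embedding} that $\Lift(\ActSet_{r-2}(u))\subseteq\ActSet_r(v)$. It therefore suffices to show disjointness from $\TopSet_r(v)$. Every element of $\TopSet_r(v)$ has the form $z_0^{v-|a|}z_1^{a_1}\cdots z_r^{a_r}$ with all tail exponents $a_i\le\Qm$. A lifted monomial has the tail exponent $a_j+\Qm>\Qm$ at position $j\ge 2$, so it cannot lie in $\TopSet_r(v)$. Hence $\Lift(\ActSet_{r-2}(u))\subseteq\ActSet_r(v)\setminus\TopSet_r(v)=\RemSet_r(v)$ by \eqref{eq:RemSet-definition}.

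The only delicate point is bookkeeping. One must confirm that the relabelling of the lower active variables as $z_2,\dots,z_r$ matches the padding convention of Corollary~\ref{cor:lower-embedding}, so that $\rho(N)\ge 2$. Once that is fixed, every step is a direct check.
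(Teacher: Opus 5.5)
Your proposal is correct and follows essentially the same route as the paper: the lower active monomials involve only $z_2,\dots,z_r$, the rightmost lift pushes one tail exponent above $\Qm$, and every top-layer monomial has all tail exponents at most $\Qm$. You are slightly more explicit than the paper in two places: you invoke Corollary~\ref{cor:lower-embedding} for $\Lift(\ActSet_{r-2}(u))\subseteq\ActSet_r(v)$, and you note that $u>0$ makes every lower monomial nonconstant, so the lift is defined; the paper leaves both points implicit.
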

	
	\begin{proof}
		A monomial in the lower index set uses only the variables $z_2,\dots,z_r$. Its rightmost lift hence keeps the $z_0$- and $z_1$-exponents equal to $0$ and adds $\Qm$ to the exponent of a variable that was already present. Therefore one tail exponent becomes strictly larger than $\Qm$. Such a monomial cannot lie in the top layer, since every top-layer monomial has all tail exponents at most $\Qm$.
	\end{proof}

	\section{The global interval recursion}
	
	Fix $r\ge 2$ and $v\in I_r$, and put $u=v-\Qm\in I_{r-2}$.
	We start with the symmetric support block $S_r^{\mathrm{sym}}(v)$ in the local variables $z_0,\dots,z_r$, but the Schur-complement argument is carried out after passing to a one-sided column-permuted working block.
	Thus the top rows remain indexed by the family $(M_a)$, whereas the top columns will later be reindexed by the pairing $M_a\mapsto M_{\bar a}$.
	
	\subsection{The remainder block}
	
	Recall that $\RemSet_r(v)$ denotes the remainder index set defined in \eqref{eq:RemSet-definition}.
	The following lemma describes these remainder monomials.
	
	\begin{lemma}\label{lem:remainder-shape}
		If
		\[
		M=z_0^{h}z_1^{a_1}\cdots z_r^{a_r}\in \ActSet_r(v)
		\]
		and $h>0$, then $M\in \TopSet_r(v)$. It follows that every monomial in $\RemSet_r(v)$ has $z_0$-exponent $0$ and at least one tail exponent greater than $\Qm$.
	\end{lemma}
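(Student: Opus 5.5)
The plan is to work directly from the nonzero criterion, Lemma~\ref{lem:nonzero-criterion}, applied in the local variables $z_0,\dots,z_r$ (Proposition~\ref{prop:ambient-independence}). Write $2v=r\Qm+\beta$ with $0<\beta<\Qm$. Let $M=z_0^{h}z_1^{a_1}\cdots z_r^{a_r}\in\ActSet_r(v)$ with $h>0$. By the definition \eqref{eq:ActSet-definition} there is a degree-$v$ monomial $N=z_0^{b_0}z_1^{b_1}\cdots z_r^{b_r}$ with $\langle\ev(M),\ev(N)\rangle\neq0$. Put $c_i=a_i+b_i$, with $a_0=h$.

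\emph{Step 1: the first active index is $0$.} Since $c_0\ge h>0$, the index $s$ in Lemma~\ref{lem:nonzero-criterion} is $s=0$. The lemma then gives $c_0\equiv\beta\pmod{\Qm}$ and $c_j\in\Qm\mathbb Z_{>0}$ for $1\le j\le r$.

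\emph{Step 2: pin down the exponent sums.} Because each $c_j\ge\Qm$ for $j\ge1$,
\[
c_0=2v-\sum_{j\ge1}c_j\le 2v-r\Qm=\beta .
\]
Combined with $0<c_0$ and $c_0\equiv\beta\pmod{\Qm}$, and using $0<\beta<\Qm$, this forces $c_0=\beta$. Consequently $\sum_{j\ge1}c_j=r\Qm$. Since each summand is a positive multiple of $\Qm$, every $c_j$ equals $\Qm$.

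\emph{Step 3: verify the top-layer conditions \eqref{eq:TopSet-definition}.}
\begin{itemize}
\item From $c_j=\Qm$ we get $a_j\le c_j=\Qm$ for every $j\ge1$.
\item From $h\le c_0=\beta$ we get $|a|=v-h\ge v-\beta=L$.
\item Since $h\ge0$, we have $|a|\le v=U$.
\end{itemize}
Hence $M\in\TopSet_r(v)$.

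\emph{Second claim.} Let $M\in\RemSet_r(v)=\ActSet_r(v)\setminus\TopSet_r(v)$. By the first part, $M$ has $z_0$-exponent $h=0$, so $|a|=v=U\in[L,U]$. If all tail exponents satisfied $a_i\le\Qm$, then $M$ would satisfy every condition of \eqref{eq:TopSet-definition} and would lie in $\TopSet_r(v)$, a contradiction. Therefore some tail exponent exceeds $\Qm$.

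The only delicate point is Step 2. There the open-interval hypothesis $0<\beta<\Qm$ is what excludes $c_0=\beta+k\Qm$ for $k\neq0$, and this is exactly what turns the congruence into an equality.
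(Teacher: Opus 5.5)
Your proof is correct and follows the paper's argument essentially step by step. In both, $c_0>0$ forces every tail sum to be a positive multiple of $\Qm$; then $c_0\le\beta$ together with $c_0\equiv\beta\pmod{\Qm}$ pins $c_0=\beta$ and every tail sum to $\Qm$, which gives the top-layer conditions, and the remainder claim is deduced the same way.
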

	
	\begin{proof}
		Choose a monomial $N=z_0^{b_0}\cdots z_r^{b_r}$ such that
		\[
		\langle \ev(M),\ev(N)\rangle\neq 0.
		\]
		Set $c_i=a_i+b_i$. Since $h>0$, we obtain $c_0>0$. Lemma~\ref{lem:nonzero-criterion} then forces each tail coordinate $c_i$ with $1\le i\le r$ to be a positive multiple of $\Qm$, hence at least $\Qm$. Therefore
		\[
		2v-c_0=\sum_{i=1}^r c_i\ge r\Qm.
		\]
		Since $2v=r\Qm+\beta$, we obtain $c_0\le \beta$. Also $c_0\equiv 2v\equiv \beta\pmod{\Qm}$, and $0<c_0<\Qm$, so in fact $c_0=\beta$. Therefore
		\[
		\sum_{i=1}^r c_i=r\Qm.
		\]
		Since each tail coordinate is at least $\Qm$, every one of them must be $\Qm$. It follows that $a_i\le \Qm$ for all $i\ge 1$, while $0<h\le c_0=\beta$. Therefore
		\[
		L=r\Qm-v\le v-h\le v,
		\]
		hence $M\in \TopSet_r(v)$.
		
		The consequence follows directly: if $M\in \RemSet_r(v)$ then $h=0$. Since $M\notin \TopSet_r(v)$, not all tail exponents can be at most $\Qm$, so at least one of them is greater than $\Qm$.
	\end{proof}
	
	We now make this passage to the working block precise. Recall that
\[
\ActSet_r(v)=\TopSet_r(v)\sqcup \RemSet_r(v).
\]
We use the following order on $\ActSet_r(v)$ the \emph{symmetric support order}: list first the monomials in $\TopSet_r(v)$, then the monomials in $\RemSet_r(v)$, and within each block use lexicographic order. If we index both the rows and the columns of $S_r^{\mathrm{sym}}(v)$ by this symmetric support order, then the symmetric support block splits as
	\[
	S_r^{\mathrm{sym}}(v)=
	\begin{pmatrix}
		T & B_0\\
		B_0^{\mathsf T} & D
	\end{pmatrix},
	\]
	where $T$ denotes the principal block on the top layer, $D$ is the principal block on the remainder, and $B_0$ contains the mixed top/remainder entries. Let $\Pi$ be the permutation matrix on the top-layer index set induced by the bijection $M_a\mapsto M_{\bar a}$ on the top layer (equivalently, by the tail complement $a\mapsto \bar a$). We keep the row order fixed and apply this permutation only to the top columns. Then
	\[
	W_r(v)=
	\begin{pmatrix}
		T\Pi & B_0\\
		B_0^{\mathsf T}\Pi & D
	\end{pmatrix},
	\]
	and we hence write
	\begin{equation}\label{eq:block-decomposition}
		W_r(v):=
		S_r^{\mathrm{sym}}(v)
		\begin{pmatrix}
			\Pi & 0\\
			0 & I
		\end{pmatrix}
		=
		\begin{pmatrix}
			P & B\\
			C & D
		\end{pmatrix},
	\end{equation}
	where
	\[
	P=P_r(v)=T\Pi,\qquad B=B_0,\qquad C=B_0^{\mathsf T}\Pi.
	\]
It follows that for every $M_b\in\TopSet_r(v)$ and every $Y\in\RemSet_r(v)$, we have $B_{M_{\bar b},\,Y}=C_{Y,\,M_{\bar b}}$.
	Since the right factor is invertible,
	\[
	\rank W_r(v)=\rank S_r^{\mathrm{sym}}(v).
	\]
		The decomposition \eqref{eq:block-decomposition} places the complement-paired top-layer matrix $P_r(v)$ in the upper-left block and leaves the lower-right block unchanged. Since the permutation acts on one side only, the working block $W_r(v)$ need not be symmetric, even though the original support block $S_r^{\mathrm{sym}}(v)$ is symmetric as a principal submatrix of the Gram matrix $G_1G_1^T$. Therefore $C$ is obtained from $B^{\mathsf T}$ by reindexing the top-layer columns via the pairing $M_a \leftrightarrow M_{\bar a}$.
		
		Below, top rows are always read in the row-layer notation of Definition~\ref{def:row-column-layers}, so they have index sets consisting of monomials $M_a$. In contrast, after the one-sided permutation the top columns are read in the column-layer notation, so they are indexed by complement partners $M_{\bar b}$. In particular, statements about the support of $P$, $B$, $C$, or products such as $P^{-1}B$ must always specify whether the support is on the row side or on the column side.

	\subsection{Boundary support of the off-diagonal blocks}
	
	\begin{lemma}\label{lem:B-support}
		Let a row of $B$ have index a top-row monomial $M_a\in\TopSet_r(v)$ and a column by a remainder monomial $Y\in\RemSet_r(v)$. If $B_{M_a,Y}\neq 0$, then $|a|=U=v$. Equivalently, every nonzero row of $B$ lies in the upper row boundary, i.e.\ is indexed by a top-row monomial $M_a$ with $z_0$-exponent $0$.
	\end{lemma}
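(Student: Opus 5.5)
The plan is to argue directly from the nonzero criterion, Lemma~\ref{lem:nonzero-criterion}, combined with the shape of remainder monomials from Lemma~\ref{lem:remainder-shape}. Fix a top-row monomial $M_a=z_0^{v-|a|}z_1^{a_1}\cdots z_r^{a_r}$ with $0\le a_i\le \Qm$ and $L\le|a|\le U$. Fix also $Y=z_0^{0}z_1^{y_1}\cdots z_r^{y_r}\in\RemSet_r(v)$. By Lemma~\ref{lem:remainder-shape}, $Y$ has $z_0$-exponent $0$ and some tail exponent $y_k>\Qm$. Put $c_0=v-|a|$ and $c_i=a_i+y_i$ for $i\ge1$, and assume $B_{M_a,Y}\neq0$. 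The goal is to show $c_0=0$, which is exactly $|a|=v=U$.

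Suppose instead that $c_0>0$. Then the first positive coordinate is at $s=0$, and Lemma~\ref{lem:nonzero-criterion} forces each $c_i$ ($1\le i\le r$) to be a positive multiple of $\Qm$ and $c_0\equiv\beta\pmod{\Qm}$. Since $0<c_0=v-|a|\le \beta<\Qm$, we get $c_0=\beta$. The argument of Lemma~\ref{lem:remainder-shape} then applies: $\sum_{i\ge1}c_i=2v-\beta=r\Qm$, and every $c_i\ge\Qm$, so every $c_i=\Qm$. This gives $y_i=\Qm-a_i\le\Qm$ for all $i$, contradicting $y_k>\Qm$. Equivalently, one could apply Lemma~\ref{lem:remainder-shape} to $Y$ with partner $M_a$: the $z_0$-exponent sum is positive, so the computation there shows both monomials have all tail exponents at most $\Qm$. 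Either way we conclude $c_0=0$, i.e.\ $|a|=U$, which means the $z_0$-exponent of $M_a$ is $0$, the upper row boundary of Definition~\ref{def:row-column-layers}.

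There is no real obstacle here. The only step needing care is the reduction $c_0>0\Rightarrow c_0=\beta$. It uses $c_0\le\beta$, which comes from $|a|\ge L=v-\beta$, together with the congruence from Lemma~\ref{lem:nonzero-criterion}, exactly as in the proof of Lemma~\ref{lem:remainder-shape}. The equivalent reformulation in the statement is then just the identification of $|a|=U$ with $z_0$-exponent $v-|a|=0$.
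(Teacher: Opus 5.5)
Your proof is correct and follows essentially the same route as the paper: if the $z_0$-exponent $h=v-|a|$ were positive, the nonzero criterion would force $h=\beta$ and every tail sum to equal $\Qm$, which is incompatible with $Y\in\RemSet_r(v)$. The only difference is cosmetic: the paper states the contradiction as $Y=M_{\bar a}\in\TopSet_r(v)$, while you state it as $y_i\le\Qm$ for all $i$, contradicting the tail exponent $y_k>\Qm$ supplied by Lemma~\ref{lem:remainder-shape}; both express the same observation.
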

	
	\begin{proof}
		Let $h=v-|a|$ be the $z_0$-exponent of $M_a$. If $B_{M_a,Y}\neq 0$ and $h>0$, then the first exponent sum is $c_0=h$. As in Lemma~\ref{lem:remainder-shape}, we obtain $h=\beta$ and all tail exponent sums equal $\Qm$. Therefore the column monomial $Y$ must be the complement partner $M_{\bar a}$ of $M_a$. But $M_{\bar a}$ belongs to the permuted top family, not to the remainder set $\RemSet_r(v)$. This contradicts the choice of the column index $Y$. Therefore $h=0$, so $|a|=v=U$.
	\end{proof}
	
	\begin{lemma}\label{lem:C-support}
		Let a row of $C$ have index a remainder monomial $Y\in\RemSet_r(v)$ and a column by a top-column complement partner $M_{\bar b}$ with $M_b\in\TopSet_r(v)$. If $C_{Y,M_{\bar b}}\neq 0$, then $|b|=L=r\Qm-v$. Equivalently, every nonzero column of $C$ lies in the lower column boundary, i.e.\ is indexed by a complement partner $M_{\bar b}$ with $z_0$-exponent $0$.
	\end{lemma}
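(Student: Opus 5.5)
The plan is to mirror the proof of Lemma~\ref{lem:B-support}, using the identity $C_{Y,M_{\bar b}}=B_{M_{\bar b},Y}$ recorded after \eqref{eq:block-decomposition}. That is, $C_{Y,M_{\bar b}}$ is the Gram entry $\langle \ev(Y),\ev(M_{\bar b})\rangle$ between the remainder monomial $Y$ and the top-layer monomial $M_{\bar b}$. So it suffices to show that the $z_0$-exponent of $M_{\bar b}$ must vanish. By Definition~\ref{def:row-column-layers} that exponent is $h:=|b|-L$, and $h=0$ is exactly $|b|=L$.

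Suppose the entry is nonzero but $h>0$. By Lemma~\ref{lem:remainder-shape}, $Y$ has $z_0$-exponent $0$, so the first exponent sum is $c_0=h>0$. Lemma~\ref{lem:nonzero-criterion} then forces each tail sum $c_i$, $1\le i\le r$, to be a positive multiple of $\Qm$, and hence at least $\Qm$.

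As in the proof of Lemma~\ref{lem:remainder-shape}, $c_0=2v-\sum_{i\ge1}c_i\le 2v-r\Qm=\beta$. Together with $c_0\equiv\beta\pmod{\Qm}$ and $0<c_0$, this gives $c_0=\beta$. It follows that $\sum_{i\ge1}c_i=r\Qm$, so every $c_i=\Qm$.

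Write $Y=z_1^{y_1}\cdots z_r^{y_r}$. The tail exponents of $M_{\bar b}$ are $\Qm-b_i$, so $c_i=\Qm$ gives $y_i=b_i$ for all $i$. Hence every tail exponent of $Y$ is at most $\Qm$. This contradicts Lemma~\ref{lem:remainder-shape}, which says some tail exponent of a remainder monomial exceeds $\Qm$. Alternatively, $Y=M_b$ would have $z_0$-exponent $v-|b|=0$ with $L\le|b|\le U$, so $Y\in\TopSet_r(v)$, which is impossible. Therefore $h=0$, that is, $|b|=L$.

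The only delicate point is bookkeeping: keeping the column index $M_{\bar b}$, which carries $z_0$-exponent $|b|-L$, distinct from the row-layer convention, under which $M_b$ carries $z_0$-exponent $v-|b|$. The exponent arithmetic itself is routine once Lemma~\ref{lem:nonzero-criterion} is applied.
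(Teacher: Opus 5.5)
Your proof is correct and takes essentially the paper's route. The paper also starts from the identity $C_{Y,M_{\bar b}}=B_{M_{\bar b},Y}$, but then simply cites Lemma~\ref{lem:B-support} with $a=\bar b$ to get $|\bar b|=U=v$, hence $|b|=r\Qm-v=L$. You instead re-run that lemma's exponent-sum argument inline for the top monomial $M_{\bar b}$ (forcing $c_0=\beta$, all tail sums equal to $\Qm$, and $Y=M_b\in\TopSet_r(v)$), which is valid but reduces to that one citation.
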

	
	\begin{proof}
		If $C_{Y,M_{\bar b}}\neq 0$, then the relation $B_{M_{\bar b},\,Y}=C_{Y,\,M_{\bar b}}$ after \eqref{eq:block-decomposition} gives $B_{M_{\bar b},\,Y}\neq 0$. Therefore Lemma~\ref{lem:B-support} implies $|\bar b|=U=v$. Since $|\bar b|=r\Qm-|b|$, we obtain $|b|=r\Qm-U=r\Qm-v=L$.
	\end{proof}

	\begin{lemma}\label{lem:schur-zero}
		For the block decomposition \eqref{eq:block-decomposition},
		\[
		CP^{-1}B=0.
		\]
	\end{lemma}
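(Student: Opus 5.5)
The plan is to exploit the two boundary-support lemmas together with the explicit block structure of $P=P_r(v)$ from Theorem~\ref{thm:top-full-rank}. We keep careful track of which index set each factor lives on. The rows and columns of $P$ are both indexed by tuples $a$ (resp.\ $b$) in $[0,\Qm]^r$ with $L\le|\cdot|\le U$, grouped into layers $s\in\{L,\dots,U\}$. Hence $P^{-1}$ is a matrix whose rows carry the column index of $P$ (column layers) and whose columns carry the row index of $P$ (row layers). In the product $CP^{-1}B$, the inner summation indices are therefore a column index $b$ of $C$ and a row index $a$ of $B$. By Lemma~\ref{lem:C-support}, only $|b|=L$ contributes, and by Lemma~\ref{lem:B-support}, only $|a|=U$ contributes. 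Thus
\[
CP^{-1}B \;=\; C_{\ast,L}\,\bigl(P^{-1}\bigr)_{L,U}\,B_{U,\ast},
\]
and it suffices to show that the block of $P^{-1}$ with row layer $L$ (column side of $P$) and column layer $U$ (row side of $P$) vanishes.

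To compute $P^{-1}$, we use the same ordered tuple set for the rows and columns of $P$. Theorem~\ref{thm:top-full-rank} then lets us write
\[
P=(-1)^rI+N,
\]
where $N$ is supported only in the corner block $(U,L)$: first index in layer $U$, second index in layer $L$. Since $0<\beta=U-L$, we have $L\neq U$. The block of $N^2$ at $(s,t)$ is a sum of products $N_{s,w}N_{w,t}$, which would require $w=L$ and $w=U$ simultaneously, so $N^2=0$. Consequently
\[
\bigl((-1)^rI+N\bigr)\bigl((-1)^rI-N\bigr)=I-N^2=I,
\]
so $P^{-1}=(-1)^rI-N$.

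Now read off the $(L,U)$ block of $P^{-1}$. The identity contributes nothing, because $L\neq U$ makes it an off-diagonal block. The matrix $N$ contributes nothing, because its only nonzero block sits at position $(U,L)$, not $(L,U)$. Hence $(P^{-1})_{L,U}=0$, and therefore $CP^{-1}B=0$.

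The only delicate point is the bookkeeping of sides. One must confirm that the "lower column boundary" of Lemma~\ref{lem:C-support} is the row-layer index $L$ of $P^{-1}$, and that the "upper row boundary" of Lemma~\ref{lem:B-support} is the column-layer index $U$ of $P^{-1}$. This identification is what makes the corner block $(U,L)$ of $N$ the transpose of the block we need, rather than the same block. I would verify this carefully using the row/column conventions fixed after \eqref{eq:block-decomposition}.
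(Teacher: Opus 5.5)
Your proposal is correct and follows essentially the same route as the paper. Both arguments write $P$ as the scalar block-diagonal part $(-1)^rI$ plus a square-zero corner block at $(U,L)$, invert it explicitly, and combine this with the boundary supports of $B$ (row layer $U$) and $C$ (column layer $L$). The only difference is cosmetic: you isolate the single $(L,U)$ block of $P^{-1}$ and note that it vanishes, whereas the paper first shows that $P^{-1}B=D_0^{-1}B$ is supported on column layer $U$ and then uses that this is disjoint from the support of $C$.
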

	
	\begin{proof}
		By Theorem~\ref{thm:top-full-rank}, after grouping the top rows by row layers and the top columns by column layers, $P$ is block lower triangular with diagonal blocks $(-1)^rI$ and a single possible off-diagonal corner block from the upper row boundary $|a|=U$ to the lower column boundary $|b|=L$. Therefore we write $P=D_0+E$, where $D_0$ is block diagonal and $E$ has support only in that corner block. Since $E^2=0$, we have $P^{-1}=D_0^{-1}-D_0^{-1}ED_0^{-1}$. By Lemma~\ref{lem:B-support}, the nonzero rows of $B$ lie in the upper row boundary. Since $D_0^{-1}$ is block diagonal for the layer decomposition $L,L+1,\dots,U$, left multiplication by $D_0^{-1}$ transfers this support to the corresponding upper column layer: the rows of $D_0^{-1}B$ are indexed on the column side, and only the layer $|b|=U$ may remain nonzero. As $E$ has nonzero columns only in the lower column boundary $|b|=L$, we obtain $ED_0^{-1}B=0$. Therefore $P^{-1}B=D_0^{-1}B$, so the nonzero rows of $P^{-1}B$ also lie in the upper column layer $|b|=U$. By Lemma~\ref{lem:C-support}, the nonzero columns of $C$ lie in the disjoint lower column boundary $|b|=L$. Therefore $CP^{-1}B=0$.
	\end{proof}
	
	\begin{proposition}\label{prop:block-elimination}
		For the working block matrix in \eqref{eq:block-decomposition}, define
		\begin{equation}\label{eq:LR-definition}
		L_r(v):=\begin{pmatrix}I&0\\-CP^{-1}&I\end{pmatrix},
		\qquad
		R_r(v):=\begin{pmatrix}I&-P^{-1}B\\0&I\end{pmatrix}.
		\end{equation}
		Then $L_r(v)$ and $R_r(v)$ are invertible, and
		\[
		\begin{aligned}
			W_r(v)R_r(v)&=\begin{pmatrix}P&0\\C&D\end{pmatrix},\\
			L_r(v)W_r(v)R_r(v)&=\begin{pmatrix}P&0\\0&D\end{pmatrix}.
		\end{aligned}
		\]
		It follows that
		\[
		\rank W_r(v)=\rank P+\rank D.
		\]
		Since $W_r(v)$ differs from $S_r^{\mathrm{sym}}(v)$ by a column permutation and $\rank P=A_r(v)$,
		\[
		\rank S_r^{\mathrm{sym}}(v)=A_r(v)+\rank D.
		\]
	\end{proposition}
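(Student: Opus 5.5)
The argument is a block Gaussian elimination, and its only nontrivial input is Lemma~\ref{lem:schur-zero}. First, $L_r(v)$ and $R_r(v)$ are invertible. Each is block unitriangular, and the inverses are obtained by flipping the sign of the off-diagonal block, namely $\begin{pmatrix}I&0\\CP^{-1}&I\end{pmatrix}$ and $\begin{pmatrix}I&P^{-1}B\\0&I\end{pmatrix}$. These blocks are well defined because $P=P_r(v)$ is invertible by Theorem~\ref{thm:top-full-rank}: it is block lower triangular with diagonal blocks $(-1)^rI_{n_s}$, so $\det P=\pm1$.

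Next I will multiply out the blocks directly. For $W_r(v)R_r(v)$, the first block column is $(P,C)^{\mathsf T}$. The second block column is
\[
\bigl(-PP^{-1}B+B,\ -CP^{-1}B+D\bigr)^{\mathsf T}=\bigl(0,\ D-CP^{-1}B\bigr)^{\mathsf T}.
\]
Lemma~\ref{lem:schur-zero} gives $CP^{-1}B=0$, so the lower-right block is $D$, which is the first displayed identity. Left multiplication by $L_r(v)$ then leaves the top block row $(P,0)$ unchanged. It replaces the bottom block row by
\[
(-CP^{-1}P+C,\ -CP^{-1}\cdot0+D)=(0,D),
\]
which gives the block-diagonal form.

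Multiplication by invertible matrices preserves rank, and the rank of a block-diagonal matrix is the sum of the ranks of its blocks. Hence $\rank W_r(v)=\rank P+\rank D$. Finally, $W_r(v)=S_r^{\mathrm{sym}}(v)\,\mathrm{diag}(\Pi,I)$ with $\Pi$ a permutation matrix, so $\rank W_r(v)=\rank S_r^{\mathrm{sym}}(v)$. Substituting $\rank P=A_r(v)$ from Theorem~\ref{thm:top-full-rank} yields $\rank S_r^{\mathrm{sym}}(v)=A_r(v)+\rank D$.

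None of these steps is difficult once Lemma~\ref{lem:schur-zero} is available. The only point needing care is index bookkeeping: the products $CP^{-1}$ and $P^{-1}B$ must be formed with the top columns of $W_r(v)$ indexed by the complement partners $M_{\bar b}$, so that the block sizes match. This holds because $P$ is square of size $A_r(v)$ on both sides and $\Pi$ acts only on the top columns.
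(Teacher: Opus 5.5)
Your proposal is correct and follows the paper's own argument, which is exactly this block-matrix computation with Lemma~\ref{lem:schur-zero} supplying $CP^{-1}B=0$. You only spell out what the paper leaves implicit: the unitriangular inverses of $L_r(v)$ and $R_r(v)$, the invertibility of $P$ from Theorem~\ref{thm:top-full-rank}, and the explicit block products.
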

	
	\begin{proof}
		A block-matrix computation, using Lemma~\ref{lem:schur-zero}, gives the displayed identities; the rank formulas follow.
	\end{proof}

		\begin{remark}\label{rem:deleted-monomials}
			Since $W_r(v)$ is equivalent to $\operatorname{diag}(P,D)$ and $P$ is invertible, every monomial in $\TopSet_r(v)$ must be kept if we want full rank. It remains only to choose which rows and columns of $D$ to keep. Therefore the recursive choice of rows and columns to delete is the lower-interval selection problem inside $D$.
		\end{remark}
	
	\subsection{The rank of the remainder block}
	
	For $Y\in\RemSet_r(v)$, Lemma~\ref{lem:remainder-shape} gives $z_0$-exponent $0$. Since $Y\in\ActSet_r(v)$ is a monomial of degree $v$ in the active variables $z_0,\dots,z_r$, it has the form
	\[
	Y=z_1^{a_1}\cdots z_r^{a_r},
	\qquad a_1+\cdots+a_r=v.
	\]
	The same lemma shows that some tail exponent is greater than $\Qm$. Let
	\[
	\eta(Y):=\max\{j\in\{1,\dots,r\}:a_j>\Qm\},
	\]
	and define the one-step reduction
	\[
	\Red(Y):=Y/z_{\eta(Y)}^{\Qm}.
	\]
	Then $\Red(Y)$ is a monomial of degree $u$ in the variables $z_1,\dots,z_r$.
	
	\begin{lemma}\label{lem:red-eval}
		For $Y\in\RemSet_r(v)$,
		\[
		\ev(Y)=\ev(\Red(Y)).
		\]
	\end{lemma}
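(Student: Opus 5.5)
This is the same mechanism as the rightmost lift in Proposition~\ref{prop:rightmost-lift}, run in reverse. We compare $Y$ and $\Red(Y)$ pointwise on the set $P_m'$ of standard representatives. Write $j=\eta(Y)$, so that
\[
Y=z_j^{\Qm}\,\Red(Y),
\]
where $z_j$ still occurs in $\Red(Y)$ with exponent $a_j-\Qm>0$. The definition of $\eta(Y)$ guarantees that $a_j>\Qm$, and this is exactly what ensures the exponent stays positive.

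Fix a standard representative $P$ and consider two cases according to the $z_j$-coordinate of $P$.

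\emph{Case 1: the $z_j$-coordinate is $0$.} Since $a_j>0$, we have $Y(P)=0$. Since $a_j-\Qm>0$, we also have $\Red(Y)(P)=0$.

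\emph{Case 2: the $z_j$-coordinate is $\alpha\neq0$.} Then $\alpha\in\Fq^*$, so $\alpha^{\Qm}=1$. Hence
\[
Y(P)=\alpha^{\Qm}\,\Red(Y)(P)=\Red(Y)(P).
\]

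In both cases the two values agree. Taking all $P\in P_m'$ gives $\ev(Y)=\ev(\Red(Y))$.

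For the degree bookkeeping, $\Red(Y)$ has degree $v-\Qm=u$. It involves only $z_1,\dots,z_r$, because Lemma~\ref{lem:remainder-shape} gives $Y$ zero $z_0$-exponent. Evaluating in the ambient space (the inactive variables are simply absent) does not affect the argument.

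The argument is routine. The only point needing care is the strict inequality $a_j>\Qm$: without it, $\Red(Y)$ might not vanish where $Y$ vanishes. I would also note that $\Lift(\Red(Y))$ need not equal $Y$, since the rightmost variable may differ. This is why the direct pointwise argument is cleaner than citing Proposition~\ref{prop:rightmost-lift}.
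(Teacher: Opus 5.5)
Your proof is correct and follows the paper's argument: write $Y=z_{\eta(Y)}^{\Qm}\Red(Y)$, use $a_{\eta(Y)}>\Qm$ to get $z_{\eta(Y)}\mid\Red(Y)$, and then at each point either both monomials vanish or $z_{\eta(Y)}^{\Qm}=1$. Your side remark that $\Lift(\Red(Y))$ can differ from $Y$ is also correct, and it explains why arguing pointwise is cleaner than quoting Proposition~\ref{prop:rightmost-lift}.
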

	
	\begin{proof}
		Since $Y=z_{\eta(Y)}^{\Qm}\Red(Y)$ and $a_{\eta(Y)}>\Qm$ implies $z_{\eta(Y)}\mid \Red(Y)$, at each projective point either both monomials vanish or $z_{\eta(Y)}^{\Qm}=1$, so $\ev(Y)=\ev(\Red(Y))$.
	\end{proof}
	
		Recall that $u=v-\Qm$. Set
		\begin{equation}\label{eq:Mru-definition}
		\mathcal M_r(u):=\{z_1^{a_1}\cdots z_r^{a_r}: a_1+\cdots+a_r=u,\ 0\le a_1,\dots,a_r\}.
		\end{equation}
		Fix an order on this set. We then define
		\begin{equation}\label{eq:Hr-definition}
		H_r(u):=\bigl(\langle \ev(M),\ev(M')\rangle\bigr)_{M,M'\in\mathcal M_r(u)}.
		\end{equation}
		It follows that $H_r(u)$ is the full degree-$u$ Gram matrix in the variables $z_1,\dots,z_r$, before any support pruning. It is not one of the blocks in \eqref{eq:block-decomposition}; it is an auxiliary matrix that controls the lower-right block $D$ of $W_r(v)$. More precisely, after introducing the reduction matrix $R$ below, we prove
		\[
		D=R\,H_r(u)\,R^T.
		\]
		
		\begin{lemma}\label{lem:Hr-support}
			The nonzero principal block of $H_r(u)$ is indexed by the monomials of degree $u$ in $z_2,\dots,z_r$. Identifying these monomials with $\ActSet_{r-2}(u)$, we have
			\[
			H_r(u)\big|_{\ActSet_{r-2}(u)\times \ActSet_{r-2}(u)}=S_{r-2}^{\mathrm{sym}}(u).
			\]
			It follows that
			\begin{equation}\label{eq:Hr-rank}
				\rank H_r(u)=\Delta_{r-2}(u).
			\end{equation}
		\end{lemma}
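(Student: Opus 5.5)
The plan is to view $H_r(u)$ as the Gram matrix of degree-$u$ monomials in the $r$ variables $z_1,\dots,z_r$, embedded in the projective setting. I will relabel $w_0=z_1,\dots,w_{r-1}=z_r$ and run the arguments of Lemma~\ref{lem:nonzero-criterion} and Corollary~\ref{cor:support-theorem} with $r-1$ in place of $m$ and $r-2$ in place of $r$. Since $u=v-\Qm\in I_{r-2}$, we have $2u=(r-2)\Qm+\beta$ with the same $\beta\in(0,\Qm)$, so the nonzero criterion applies verbatim to entries of $H_r(u)$. The entries are evaluated on the points of $\PP^m$ with $x_i=0$ for the variables not present; this is the same stratified sum as in Lemma~\ref{lem:entry-formula}. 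The first nonzero coordinate among $z_0,\dots,z_{m-r}$ contributes a factor $\sigma(0)=q=0$ from inactive variables, so only strata starting at $z_1$ or later survive.

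\textbf{Support.} Suppose $\langle\ev(M),\ev(M')\rangle\neq0$ with $M,M'\in\mathcal M_r(u)$. Lemma~\ref{lem:nonzero-criterion} gives a first positive index $s$, with every later exponent sum a positive multiple of $\Qm$. There are $r-s$ such later indices among $z_1,\dots,z_r$. Since $2u<(r-1)\Qm$, at most $r-2$ of them fit, so $r-s\le r-2$ and hence $s\ge2$. Thus both monomials involve only $z_2,\dots,z_r$, exactly as in Corollary~\ref{cor:support-theorem}. Every row and column of $H_r(u)$ indexed by a monomial involving $z_1$ is therefore zero, and the nonzero part sits in the principal block on monomials of degree $u$ in $z_2,\dots,z_r$.

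\textbf{Identification and rank.} On that block, each entry is determined by the local exponent-sum array $(c_2,\dots,c_r)$ through the value $(-1)^{r-s}$ of Lemma~\ref{lem:nonzero-criterion}. This matches the entry of $S_{r-2}^{\mathrm{sym}}(u)$ after relabeling $z_2,\dots,z_r$ as the $r-1$ active variables. This is the ambient-independence argument of Proposition~\ref{prop:ambient-independence}, where the sign is $(-1)^{(\text{number of variables after }s)}$ in both settings.

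Rows of monomials in $z_2,\dots,z_r$ that pair to zero with everything are zero rows in both matrices. So restricting further to $\ActSet_{r-2}(u)$ removes only zero rows and columns, and $H_r(u)$ is, up to a permutation, $S_{r-2}^{\mathrm{sym}}(u)$ bordered by zeros. Consequently $\rank H_r(u)=\rank S_{r-2}^{\mathrm{sym}}(u)=\Delta_{r-2}(u)$ by Proposition~\ref{prop:ambient-independence}.

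\textbf{Main obstacle.} The main care point is bookkeeping the sign and stratum conventions: I must check that the ambient dimension used for $H_r(u)$, whether $m$ or the local $r-1$, yields the same entry values as those defining $S_{r-2}^{\mathrm{sym}}(u)$. I will handle this by noting that in both cases the surviving term is $\prod_{j>s}\sigma(c_j)$ over the same set of indices $j$.
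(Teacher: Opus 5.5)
Your proposal is correct and follows the paper's proof. The paper applies Corollary~\ref{cor:support-theorem} at level $r-2$ to the degree-$u$ monomials in $z_1,\dots,z_r$; this is exactly your count $r-s\le r-2$. It then identifies the surviving block with $S_{r-2}^{\mathrm{sym}}(u)$ and reads off the rank. Your version adds the useful bookkeeping that monomials in $z_2,\dots,z_r$ lying outside $\ActSet_{r-2}(u)$ contribute only zero rows and columns.

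One small slip in your setup paragraph concerns the stratum whose first nonzero coordinate is $x_{m-r}=z_0$. It does not vanish because of an inactive variable; it vanishes because $c_1+\cdots+c_r=2u\not\equiv0\pmod{\Qm}$ forces some $\sigma(c_t)=0$. This case is already covered by applying Lemma~\ref{lem:nonzero-criterion} directly in $\PP^m$, as your support paragraph does, so the reduction to $\PP^{r-1}$ is unnecessary.
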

		
		\begin{proof}
			Since $u\in I_{r-2}$, Corollary~\ref{cor:support-theorem} applied to the family of all monomials of degree $u$ in the variables $z_1,\dots,z_r$ implies that a nonzero entry of $H_r(u)$ may involve only monomials in $z_2,\dots,z_r$. Identifying these monomials with the lower index set $\ActSet_{r-2}(u)$, the surviving block is $S_{r-2}^{\mathrm{sym}}(u)$. Therefore the displayed principal-submatrix identity holds, and \eqref{eq:Hr-rank} follows from the definition of $\Delta_{r-2}(u)$.
		\end{proof}
		
		Next set the $0$-$1$ reduction matrix
	\begin{equation}\label{eq:R-definition}
	\begin{aligned}
		R&=(R_{Y,M})_{Y\in\RemSet_r(v),\,M\in\mathcal M_r(u)},\\
		R_{Y,M}&=
		\begin{cases}
			1,& M=\Red(Y),\\
			0,& \text{otherwise}.
		\end{cases}
	\end{aligned}
	\end{equation}
	Each row of $R$ hence has one nonzero entry, in the column indexed by the reduction of the corresponding remainder monomial. If two different remainder monomials have the same reduction, then the corresponding rows of $R$ are equal.
	
	\begin{lemma}\label{lem:repeat-rank}
		Let $M$ be a matrix over a field, and let $R,C$ be matrices of compatible sizes. Then
		\[
		\rank(RMC)\le \rank(M).
		\]
		In particular, repeating rows or columns of a matrix cannot increase its rank.
	\end{lemma}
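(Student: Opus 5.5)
The inequality $\rank(RMC)\le\rank(M)$ follows from two standard facts, applied one after the other:
\[
\rank(XY)\le \min\{\rank X,\rank Y\}\qquad\text{for any matrices }X,Y\text{ of compatible sizes.}
\]

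First I will establish this fact from the definitions of row space and column space.

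\begin{itemize}
\item Every column of $XY$ is a linear combination of the columns of $X$, with coefficients given by the corresponding column of $Y$. Hence the column space of $XY$ lies in the column space of $X$, and $\rank(XY)\le\rank X$.
\item Every row of $XY$ is a linear combination of the rows of $Y$, with coefficients given by the corresponding row of $X$. Hence the row space of $XY$ lies in the row space of $Y$, and $\rank(XY)\le\rank Y$, since row rank equals column rank.
\end{itemize}

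Next I apply the fact twice. Taking $X=RM$ and $Y=C$ gives $\rank(RMC)\le\rank(RM)$. Taking $X=R$ and $Y=M$ gives $\rank(RM)\le\rank M$. Chaining the two inequalities yields the claim. No field-specific property is used, so the result holds over $\Fq$ as needed later.

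For the ``in particular'' statement, I will note that repeating rows of $M$ means forming $RM$, where $R$ is a $0$-$1$ selection matrix with exactly one $1$ in each row; the reduction matrix in \eqref{eq:R-definition} is of exactly this kind. Repeating columns of $M$ means forming $MC$ with $C$ of the analogous transposed form. Both cases are covered by the inequality, taking the other factor to be an identity matrix.

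Nothing here is a real obstacle. The only point needing care is the implicit use of the equality of row rank and column rank, which I will cite as standard linear algebra.
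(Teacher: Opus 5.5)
Your proof is correct and follows the same route as the paper, which simply invokes the standard bound $\rank(RMC)\le\min\{\rank(R),\rank(M),\rank(C)\}$; you instead derive this bound from the row- and column-space inclusions for a product and apply it twice. Your reading of the ``in particular'' clause, via $0$-$1$ matrices with exactly one $1$ per row, matches how the lemma is later applied to $D=R\,H_r(u)\,R^T$.
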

	
	\begin{proof}
		This follows from the inequality $\rank(RMC)\le \min\{\rank(R),\rank(M),\rank(C)\}$.
	\end{proof}
	
	\begin{lemma}\label{lem:D-upper}
		There is a $0$-$1$ selection matrix $R$ such that
		\[
		D=R\,H_r(u)\,R^T.
		\]
		It follows that
		\[
		\rank D\le \rank H_r(u)=\Delta_{r-2}(u).
		\]
	\end{lemma}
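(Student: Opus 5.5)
Take $R$ to be the $0$-$1$ reduction matrix of \eqref{eq:R-definition}, and compare $D$ and $R\,H_r(u)\,R^T$ entry by entry. Fix two remainder monomials $Y,Y'\in\RemSet_r(v)$. By definition, $D_{Y,Y'}=\langle \ev(Y),\ev(Y')\rangle$, because $D$ is the lower-right block of $W_r(v)$ and the one-sided permutation in \eqref{eq:block-decomposition} does not touch the remainder columns. Each row of $R$ has exactly one nonzero entry, a $1$ in the column indexed by $\Red(Y)$. Hence
\[
\bigl(R\,H_r(u)\,R^T\bigr)_{Y,Y'}=H_r(u)_{\Red(Y),\Red(Y')}=\bigl\langle \ev(\Red(Y)),\ev(\Red(Y'))\bigr\rangle .
\]

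The next step is to check that this product is well defined, which means checking that $\Red(Y)\in\mathcal M_r(u)$. By Lemma~\ref{lem:remainder-shape}, every $Y\in\RemSet_r(v)$ has $z_0$-exponent $0$ and some tail exponent greater than $\Qm$. So $\eta(Y)$ is defined, and dividing by $z_{\eta(Y)}^{\Qm}$ gives a monomial of degree $v-\Qm=u$ in $z_1,\dots,z_r$ with nonnegative exponents. Lemma~\ref{lem:red-eval} then gives $\ev(Y)=\ev(\Red(Y))$ and $\ev(Y')=\ev(\Red(Y'))$. Substituting into the display above yields $D_{Y,Y'}=(R\,H_r(u)\,R^T)_{Y,Y'}$ for all $Y,Y'$, so $D=R\,H_r(u)\,R^T$.

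For the rank bound, apply Lemma~\ref{lem:repeat-rank} with the left factor $R$ and the right factor $R^T$. This gives $\rank D\le \rank H_r(u)$, and \eqref{eq:Hr-rank} identifies the right-hand side as $\Delta_{r-2}(u)$.

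I do not expect a genuine obstacle. The only point needing care is bookkeeping: the lower-right block $D$ is unaffected by the column permutation $\Pi$, so the entries of $D$ are honest symmetric Gram entries indexed by remainder monomials on both sides. The statement also calls $R$ a ``selection'' matrix, but it need not be injective on rows, since distinct $Y$ may share a reduction and give repeated rows. This is harmless, because Lemma~\ref{lem:repeat-rank} handles repeated rows.
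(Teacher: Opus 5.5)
Your proposal is correct and follows the paper's proof essentially verbatim: take $R$ to be the reduction matrix, compare entries using $\ev(Y)=\ev(\Red(Y))$ from Lemma~\ref{lem:red-eval}, and bound the rank via Lemma~\ref{lem:repeat-rank} together with \eqref{eq:Hr-rank}. Your extra checks are bookkeeping that the paper leaves implicit: that $\Red(Y)\in\mathcal M_r(u)$ (via Lemma~\ref{lem:remainder-shape}), and that $D$ is untouched by the one-sided permutation $\Pi$.
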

	
	\begin{proof}
		For remainder indices $Y,Y'$, each row of $R$ has a single $1$, so
		\begin{align*}
			(RH_r(u)R^T)_{Y,Y'}
			&=\sum_{M,M'\in\mathcal M_r(u)} R_{Y,M}\,H_r(u)_{M,M'}\,R_{Y',M'}\\
			&=H_r(u)_{\Red(Y),\Red(Y')}\\
			&=\bigl\langle \ev(\Red(Y)),\ev(\Red(Y'))\bigr\rangle.
		\end{align*}
		By Lemma~\ref{lem:red-eval}, this equals
		\[
		\langle \ev(Y),\ev(Y')\rangle=D_{Y,Y'}.
		\]
			It follows that $D=R H_r(u)R^T$. If two different remainder monomials have the same reduction, the corresponding rows of $R$ are equal, and $R$ repeats the associated row and column of $H_r(u)$. Lemma~\ref{lem:repeat-rank} and Lemma~\ref{lem:Hr-support} hence give
			\[
			\rank D\le \rank H_r(u)=\Delta_{r-2}(u).
			\]
	\end{proof}

	\begin{lemma}\label{lem:D-lower}
		The principal submatrix of $D$ indexed by $\Lift(\ActSet_{r-2}(u))$ is $S_{r-2}^{\mathrm{sym}}(u)$. It follows that
		\[
		\rank D\ge \Delta_{r-2}(u).
		\]
	\end{lemma}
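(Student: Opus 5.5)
The plan is to combine Corollary~\ref{cor:lower-embedding} with Corollary~\ref{cor:lift-remainder}, and then use the fact that a rank cannot drop when passing to a principal submatrix of a larger matrix.

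First, by Corollary~\ref{cor:lift-remainder}, the lifted lower index set $\Lift(\ActSet_{r-2}(u))$ is contained in $\RemSet_r(v)$. It is therefore a subset of the index set of the lower-right block $D$ in \eqref{eq:block-decomposition}. Recall that the one-sided permutation $\Pi$ acts only on the top columns, so $D$ is the unpermuted principal block of $S_r^{\mathrm{sym}}(v)$ on $\RemSet_r(v)\times\RemSet_r(v)$. Any principal submatrix of $D$ is thus a principal submatrix of $S_r^{\mathrm{sym}}(v)$ with index set inside $\RemSet_r(v)$.

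Second, I apply Corollary~\ref{cor:lower-embedding}. The restriction of $S_r^{\mathrm{sym}}(v)$ to $\Lift(\ActSet_{r-2}(u))\times\Lift(\ActSet_{r-2}(u))$ equals $S_{r-2}^{\mathrm{sym}}(u)$ entry by entry, when rows and columns are ordered by the lower order. Combining this with the first step gives
\[
D\big|_{\Lift(\ActSet_{r-2}(u))\times \Lift(\ActSet_{r-2}(u))}=S_{r-2}^{\mathrm{sym}}(u),
\]
possibly after a simultaneous permutation of rows and columns. Such a permutation changes neither the entries up to ordering nor the rank. This proves the first claim.

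Third, the rank of a submatrix never exceeds the rank of the whole matrix. Hence
\[
\rank D\ge \rank S_{r-2}^{\mathrm{sym}}(u)=\Delta_{r-2}(u),
\]
where the last equality is the definition $\Delta_{r-2}(u)=\rank S_{r-2}^{\mathrm{sym}}(u)$, made well defined by Proposition~\ref{prop:ambient-independence}.

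The only point needing care is the identification of indices. The lower block $S_{r-2}^{\mathrm{sym}}(u)$ lives in the variables $z_2,\dots,z_r$, which are the last $r-1$ active variables. Applying Corollary~\ref{cor:support-theorem} with $r-2$ in place of $r$ inside the local $r$-variable frame $z_0,\dots,z_r$ places the lower support on exactly these variables. I would check this explicitly: padding with $z_0^0z_1^0$ and then lifting produces degree-$v$ monomials whose Gram entries, by Proposition~\ref{prop:rightmost-lift}, coincide with the lower ones. Once this is settled, the statement reduces to the submatrix-rank inequality. Together with Lemma~\ref{lem:D-upper}, it then gives $\rank D=\Delta_{r-2}(u)$.
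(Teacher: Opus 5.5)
Your proof is correct, but it reaches the principal-submatrix identity through a different key step than the paper does.

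\textbf{Your route.} You go straight through Corollary~\ref{cor:lower-embedding}. The permutation $\Pi$ in \eqref{eq:block-decomposition} acts only on the top columns, so $D$ is the unpermuted principal block of $S_r^{\mathrm{sym}}(v)$ on $\RemSet_r(v)$. Corollary~\ref{cor:lift-remainder} places $\Lift(\ActSet_{r-2}(u))$ inside $\RemSet_r(v)$. Hence the restriction of $D$ is literally the restriction of $S_r^{\mathrm{sym}}(v)$ that Corollary~\ref{cor:lower-embedding} computes.

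\textbf{The paper's route.} The paper works through the factorization $D=R\,H_r(u)\,R^{T}$ from Lemma~\ref{lem:D-upper}. It uses $\Red(\Lift(L))=L$ for every lower index $L$. This holds because the lifted variable $z_{\rho(L)}$ is the rightmost one with exponent exceeding $\Qm$, so $\eta(\Lift(L))=\rho(L)$. Consequently $R$ restricts to the identity on the lifted rows, which gives
$D\big|_{\Lift(\ActSet_{r-2}(u))\times\Lift(\ActSet_{r-2}(u))}=H_r(u)\big|_{\ActSet_{r-2}(u)\times\ActSet_{r-2}(u)}$.
The paper then invokes Lemma~\ref{lem:Hr-support}.

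\textbf{Comparison.} Both arguments rest on the same evaluation identity: $\ev\circ\Lift=\ev$, or equivalently $\ev\circ\Red=\ev$ on the remainder.
\begin{itemize}
\item Your route is shorter and needs neither $R$ nor $H_r(u)$. It is in fact the argument the paper itself uses for $D_E$ in Corollary~\ref{cor:principal-block-recursion}.
\item The paper's route reads the upper and lower bounds off the same factorization. Here $\Lift$ acts as a section of $\Red$ over the active lower monomials, so the lifted rows form one non-repeated copy of the nonzero block of $H_r(u)$ inside $D$. This makes it transparent why the two bounds in Proposition~\ref{prop:D-rank} coincide.
\end{itemize}
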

	
	\begin{proof}
		By Corollary~\ref{cor:lift-remainder}, the lifted lower index set lies inside $\RemSet_r(v)$, so it indexes a principal submatrix of $D$. Moreover, for every lower-index monomial $L$, we have
		\[
			\Red(\Lift(L))=L.
		\]
			It follows that on rows and columns indexed by $\Lift(\ActSet_{r-2}(u))$, the reduction matrix $R$ restricts to the identity on the corresponding lower-index columns of $H_r(u)$. Therefore
			\[
				D\big|_{\Lift(\ActSet_{r-2}(u))\times \Lift(\ActSet_{r-2}(u))}
				=
				H_r(u)\big|_{\ActSet_{r-2}(u)\times \ActSet_{r-2}(u)}.
			\]
			By Lemma~\ref{lem:Hr-support},
			\[
				H_r(u)\big|_{\ActSet_{r-2}(u)\times \ActSet_{r-2}(u)}=S_{r-2}^{\mathrm{sym}}(u).
			\]
			It follows that
			\[
				D\big|_{\Lift(\ActSet_{r-2}(u))\times \Lift(\ActSet_{r-2}(u))}=S_{r-2}^{\mathrm{sym}}(u),
		\]
		and hence
		\[
			\rank D\ge \rank S_{r-2}^{\mathrm{sym}}(u)=\Delta_{r-2}(u).
		\]
	\end{proof}
	
	\begin{proposition}\label{prop:D-rank}
		In the block decomposition \eqref{eq:block-decomposition},
		\[
		\rank D=\Delta_{r-2}(u).
		\]
	\end{proposition}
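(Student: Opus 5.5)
The equality follows by combining the two one-sided bounds already established for the remainder block, so no new machinery is needed. First, Lemma~\ref{lem:D-upper} shows that
\[
D=R\,H_r(u)\,R^{T},
\]
where $R$ is the $0$-$1$ reduction matrix from \eqref{eq:R-definition}. By Lemma~\ref{lem:repeat-rank} and the rank identity \eqref{eq:Hr-rank} of Lemma~\ref{lem:Hr-support}, this gives the upper bound
\[
\rank D\le \Delta_{r-2}(u).
\]
Second, Lemma~\ref{lem:D-lower} shows that the principal submatrix of $D$ indexed by $\Lift(\ActSet_{r-2}(u))$ equals $S_{r-2}^{\mathrm{sym}}(u)$, so
\[
\rank D\ge \rank S_{r-2}^{\mathrm{sym}}(u)=\Delta_{r-2}(u).
\]
Together the two inequalities give $\rank D=\Delta_{r-2}(u)$.

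Since the argument reduces to this sandwich, the remaining work is to check that both cited lemmas apply in exactly the setting of the proposition.

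\begin{itemize}
\item \emph{Indices of $D$.} The rows and columns of $D$ are indexed by $\RemSet_r(v)$, and $\Lift(\ActSet_{r-2}(u))$ is contained in this index set by Corollary~\ref{cor:lift-remainder}.
\item \emph{Well-defined reduction.} By Lemma~\ref{lem:remainder-shape}, every remainder monomial has $z_0$-exponent $0$ and some tail exponent greater than $\Qm$. Hence $\Red$ is defined on all of $\RemSet_r(v)$, and Lemma~\ref{lem:red-eval} applies, so $D=RH_r(u)R^{T}$ holds entrywise.
\item \emph{Reduction undoes the lift.} We need $\Red(\Lift(L))=L$ for every $L\in\ActSet_{r-2}(u)$.
\item \emph{Same ordering.} The identification of $H_r(u)$ restricted to $\ActSet_{r-2}(u)$ with $S_{r-2}^{\mathrm{sym}}(u)$ must use the same ordering of indices as in Corollary~\ref{cor:lower-embedding}.
\end{itemize}

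The step most likely to need care is $\Red(\Lift(L))=L$.

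\begin{itemize}
\item $\Lift$ adds $\Qm$ to the exponent of the rightmost variable present, $z_{\rho(L)}$.
\item $\Red$ divides by $z_{\eta}^{\Qm}$, where $\eta$ is the largest index whose exponent exceeds $\Qm$.
\item After lifting, the exponent of $z_{\rho(L)}$ is at least $\Qm+1$, and every variable to its right has exponent $0$. Hence $\eta(\Lift(L))=\rho(L)$.
\end{itemize}

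So the reduction removes exactly the added factor and the identity holds.
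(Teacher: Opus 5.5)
Your proposal is correct and is the same argument as the paper's: the upper bound from $D=R\,H_r(u)R^{T}$ (Lemma~\ref{lem:D-upper}) and the lower bound from the embedded principal block $S_{r-2}^{\mathrm{sym}}(u)$ (Lemma~\ref{lem:D-lower}). Your explicit check that $\eta(\Lift(L))=\rho(L)$, hence $\Red(\Lift(L))=L$, supplies a step that the paper only asserts inside the proof of Lemma~\ref{lem:D-lower}.
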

	
	\begin{proof}
		Lemma~\ref{lem:D-upper} gives the upper bound, and Lemma~\ref{lem:D-lower} gives the lower bound, so equality follows.
	\end{proof}
	
	We can now prove the global recursion.
	
	\begin{theorem}\label{thm:interval-recursion}
		Let $r\ge 2$ and $v\in I_r$. Then
		\[
		\Delta_r(v)=A_r(v)+\Delta_{r-2}(v-\Qm).
		\]
	\end{theorem}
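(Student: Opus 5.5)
The theorem follows by assembling the results already established. By Proposition~\ref{prop:ambient-independence}, $\Delta_r(v)=\rank S_r^{\mathrm{sym}}(v)$ is independent of the ambient dimension $m\ge r+1$. We may therefore work entirely in the local variables $z_0,\dots,z_r$. The plan has three steps.

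First, we split the index set of $S_r^{\mathrm{sym}}(v)$ as $\ActSet_r(v)=\TopSet_r(v)\sqcup\RemSet_r(v)$. Applying the one-sided column permutation $\Pi$ on the top layer gives the working block $W_r(v)$ of \eqref{eq:block-decomposition}. Since $\Pi$ is invertible, $\rank W_r(v)=\rank S_r^{\mathrm{sym}}(v)$.

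Second, we apply Proposition~\ref{prop:block-elimination}. It rests on three ingredients:
\begin{itemize}[leftmargin=2em]
\item the invertibility of $P=P_r(v)$ with $\rank P=A_r(v)$ (Theorem~\ref{thm:top-full-rank});
\item the boundary supports of $B$ and $C$ (Lemmas~\ref{lem:B-support} and~\ref{lem:C-support});
\item the vanishing $CP^{-1}B=0$ (Lemma~\ref{lem:schur-zero}).
\end{itemize}
Together these give $\rank S_r^{\mathrm{sym}}(v)=A_r(v)+\rank D$.

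Third, we identify $\rank D$. Proposition~\ref{prop:D-rank} gives $\rank D=\Delta_{r-2}(u)$ with $u=v-\Qm$. Its upper bound comes from the factorization $D=R\,H_r(u)\,R^T$ together with $\rank H_r(u)=\Delta_{r-2}(u)$ (Lemma~\ref{lem:D-upper}). Its lower bound comes from the exact embedding of $S_{r-2}^{\mathrm{sym}}(u)$ as the principal submatrix of $D$ on $\Lift(\ActSet_{r-2}(u))$ (Corollary~\ref{cor:lift-remainder} and Lemma~\ref{lem:D-lower}). Substituting gives $\Delta_r(v)=A_r(v)+\Delta_{r-2}(v-\Qm)$.

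The one point requiring care is that $\Delta_{r-2}(u)$ is well defined, i.e.\ that $u\in I_{r-2}$. If $r\Qm/2<v<(r+1)\Qm/2$, then subtracting $\Qm$ gives $(r-2)\Qm/2<u<(r-1)\Qm/2$. Moreover, $u>0$ because $r\ge2$. For the lower block, the ambient variables $z_1,\dots,z_r$ number $r=(r-2)+2\ge (r-2)+1$. So Corollary~\ref{cor:support-theorem} and Proposition~\ref{prop:ambient-independence} apply to $H_r(u)$, and $\Delta_{r-2}(u)$ is the same quantity as defined in any ambient dimension $m'\ge r-1$.

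The main obstacle would have been the Schur-complement vanishing $CP^{-1}B=0$. That step depends on the disjointness of the upper column layer $|b|=U$ from the lower column boundary $|b|=L$, which holds because $U-L=\beta>0$. This is already settled in Lemma~\ref{lem:schur-zero}, so the proof of the theorem reduces to citing it.
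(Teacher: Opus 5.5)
Your proposal is correct and is essentially the paper's proof: Proposition~\ref{prop:block-elimination} gives $\rank S_r^{\mathrm{sym}}(v)=A_r(v)+\rank D$ through the vanishing Schur complement, and Proposition~\ref{prop:D-rank} gives $\rank D=\Delta_{r-2}(v-\Qm)$. Your explicit check that $u=v-\Qm\in I_{r-2}$, and that the lower block in $z_1,\dots,z_r$ has enough variables, is a harmless addition; the paper only asserts this when it introduces $u$.
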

	
	\begin{proof}
		Put $u=v-\Qm$. By Proposition~\ref{prop:block-elimination},
		\[
		\Delta_r(v)=\rank S_r^{\mathrm{sym}}(v)=A_r(v)+\rank D.
		\]
		By Proposition~\ref{prop:D-rank}, $\rank D=\Delta_{r-2}(u)$, so the recursion follows.
	\end{proof}
	
	\subsection{Recursive full-rank principal blocks}
	
	The previous theorem also gives the recursive principal-block statement needed later for full-rank principal blocks.
	
	\begin{corollary}\label{cor:principal-block-recursion}
		Let $E_{r-2}(u)\subseteq \ActSet_{r-2}(u)$ index a principal submatrix of $S_{r-2}^{\mathrm{sym}}(u)$ of rank $\Delta_{r-2}(u)$. Then the set
		\begin{equation}\label{eq:principal-block-Er-definition}
		E_r(v):=\TopSet_r(v)\cup \Lift(E_{r-2}(u))
		\end{equation}
		indexes a principal submatrix of $S_r^{\mathrm{sym}}(v)$ of rank $\Delta_r(v)$.
	\end{corollary}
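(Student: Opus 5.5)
The plan is to redo the block elimination of Proposition~\ref{prop:block-elimination}, this time restricted to the index set $E_r(v)$. Order $E_r(v)$ with $\TopSet_r(v)$ first, followed by $E':=\Lift(E_{r-2}(u))$. By Corollary~\ref{cor:lift-remainder}, $E'\subseteq\RemSet_r(v)$, so $E_r(v)$ really is a disjoint union of the whole top layer and a subset of the remainder. The principal submatrix of $S_r^{\mathrm{sym}}(v)$ on $E_r(v)$ then has block form $\begin{pmatrix}T & B_0'\\ B_0'^{\mathsf T} & D'\end{pmatrix}$. Here $B_0'$ consists of the columns of $B_0$ indexed by $E'$, and $D'$ is the principal submatrix of $D$ on $E'$.

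Next apply the same one-sided column permutation $\Pi$ to the top columns. This gives the working block $W'=\begin{pmatrix}P & B'\\ C' & D'\end{pmatrix}$, where $P=P_r(v)$ and $B',C'$ are column and row restrictions of $B,C$. The support statements survive restriction to $E'$, since they hold entrywise. So Lemma~\ref{lem:B-support} and Lemma~\ref{lem:C-support} still apply: the nonzero rows of $B'$ lie in the upper row boundary, and the nonzero columns of $C'$ lie in the lower column boundary. The argument of Lemma~\ref{lem:schur-zero} then gives $C'P^{-1}B'=0$; in fact $C'P^{-1}B'$ is just a submatrix of $CP^{-1}B=0$. Eliminating with the restricted matrices $L,R$ from \eqref{eq:LR-definition} yields
\[
\rank W'=\rank P+\rank D'=A_r(v)+\rank D'.
\]

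It remains to compute $\rank D'$. For $L_1,L_2\in E_{r-2}(u)$, Proposition~\ref{prop:rightmost-lift} (equivalently, the identity in the proof of Lemma~\ref{lem:D-lower}) gives $\langle\ev(\Lift(L_1)),\ev(\Lift(L_2))\rangle=\langle\ev(L_1),\ev(L_2)\rangle$. Hence $D'$ equals the principal submatrix of $S_{r-2}^{\mathrm{sym}}(u)$ indexed by $E_{r-2}(u)$, and its rank is $\Delta_{r-2}(u)$ by hypothesis. Theorem~\ref{thm:interval-recursion} then gives
\[
\rank W'=A_r(v)+\Delta_{r-2}(u)=\Delta_r(v).
\]
A column permutation does not change rank, so the principal submatrix on $E_r(v)$ has rank $\Delta_r(v)$.

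The only delicate point is checking that restricting to a subset of $\RemSet_r(v)$ does not spoil the Schur-complement vanishing. I expect this step to be immediate, because $C'P^{-1}B'$ is literally a submatrix of $CP^{-1}B$. The matrix $P$ and its inverse are untouched, since the entire top layer is kept; this is exactly why $\TopSet_r(v)$ must be included in full, as noted in Remark~\ref{rem:deleted-monomials}.
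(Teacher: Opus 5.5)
Your proposal is correct and follows the paper's proof essentially step for step. You restrict to $\TopSet_r(v)\cup\Lift(E_{r-2}(u))$, apply the complement permutation to the top columns, check that the restricted Schur complement vanishes, identify $D_E$ with the lower principal block through the rightmost lift, and conclude with Theorem~\ref{thm:interval-recursion}. Your remark that $C'P^{-1}B'$ is literally the $E'\times E'$ submatrix of $CP^{-1}B=0$ is a slightly cleaner route to the vanishing than the paper's rerun of the support argument from Lemma~\ref{lem:schur-zero}.
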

	
	\begin{proof}
		By Corollary~\ref{cor:lift-remainder}, the lifted set $\Lift(E_{r-2}(u))$ lies in $\RemSet_r(v)$. Consider the principal submatrix
		\[
			S_r^{\mathrm{sym}}(v)\big|_{(\TopSet_r(v)\cup \Lift(E_{r-2}(u)))\times (\TopSet_r(v)\cup \Lift(E_{r-2}(u)))},
		\]
		and apply to its top columns the complement permutation used in \eqref{eq:block-decomposition}. Since this is only a column permutation, it does not change the rank. The resulting working block has the form
		\begin{equation}\label{eq:restricted-block-decomposition}
			W_{r,E}(v)=
			\begin{pmatrix}
				P & B_E\\
				C_E & D_E
			\end{pmatrix}.
		\end{equation}
		
		The upper-left block remains $P$. Also, since
		\[
			\Lift(E_{r-2}(u))\subseteq \Lift(\ActSet_{r-2}(u)),
		\]
		By Corollary~\ref{cor:lower-embedding},
		\[
			D_E
			=
			S_r^{\mathrm{sym}}(v)\big|_{\Lift(E_{r-2}(u))\times \Lift(E_{r-2}(u))}
			=
			S_{r-2}^{\mathrm{sym}}(u)\big|_{E_{r-2}(u)\times E_{r-2}(u)}.
		\]
		It follows that
		\[
			\rank D_E=\Delta_{r-2}(u).
		\]
		
		It remains to show that the restricted Schur complement still vanishes. Since $B_E$ and $C_E$ are obtained from $B$ and $C$ by restriction to lifted remainder indices, Lemmas~\ref{lem:B-support} and \ref{lem:C-support} place the nonzero rows of $B_E$ in the upper row boundary $|a|=U$ and the nonzero columns of $C_E$ in the lower column boundary $|b|=L$. Write, as in Lemma~\ref{lem:schur-zero},
		\[
			P=D_0+E,
			\qquad
			P^{-1}=D_0^{-1}-D_0^{-1}ED_0^{-1},
		\]
		where $D_0$ is block diagonal and $E$ has support only in the corner block from the upper row boundary to the lower column boundary. Then the rows of $D_0^{-1}B_E$ are read on the column side, and only the upper column layer $|b|=U$ may remain nonzero. Therefore
		\[
			ED_0^{-1}B_E=0.
		\]
		It follows that
		\[
			P^{-1}B_E=D_0^{-1}B_E,
		\]
		and the nonzero rows of $P^{-1}B_E$ lie in the upper column layer $|b|=U$. Since the nonzero columns of $C_E$ lie in the disjoint lower column boundary $|b|=L$, we obtain
		\[
			C_E P^{-1} B_E=0.
		\]
		
		The block elimination in Proposition~\ref{prop:block-elimination} now gives
		\[
			\rank W_{r,E}(v)=\rank P+\rank D_E
			=
			A_r(v)+\Delta_{r-2}(u)
			=
			\Delta_r(v),
		\]
		by Theorem~\ref{thm:interval-recursion}. Since $W_{r,E}(v)$ differs from the restricted principal submatrix of $S_r^{\mathrm{sym}}(v)$ only by a column permutation, the latter also has rank $\Delta_r(v)$.
	\end{proof}
	
	\begin{corollary}\label{cor:recursive-principal-block}
		Define the recursive principal-block family by
		\begin{equation}\label{eq:Er-definition}
		\begin{aligned}
			E_0(v)&:=\{z_0^v\}, && v\in I_0,\\
			E_1(v)&:=\TopSet_1(v), && v\in I_1,\\
			E_r(v)&:=\TopSet_r(v)\cup \Lift\bigl(E_{r-2}(v-\Qm)\bigr), && r\ge 2,\ v\in I_r.
		\end{aligned}
		\end{equation}
		Then $E_r(v)\subseteq \ActSet_r(v)$ indexes a principal submatrix of $S_r^{\mathrm{sym}}(v)$ of rank $\Delta_r(v)$. In particular,
		\[
		|E_r(v)|=\Delta_r(v).
		\]
	\end{corollary}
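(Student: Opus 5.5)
The plan is strong induction on $r$, stepping by two, so that each parity class of $r$ is handled by its own chain and Corollary~\ref{cor:principal-block-recursion} does the work at each step. We prove simultaneously that $E_r(v)\subseteq\ActSet_r(v)$, that the principal submatrix of $S_r^{\mathrm{sym}}(v)$ indexed by $E_r(v)$ has rank $\Delta_r(v)$, and that $|E_r(v)|=\Delta_r(v)$.

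\emph{Base cases.} For $r=0$, Corollary~\ref{cor:prototype-I0} gives $\ActSet_0(v)=\{z_0^v\}$ and $S_0^{\mathrm{sym}}(v)=[1]$. Hence $E_0(v)$ indexes a rank-one block and $|E_0(v)|=1=\Delta_0(v)$.

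For $r=1$, we have $\TopSet_1(v)\subseteq\ActSet_1(v)$. Indeed, each $M_a$ pairs nontrivially with $M_{\bar a}$, since Theorem~\ref{thm:top-full-rank} shows that the diagonal entry of $P_1(v)$ is $(-1)^1\neq0$. The same theorem shows that the principal block on $\TopSet_1(v)$ has rank $A_1(v)$. By Corollary~\ref{cor:prototype-I1}, $A_1(v)=\Delta_1(v)$, and $|\TopSet_1(v)|=A_1(v)$ by definition.

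\emph{Inductive step.} Let $r\ge2$ and put $u=v-\Qm\in I_{r-2}$. By the induction hypothesis, $E_{r-2}(u)\subseteq\ActSet_{r-2}(u)$ indexes a principal submatrix of rank $\Delta_{r-2}(u)$. Corollary~\ref{cor:principal-block-recursion} then says directly that $E_r(v)=\TopSet_r(v)\cup\Lift(E_{r-2}(u))$ indexes a principal submatrix of $S_r^{\mathrm{sym}}(v)$ of rank $\Delta_r(v)$.

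Containment in $\ActSet_r(v)$ follows from two facts already established: $\TopSet_r(v)\subseteq\ActSet_r(v)$ by the nonzero diagonal of $P_r(v)$, and $\Lift(\ActSet_{r-2}(u))\subseteq\ActSet_r(v)$ by Corollary~\ref{cor:lower-embedding}.

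\emph{Cardinality.} The union defining $E_r(v)$ is disjoint by Corollary~\ref{cor:lift-remainder}, and $\Lift$ is injective by Proposition~\ref{prop:rightmost-lift}. Therefore
\[
|E_r(v)|=A_r(v)+|E_{r-2}(u)|=A_r(v)+\Delta_{r-2}(u)=\Delta_r(v),
\]
where the last equality is Theorem~\ref{thm:interval-recursion}.

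This cardinality identity also shows that the principal block has full rank, because its size equals its rank. That consistency check is the only point needing care. Without it, one would need an independent argument that $|E_{r-2}(u)|$ equals the rank of its block, which the induction hypothesis supplies. No step is a genuine obstacle: everything reduces to the previously proved corollaries, with the bookkeeping of disjointness and injectivity as the only checks.
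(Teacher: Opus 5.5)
Your proposal is correct and takes essentially the same route as the paper. It uses the base cases $r=0,1$ from Corollary~\ref{cor:prototype-I0} and Theorem~\ref{thm:top-full-rank}, then inducts in steps of two via Corollary~\ref{cor:principal-block-recursion}, and it also spells out two points the paper leaves implicit: the containment $E_r(v)\subseteq\ActSet_r(v)$, and the count $|E_r(v)|=A_r(v)+|E_{r-2}(u)|=\Delta_r(v)$ obtained from the disjointness in Corollary~\ref{cor:lift-remainder} and the injectivity of $\Lift$.
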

	
	\begin{proof}
		The statement for $r=0$ follows directly, and for $r=1$ it is Theorem~\ref{thm:top-full-rank} since $\Delta_1(v)=A_1(v)$. The general case is obtained by induction from Corollary~\ref{cor:principal-block-recursion}.
	\end{proof}
	
	\begin{remark}\label{rem:rightmost-lift-principal-block}
		Corollary~\ref{cor:recursive-principal-block} gives the principal block corresponding to the hull-dimension formula. The lower rows and columns are the rightmost lifts of the lower principal block. Therefore they are not the simple uniform translate $z_r^{\Qm}E_{r-2}(u)$; they form the principal block produced by the rightmost lift, which gives the same evaluation as the lower interval and is automatically disjoint from the top layer.
	\end{remark}
	
	\begin{corollary}\label{cor:algorithmic}
		Let $v\in I_r$. Repeated application of Theorem~\ref{thm:interval-recursion} computes $\Delta_r(v)$ after $\lfloor r/2\rfloor$ descents to the base intervals $I_0$ and $I_1$. Moreover, Corollary~\ref{cor:recursive-principal-block} builds a full-rank principal block $E_r(v)$.
	\end{corollary}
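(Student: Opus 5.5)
The plan is an induction on $r$ that simply unwinds Theorem~\ref{thm:interval-recursion} and Corollary~\ref{cor:recursive-principal-block}. For the first claim, that $\Delta_r(v)$ is obtained after $\lfloor r/2\rfloor$ descents, I would write $\ell=\lfloor r/2\rfloor$ and $\epsilon=r-2\ell$. The first step is to check that the descent stays inside the admissible intervals: if $v\in I_r$ with $r\ge2$, then $v-\Qm\in I_{r-2}$. This follows by subtracting $\Qm=2\Qm/2$ from both endpoints of $I_r$. Iterating, $v-i\Qm\in I_{r-2i}$ for $0\le i\le \ell$, and the last index $r-2\ell=\epsilon$ is $0$ or $1$. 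The condition $m\ge r+1$ implies $m\ge r-2i+1$, so every intermediate $\Delta_{r-2i}$ is well defined by Proposition~\ref{prop:ambient-independence}.

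Next I would apply Theorem~\ref{thm:interval-recursion} $\ell$ times, using induction on $\ell$. This gives
\[
\Delta_r(v)=\sum_{i=0}^{\ell-1}A_{r-2i}(v-i\Qm)+\Delta_\epsilon(v-\ell\Qm).
\]
The base value comes from Corollary~\ref{cor:prototype-I0} when $\epsilon=0$, where $\Delta_0=1=A_0$, and from Corollary~\ref{cor:prototype-I1} when $\epsilon=1$, where $\Delta_1=A_1$. Reindexing with $i\mapsto \ell-i$ then yields the abstract's sum $\sum_{i=0}^{\ell}A_{2i+\epsilon}(v-(\ell-i)\Qm)$. Each step is a single application of the recursion, so the count of descents is exactly $\ell$.

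For the second claim, I would induct on $r$ with step $2$ that $E_r(v)\subseteq\ActSet_r(v)$ indexes a principal submatrix of rank $\Delta_r(v)$. The base cases are $r=0$, a $1\times1$ block $[1]$, and $r=1$, the top layer of rank $A_1=\Delta_1$ by Theorem~\ref{thm:top-full-rank}. The inductive step is Corollary~\ref{cor:principal-block-recursion} applied with $E_{r-2}(v-\Qm)$. This recovers Corollary~\ref{cor:recursive-principal-block}, and the construction is explicit because $\TopSet_r$ and $\Lift$ are explicit.

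The only point needing care is that $|E_r(v)|=\Delta_r(v)$, which makes the block full rank rather than merely of the correct rank. I would argue it by induction using disjointness. By Corollary~\ref{cor:lift-remainder}, $\Lift(E_{r-2})$ is disjoint from $\TopSet_r$, and $\Lift$ is injective by Proposition~\ref{prop:rightmost-lift}. Together with $|\TopSet_r|=A_r$, the cardinalities satisfy the same recursion as $\Delta_r$. I expect no serious obstacle here; the main thing is the bookkeeping of intervals and base cases.
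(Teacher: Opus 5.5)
Your proposal is correct and follows the paper's argument: you unwind Theorem~\ref{thm:interval-recursion} along $v\mapsto v-\Qm$, $I_r\to I_{r-2}$, for $\lfloor r/2\rfloor$ steps down to $I_0$ or $I_1$, and you build $E_r(v)$ by induction using Corollary~\ref{cor:principal-block-recursion}. Your disjointness-and-injectivity count for $|E_r(v)|=\Delta_r(v)$ is a useful addition, since the paper only asserts this equality with ``in particular.''
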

	
	\begin{proof}
		Each application of Theorem~\ref{thm:interval-recursion} lowers the interval index by $2$, so after $\lfloor r/2\rfloor$ steps one reaches either $I_0$ or $I_1$. Corollary~\ref{cor:recursive-principal-block} applies this descent to the principal blocks.
	\end{proof}
	
	\section{Explicit interval formulas and complete hull dimensions}
	
	The recursion ends at the base intervals recorded in Corollaries~\ref{cor:prototype-I0} and \ref{cor:prototype-I1}:
	\begin{equation}\label{eq:base-intervals}
		\Delta_0(v)=1\qquad (0<v<\Qm/2),
	\end{equation}
	\begin{equation}\label{eq:base-intervals-odd}
		\Delta_1(v)=2v-\Qm+1=A_1(v)\qquad (\Qm/2<v<\Qm).
	\end{equation}
	For convenience, put $A_0(v):=1$.
	
	\begin{corollary}\label{cor:closed-forms}
		Let $\ell\ge 1$.
		
		For $v\in I_{2\ell}$,
		\[
		\Delta_{2\ell}(v)
		=A_{2\ell}(v)+A_{2\ell-2}(v-\Qm)+\cdots+A_2\bigl(v-(\ell-1)\Qm\bigr)+1.
		\]
		For $v\in I_{2\ell+1}$,
		\[
		\Delta_{2\ell+1}(v)
		=A_{2\ell+1}(v)+A_{2\ell-1}(v-\Qm)+\cdots+A_1\bigl(v-\ell\Qm\bigr).
		\]
	\end{corollary}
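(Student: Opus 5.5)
The plan is to unroll Theorem~\ref{thm:interval-recursion} by induction on $\ell$, separately for the even and odd parity classes, and to terminate at the base intervals \eqref{eq:base-intervals} and \eqref{eq:base-intervals-odd}. The one point to check at each step is that the shifted arguments land in the correct intervals. If $v\in I_r$, then $v-\Qm\in I_{r-2}$, because subtracting $\Qm$ shifts both endpoints $r\Qm/2$ and $(r+1)\Qm/2$ down by exactly $2\cdot\Qm/2$. By iteration, $v-i\Qm\in I_{r-2i}$ for $0\le i\le\lfloor r/2\rfloor$. Since $\Delta_{r-2i}$ does not depend on the ambient dimension (Proposition~\ref{prop:ambient-independence}), the recursion may be applied at every stage.

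For the even case, first take $\ell=1$. Let $v\in I_2$; then $v-\Qm\in I_0$. Theorem~\ref{thm:interval-recursion} gives $\Delta_2(v)=A_2(v)+\Delta_0(v-\Qm)$, and \eqref{eq:base-intervals} gives $\Delta_0(v-\Qm)=1$, which is the claimed formula. For the inductive step, let $v\in I_{2\ell}$ with $\ell\ge2$. Theorem~\ref{thm:interval-recursion} gives $\Delta_{2\ell}(v)=A_{2\ell}(v)+\Delta_{2\ell-2}(v-\Qm)$. The induction hypothesis applies to $v-\Qm\in I_{2\ell-2}$ and expands $\Delta_{2\ell-2}(v-\Qm)$ as $\sum_{i=1}^{\ell-1}A_{2\ell-2i}(v-i\Qm)+1$. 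Combining the two yields the stated sum, whose last term $1$ equals $A_0(v-\ell\Qm)$ under the convention $A_0:=1$.

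The odd case follows the same pattern. When $\ell=1$ and $v\in I_3$, we have $\Delta_3(v)=A_3(v)+\Delta_1(v-\Qm)$, and \eqref{eq:base-intervals-odd} gives $\Delta_1(v-\Qm)=A_1(v-\Qm)$ because $v-\Qm\in I_1$. The inductive step is identical to the even one and ends at $A_1(v-\ell\Qm)$, with $v-\ell\Qm\in I_1$.

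There is no serious obstacle. The only care needed is the interval bookkeeping, so that each $A_{r-2i}$ is evaluated at an argument in $I_{r-2i}$; this matters because Proposition~\ref{prop:A-explicit} and the top-layer analysis assume $v\in I_r$. The second point is the base value $\Delta_1=A_1$, which is supplied by Corollary~\ref{cor:prototype-I1}. Reindexing the sums by $i\mapsto \ell-i$ then matches the form $\sum_{i=0}^{\ell}A_{2i+\epsilon}(v-(\ell-i)\Qm)$ stated in the abstract.
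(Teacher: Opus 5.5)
Your proposal is correct and matches the paper's proof. The paper simply iterates Theorem~\ref{thm:interval-recursion} along the even and odd chains and terminates at the base values \eqref{eq:base-intervals} and \eqref{eq:base-intervals-odd}. Your explicit induction on $\ell$, with the check that $v-i\Qm\in I_{r-2i}$ at each stage, spells out the bookkeeping the paper leaves implicit.
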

	
	\begin{proof}
		We iterate Theorem~\ref{thm:interval-recursion} along the even and odd chains and terminate with \eqref{eq:base-intervals} and \eqref{eq:base-intervals-odd}.
	\end{proof}
	
	\begin{corollary}\label{cor:explicit-A-substitution}
		Let $\ell\ge 1$.
		
		For $v\in I_{2\ell}$,
		\[
		\Delta_{2\ell}(v)
		=
		1+\sum_{s=1}^{\ell}
		\sum_{t=(s+\ell)\Qm-v}^{\,v-(\ell-s)\Qm}
		\sum_{j=0}^{2s}
		(-1)^j\binom{2s}{j}\binom{t-jq+2s-1}{2s-1}.
		\]
		
		For $v\in I_{2\ell+1}$,
		\[
		\Delta_{2\ell+1}(v)
		=
		\sum_{s=0}^{\ell}
		\sum_{t=(s+\ell+1)\Qm-v}^{\,v-(\ell-s)\Qm}
		\sum_{j=0}^{2s+1}
		(-1)^j\binom{2s+1}{j}\binom{t-jq+2s}{2s}.
		\]
	\end{corollary}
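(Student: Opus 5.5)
This corollary is a substitution into Corollary~\ref{cor:closed-forms}. We insert the explicit form of each $A_{r'}$ from Proposition~\ref{prop:A-explicit} and re-index. The work is checking that every argument lands in the correct interval and identifying the summation limits.

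\emph{Even case.} Let $v\in I_{2\ell}$. Corollary~\ref{cor:closed-forms} gives
\[
\Delta_{2\ell}(v)=1+\sum_{s=1}^{\ell}A_{2s}\bigl(v-(\ell-s)\Qm\bigr).
\]
The term with index $s$ is the one of level $2s$, reached after $\ell-s$ descents. Each descent subtracts $\Qm$ from $v$ and $2$ from the interval index. Hence $v-(\ell-s)\Qm\in I_{2s}$, since $v\in I_{2\ell}$ means $\ell\Qm<v<(2\ell+1)\Qm/2$. Subtracting $(\ell-s)\Qm$ gives $s\Qm<v'<(2s+1)\Qm/2$, which is exactly $I_{2s}$. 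So Proposition~\ref{prop:A-explicit} applies with $r=2s\ge 2$ and $v'=v-(\ell-s)\Qm$. Its limits are $U=v'=v-(\ell-s)\Qm$ and
\[
L=2s\Qm-v'=2s\Qm-v+(\ell-s)\Qm=(s+\ell)\Qm-v.
\]
These match the displayed limits. The inner binomial sum in the corollary is the coefficient extraction of Proposition~\ref{prop:A-explicit} with $r=2s$, namely $\binom{2s}{j}\binom{t-jq+2s-1}{2s-1}$. The final summand $1$ is $\Delta_0$ from \eqref{eq:base-intervals}.

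\emph{Odd case.} Let $v\in I_{2\ell+1}$. Corollary~\ref{cor:closed-forms} gives
\[
\Delta_{2\ell+1}(v)=\sum_{s=0}^{\ell}A_{2s+1}\bigl(v-(\ell-s)\Qm\bigr),
\]
where the $s=0$ term is the base value $\Delta_1=A_1$ from \eqref{eq:base-intervals-odd}. The same shift argument shows that $v'=v-(\ell-s)\Qm\in I_{2s+1}$. Proposition~\ref{prop:A-explicit} with $r=2s+1\ge1$ then gives $U=v-(\ell-s)\Qm$ and
\[
L=(2s+1)\Qm-v+(\ell-s)\Qm=(s+\ell+1)\Qm-v.
\]
The binomial factors are $\binom{2s+1}{j}\binom{t-jq+2s}{2s}$, with $j$ running from $0$ to $2s+1$. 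For $s=0$ the formula reads $\sum_{t=(\ell+1)\Qm-v}^{v-\ell\Qm}\binom{t-jq}{0}$. With the binomial conventions of Proposition~\ref{prop:A-explicit}, this counts the integers $t$ in $[L,U]$ with $t\le \Qm$, which gives $2v'-\Qm+1=A_1(v')$. This is consistent with Corollary~\ref{cor:prototype-I1}, so no separate case is needed.

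\emph{Where care is needed.} The only delicate step is the interval bookkeeping: each shifted argument must lie in the open interval required by Proposition~\ref{prop:A-explicit}. This is the computation checked in both cases above. The boundary term $r=1$ relies on the binomial convention, since $r-1=0$ there. All remaining steps are direct substitution.
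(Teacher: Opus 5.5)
Your proposal is correct and follows the paper's own substitution argument. You set $w_s=v-(\ell-s)\Qm$, check $w_s\in I_{2s}$ (resp.\ $I_{2s+1}$), and read the limits $L=(s+\ell)\Qm-v$ (resp.\ $(s+\ell+1)\Qm-v$) and $U=w_s$ from Proposition~\ref{prop:A-explicit} inside Corollary~\ref{cor:closed-forms}. Your explicit check of the interval shift fills in a step the paper only asserts.

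Your separate $s=0$ check is consistent but not needed, since Proposition~\ref{prop:A-explicit} already covers $r=1$. Also, your displayed $s=0$ sum drops the factor $\sum_{j=0}^{1}(-1)^j\binom{1}{j}$, which is what produces the cutoff $t\le\Qm$ you describe.
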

	
	\begin{proof}
		We give the substitution in Corollary~\ref{cor:closed-forms} term by term.
		
		First suppose that $v\in I_{2\ell}$. For $1\le s\le \ell$, set
		\[
		w_s:=v-(\ell-s)\Qm.
		\]
		Then $w_s\in I_{2s}$, and Corollary~\ref{cor:closed-forms} becomes
		\[
		\Delta_{2\ell}(v)=1+\sum_{s=1}^{\ell}A_{2s}(w_s).
		\]
		Proposition~\ref{prop:A-explicit} gives
		\[
		A_{2s}(w_s)
		=
		\sum_{t=2s\Qm-w_s}^{w_s}
		\sum_{j=0}^{2s}
		(-1)^j\binom{2s}{j}\binom{t-jq+2s-1}{2s-1}.
		\]
		Since
		\[
		2s\Qm-w_s
		=
		2s\Qm-\bigl(v-(\ell-s)\Qm\bigr)
		=
		(s+\ell)\Qm-v,
		\]
		and
		\[
		w_s=v-(\ell-s)\Qm,
		\]
		we get the displayed formula for $\Delta_{2\ell}(v)$.
		
		Suppose now that $v\in I_{2\ell+1}$. For $0\le s\le \ell$, set
		\[
		w_s:=v-(\ell-s)\Qm.
		\]
		Then $w_s\in I_{2s+1}$, and Corollary~\ref{cor:closed-forms} becomes
		\[
		\Delta_{2\ell+1}(v)=\sum_{s=0}^{\ell}A_{2s+1}(w_s).
		\]
		Proposition~\ref{prop:A-explicit} yields
		\[
		A_{2s+1}(w_s)
		=
		\sum_{t=(2s+1)\Qm-w_s}^{w_s}
		\sum_{j=0}^{2s+1}
		(-1)^j\binom{2s+1}{j}\binom{t-jq+2s}{2s}.
		\]
		Using
		\[
		(2s+1)\Qm-w_s
		=
		(2s+1)\Qm-\bigl(v-(\ell-s)\Qm\bigr)
		=
		(s+\ell+1)\Qm-v,
		\]
		and
		\[
		w_s=v-(\ell-s)\Qm,
		\]
		we get the displayed formula for $\Delta_{2\ell+1}(v)$.
	\end{proof}

	We state the lower-half and full hull-dimension consequences. Let $k_{q,m}(v)$ denote the known dimension of $\PRM(q,m,v)$.
	
	\begin{corollary}\label{cor:lower-half}
		Let $m\ge 1$ and $0<v<m\Qm/2$ with $2v\not\equiv 0\pmod{\Qm}$. Choose the unique $r$ such that $v\in I_r$. Then automatically $m\ge r+1$, and
		\[
		\dim \Hull\bigl(\PRM(q,m,v)\bigr)=k_{q,m}(v)-\Delta_r(v),
		\]
		where $\Delta_r(v)$ is determined from Corollary~\ref{cor:closed-forms}.
	\end{corollary}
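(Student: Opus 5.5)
The proof is mostly bookkeeping on top of the recursion already established. I would carry it out in three steps.

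First, I would locate $v$ in a unique interval. Since $v>0$ and $2v\not\equiv 0\pmod{\Qm}$, write $2v=r\Qm+\beta$ with $r=\lfloor 2v/\Qm\rfloor\ge 0$ and $0<\beta<\Qm$. Then $r\Qm/2<v<(r+1)\Qm/2$, so $v\in I_r$. The intervals $I_r$ are pairwise disjoint, which gives uniqueness.

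Second, I would prove $m\ge r+1$. From $v<m\Qm/2$ we get $r\Qm<2v<m\Qm$, hence $r<m$. As both are integers, $m\ge r+1$. By the equivalence $I_r\subset(0,m\Qm/2)\iff m\ge r+1$ noted in the introduction, this is the same as saying $I_r$ lies in the open lower half. This is the only point where the hypothesis $v<m\Qm/2$ enters.

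Third, I would identify the rank with $\Delta_r(v)$ and read off the hull dimension. With $m\ge r+1$, the setup preceding Corollary~\ref{cor:support-theorem} applies. Equation \eqref{eq:hull-rank-G1}, which combines Proposition~\ref{prop:hull-rank} and Lemma~\ref{lem:same-row-space}, gives
\[
\rank(G_1G_1^T)=k_{q,m}(v)-\dim\Hull\bigl(\PRM(q,m,v)\bigr).
\]
Corollary~\ref{cor:support-theorem} shows that all nonzero entries of $G_1G_1^T$ lie in the principal block $S_r^{\mathrm{sym}}(v)$, so
\[
\rank(G_1G_1^T)=\rank S_r^{\mathrm{sym}}(v)=\Delta_r(v).
\]
By Proposition~\ref{prop:ambient-independence}, this quantity depends only on $(q,r,v)$ and not on $m$. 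Substituting gives $\dim\Hull=k_{q,m}(v)-\Delta_r(v)$.

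Finally, $\Delta_r(v)$ is evaluated explicitly as follows. For $r=0$, use Corollary~\ref{cor:prototype-I0}. For $r=1$, use Corollary~\ref{cor:prototype-I1}. For $r\ge 2$, apply Corollary~\ref{cor:closed-forms}, iterating Theorem~\ref{thm:interval-recursion}. That iteration is legitimate because each shift $v-i\Qm$ lies in $I_{r-2i}$, the quantities $\Delta_{r-2i}$ are ambient-independent, and $m\ge r+1\ge r-2i+1$, so every intermediate interval is covered.

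The only delicate point is the domain bookkeeping. One must check that $v-i\Qm\in I_{r-2i}$, so the recursion never lands on a boundary value with $\beta=0$. This holds because $2(v-i\Qm)=(r-2i)\Qm+\beta$ with the same $\beta\in(0,\Qm)$.
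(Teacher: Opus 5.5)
Your proposal is correct and follows the paper's argument, which is just the hull-rank identity \eqref{eq:hull-rank-G1} combined with $\Delta_r(v)=\rank S_r^{\mathrm{sym}}(v)$ and Corollary~\ref{cor:closed-forms}. You also supply the bookkeeping the paper leaves implicit, and all of it is accurate: locating $r$ via $2v=r\Qm+\beta$, deducing $m\ge r+1$ from $r\Qm<2v<m\Qm$, and handling the base cases $r=0,1$ separately, since Corollary~\ref{cor:closed-forms} assumes $\ell\ge 1$.
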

	
	\begin{proof}
		This follows from the hull-rank identity and Corollary~\ref{cor:closed-forms}.
	\end{proof}
	
	To pass from the lower intervals to the dual parameter intervals above the midpoint, we use the known duality results. If $v\not\equiv 0\pmod{\Qm}$, then the dual of $\PRM(q,m,v)$ is $\PRM(q,m,m\Qm-v)$. If $v\equiv 0\pmod{\Qm}$ and $m\Qm/2<v<m\Qm$, Song and Luo proved that
	\[
	\Hull\bigl(\PRM(q,m,v)\bigr)=\PRM(q,m,m\Qm-v).
	\]
	The boundary case $2v\equiv 0\pmod{\Qm}$ in the lower half is self-orthogonal by Kaplan and Kim \cite{KaplanKim}.
	
	\begin{corollary}\label{cor:complete}
		Let $v\ge 1$.
		\begin{enumerate}[label=\textup{(\roman*)},leftmargin=1.7em]
			\item When $0<v\le m\Qm/2$ and $2v\equiv 0\pmod{\Qm}$, $\PRM(q,m,v)$ is self-orthogonal, so
			\[
			\dim \Hull\bigl(\PRM(q,m,v)\bigr)=k_{q,m}(v).
			\]
			\item When $0<v<m\Qm/2$ and $2v\not\equiv 0\pmod{\Qm}$, choose $r$ by $v\in I_r$. Then $m\ge r+1$, and Corollary~\ref{cor:lower-half} applies:
			\[
			\dim \Hull\bigl(\PRM(q,m,v)\bigr)=k_{q,m}(v)-\Delta_r(v),
			\qquad v\in I_r.
			\]
			\item When $m\Qm/2<v<m\Qm$ and $v\not\equiv 0\pmod{\Qm}$, let $\mu:=m\Qm-v$. Then
			\[
			\dim \Hull\bigl(\PRM(q,m,v)\bigr)=\dim \Hull\bigl(\PRM(q,m,\mu)\bigr),
			\]
			so the dimension is obtained from items \textup{(i)} and \textup{(ii)} at the complementary parameter $\mu$.
			\item When $m\Qm/2<v<m\Qm$ and $v\equiv 0\pmod{\Qm}$,
			\[
			\Hull\bigl(\PRM(q,m,v)\bigr)=\PRM(q,m,m\Qm-v),
			\]
			so
			\[
			\dim \Hull\bigl(\PRM(q,m,v)\bigr)=k_{q,m}(m\Qm-v).
			\]
			\item When $v=m\Qm$, $\PRM(q,m,v)$ is LCD, so the hull dimension is $0$.
			\item When $v\ge m\Qm+1$, $\PRM(q,m,v)=\Fq^{|\PP^m(\Fq)|}$ by S\o rensen's result \cite[Remark~3]{Sorensen}, so the hull dimension is $0$.
		\end{enumerate}
		It follows that the hull dimension is determined for every degree $v\ge 0$ once one also includes the trivial case $v=0$.
	\end{corollary}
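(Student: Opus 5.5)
The plan is to treat Corollary~\ref{cor:complete} as a case-by-case assembly of earlier results. The single tool throughout is the hull-rank identity \eqref{eq:hull-rank-G1}, which turns each hull dimension into the statement $\dim\Hull=k_{q,m}(v)-\rank(G_1G_1^T)$. Before the cases, I would check that the six ranges, together with $v=0$, exhaust all $v\ge 0$. The point $v=m\Qm/2$ can only occur when $2v\equiv 0\pmod{\Qm}$, so it falls under (i). Every $v$ in $(0,m\Qm/2)$ with $2v\not\equiv0$ lies in a unique open interval $I_r$. The upper ranges are split by whether $\Qm\mid v$.

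\emph{Case (ii).} This is Corollary~\ref{cor:lower-half}. I would recheck its implicit claim $m\ge r+1$. If $v\in I_r$ and $v<m\Qm/2$, then $r\Qm/2<m\Qm/2$, so $r<m$.

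\emph{Case (i).} Here $2v=r\Qm$ with $r\le m$. I would quote Kaplan--Kim \cite{KaplanKim} for self-orthogonality. For a self-contained check, Lemma~\ref{lem:entry-formula} suffices. Every $c_t$ after the first positive index $s$ must be a positive multiple of $\Qm$, which then forces $c_s\equiv 0$ as well. Every surviving term of \eqref{eq:entry-formula} is then a product of at least $r+1$ factors $\sigma(\cdot)=-1$ combined with stratum counts, and these sum to $0$ in characteristic $p$. I expect this direct verification to be the main obstacle, because the cancellation mechanism of Remark~\ref{rem:cancellation} fails at $\beta=0$. So I would rely on the citation and merely note that the condition $v\le m\Qm/2$ is exactly their range.

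\emph{Case (iii).} Theorem~\ref{thm:sorensen-dual} gives $C^\perp=\PRM(q,m,\mu)$ for $C=\PRM(q,m,v)$. Since $\mu\not\equiv 0\pmod{\Qm}$, applying the theorem again gives $\PRM(q,m,\mu)^\perp=C$. Hence
\[
\Hull(C^\perp)=C^\perp\cap C=\Hull(C),
\]
and $0<\mu<m\Qm/2$ lands in (i) or (ii).

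\emph{Case (iv).} This is exactly \cite[Theorem~4.1]{SongLuo}.

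\emph{Case (v).} Here $\mu=0$, so Theorem~\ref{thm:sorensen-dual} gives $C^\perp=\Span\{\mathbf 1,\PRM(q,m,0)\}=\Span\{\mathbf 1\}$. It remains to show $\mathbf 1\notin C$. A direct route is to note that $C^\perp$ is one-dimensional and spanned by $\mathbf 1$. Then $\mathbf 1\in C$ would force $\mathbf 1\in C^\perp$, that is, $\langle\mathbf 1,\mathbf 1\rangle=n=0$. But $n=(q^{m+1}-1)/(q-1)\equiv 1\pmod p$, a contradiction. Hence $\Hull(C)=0$.

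\emph{Case (vi).} The code is all of $\Fq^n$ by \cite[Remark~3]{Sorensen}, so its dual, and hence its hull, is $0$.

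\emph{Case $v=0$.} The code is $\Span\{\mathbf 1\}$, and $\langle\mathbf 1,\mathbf 1\rangle=n\neq0$, so the hull is $0$.
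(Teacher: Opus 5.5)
Your proposal is correct, and for items (i)--(iv) and (vi) it is the same citation-by-citation assembly as the paper's proof. Those citations are Kaplan--Kim, Corollary~\ref{cor:lower-half}, Theorem~\ref{thm:sorensen-dual} together with $\Hull(C)=\Hull(C^\perp)$, Song--Luo, and S\o rensen's Remark~3. You also make explicit two points the paper only asserts: that the ranges exhaust all $v$, and that $r\Qm/2<v<m\Qm/2$ implies $m\ge r+1$.

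The one genuine difference is (v). The paper cites \cite[Remark~3.11]{KaplanKim}. You instead apply Theorem~\ref{thm:sorensen-dual} at $\mu=0$ to get $C^\perp=\PRM(q,m,0)=\Span\{\mathbf 1\}$, and then use $\langle\mathbf 1,\mathbf 1\rangle=n\equiv 1\pmod p$. This is self-contained from tools already in the paper. It also shows that (v) is just the dual of the trivial case, since $\Hull(C)=\Hull(C^\perp)=\Hull(\PRM(q,m,0))=0$.

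One correction to your aside in (i), which you rightly do not rely on: the direct check is short, but it does not work as you describe it.
\begin{itemize}
\item If some $c_t$ with $t>s$ lies outside $\Qm\mathbb Z_{>0}$, the entry vanishes by the first half of the proof of Lemma~\ref{lem:nonzero-criterion}, which never uses $\beta$.
\item Otherwise $c_s$ also lies in $\Qm\mathbb Z_{>0}$. Since $c_s,\dots,c_m$ sum to $2v\le m\Qm$, you get $s\ge 1$ and at most $r$ factors $\sigma=-1$, not at least $r+1$.
\item In \eqref{eq:entry-formula}, the terms with $j<s-1$ contain $\sigma(c_{s-1})=\sigma(0)=0$, and those with $j>s$ contain $0^{c_s}=0$.
\item The two surviving terms, $j=s-1$ and $j=s$, equal $(-1)^{m-s+1}$ and $(-1)^{m-s}$, and they cancel.
\end{itemize}
So the vanishing comes from a sign cancellation between adjacent strata, not from a characteristic-$p$ count, and every entry of $G_1G_1^T$ is $0$.
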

	
	\begin{proof}
		Item \textup{(i)} is the self-orthogonal part of Kaplan and Kim. Item \textup{(ii)} is Corollary~\ref{cor:lower-half}. Item \textup{(iii)} follows from the known dual identity when $v\not\equiv 0\pmod{\Qm}$ and the fact that $\Hull(C)=\Hull(C^\perp)$. Item \textup{(iv)} is the theorem of Song and Luo for the range above the midpoint with $v\equiv 0\pmod{\Qm}$. Item \textup{(v)} is the endpoint LCD case proved by Kaplan and Kim \cite[Remark~3.11]{KaplanKim}. Item \textup{(vi)} is the known statement that projective Reed-Muller codes equal the whole space in degrees $v\ge m(q-1)+1$; see \cite[Remark~3]{Sorensen}.
	\end{proof}
	
\begin{remark}\label{rem:boundary-cases}
	The case $v=0$ is direct. Here $\PRM(q,m,0)=\Span\{1\}$ and its hull is zero since $\langle 1,1\rangle=|\PP^m(\Fq)|\neq0$ in $\Fq$. If $m\Qm$ is even and $v=m\Qm/2$, then the degree lies on the self-orthogonal boundary and is already covered by item \textup{(i)}. At the upper end, the full-space case begins only at $v\ge m\Qm+1$; the endpoint $v=m\Qm$ is a separate LCD case already covered by item \textup{(v)}.
\end{remark}

\section{Conclusion}
	
	The main recursive step is the passage from $I_{r-2}$ to $I_r$ in the open lower-half intervals. Theorem~\ref{thm:interval-recursion} provides this step, and together with the known duality and boundary cases it yields the hull-dimension formula for projective Reed-Muller codes in all degrees.
	
	After the numerical formula, Corollary~\ref{cor:recursive-principal-block} gives recursive principal blocks and Corollary~\ref{cor:algorithmic} gives the computation of the defect together with a full-rank principal block. Therefore the recursion determines both the hull dimension and a corresponding principal block.

\end{document}